\def\spacingset#1{\renewcommand{\baselinestretch}%
{#1}\small\normalsize} \spacingset{1}
\renewcommand{\algorithmicrequire}{ \textbf{Input:}} 
\renewcommand{\algorithmicensure}{ \textbf{Output:}} 
\newcommand{\EE}{\mathbb E}
\newcommand{\PP}{\mathbb P}
\newcommand{\II}{\mathbb I}
\newtheorem{theorem}{Theorem}
\newtheorem{proposition}[theorem]{Proposition}%
\newtheorem{example}{Example}%
\newtheorem{remark}{Remark}%
\newtheorem{thm}{Theorem}
\newtheorem{property}{Property}
\newtheorem{lemma}{Lemma}
\newtheorem{corollary}{Corollary}
\title{A Conformalized Empirical Bayes Method for Multiple Testing with Side Information}
\author{Zinan Zhao$^{1}$ \ and \ Wenguang Sun$^{2,3}$}
\date{}
\begin{document}

\maketitle

\abstract{This article presents a Conformalized Locally Adaptive Weighting (CLAW) approach to multiple testing with side information. The proposed method employs innovative data-driven strategies to construct pairwise exchangeable scores, which are integrated into a generic algorithm that leverages a mirror process for controlling the false discovery rate (FDR). By combining principles from empirical Bayes with powerful techniques in conformal inference, CLAW provides a valid and efficient framework for incorporating structural information from both test data and auxiliary covariates. Unlike existing empirical Bayes FDR methods that primarily offer asymptotic validity, often under strong regularity conditions, CLAW controls the FDR in finite samples under weaker conditions. Extensive numerical studies using both simulated and real data demonstrate that CLAW exhibits superior performance compared to existing methods.}

\noindent%
{\it Keywords:} conformal inference,  covariate-assisted inference, false discovery rate,  locally adaptive algorithms, knockoff inference, pairwise exchangeability

\footnotetext[1]{Center for Data Science and School of Mathematical Sciences, Zhejiang University, China.}  

\footnotetext[2]{Center for Data Science and School of Management, Zhejiang University, China. } 

\footnotetext[3]{Author for correspondence: \url{wgsun@zju.edu.cn}. Address: 866 Yuhangtang Road, Hangzhou, Zhejiang Province, China}

\spacingset{1}

\section{Introduction}
\label{sec:intro}

\subsection{Multiple testing with side information}
\label{subsec:side_info}

In concurrent data-intensive fields, such as genomics, neuroimaging, and signal processing, the collection of vast volumes of data is a routine practice. These data are often accompanied by side information, adding valuable context to both analysis and interpretation processes. In large-scale testing problems, side information can be extracted from various sources. For example, researchers can derive side information from intrinsic data patterns, such as temporal and spatial ordering \citep{benjamini07spatial,sunwei11}, as well as grouping or hierarchical structures \citep{efron08,yekutieli08hierarchical, goeman08, sunwei15, goeman20}. Additionally, external sources, such as prior studies and domain-specific knowledge, can be utilized to extract valuable insights \citep{roeder09gw,du14,DFKO15,licaili23transfer}. Finally, within the same study, auxiliary sequences can be  constructed to uncover pertinent structural information \citep{huber10,huber16, CARS, HART}. 

Various approaches have been proposed to incorporate side information into false discovery rate (FDR; \citealp{bh95}) analysis, aiming to produce more meaningful scientific findings and facilitate informed decision-making. This extensively studied field has explored several important directions, including grouping-based methods  \citep{cs09, pfilter17,pfilter19}, weighting-based methods via either procedural weights  \citep{GRW06,roquain09,durand19} or decision weights \citep{bh97,basu2018,gang2023ranking}, as well as covariate-adaptive methods modifying existing p-value based algorithms \citep{du14,lei18adapt,SABHA19,Ignatiadis21IHW,LAWS22}, z-value based algorithms \citep{scott15,CARS,HART,ZAP}, and variable selection algorithms \citep{ren23knockoff}.  

Suppose we are interested in testing $m$ hypotheses $\{H_i: i\in[m]\equiv\{1, \ldots, m\}\}$, where each $H_i$ is associated with a primary data point $T_i$ and a corresponding covariate $S_i$; both $T_i$ and $S_i$ can be either univariate or multivariate. Let $\mathbf{T}=(T_{i})_{i=1}^m$ and $\mathbf{S}=(S_{i})_{i=1}^m$. Additionally, we assume that a set of null samples ${\mathbf T}^0=\{T^{0}_{j}: j\in\mathcal D_0\}$ has been obtained. Under the conventional multiple testing setup where the null distribution $F_0$ is known precisely, ${\mathbf T}^0$ can be directly sampled from $F_0$. Under the semi-supervised multiple testing setup \citep{Blaetal10, mary22semi}, the null samples can be collected from previous experiments or generated via specialized null sampling machines.

\subsection{A covariate-adaptive working mixture model}\label{subsec:cov-mix-model}

The presence of covariates \(\mathbf{S}\) complicates the task of finding a suitable model that accurately captures the intricacies of the data generation process. To address this challenge, we consider a covariate-adaptive model motivated by an empirical Bayes perspective, which allows us to integrate side information in a principled manner.

Let $\theta_{i}\in\{0, 1\}$ denote a binary variable, with $\theta_i=0/1$ indicating that $H_i$ is true/false. The model captures the probabilistic relationships between the test data points and their corresponding covariates $(T_j, S_j)_{j=1}^m$ through a hierarchical approach: 
\begin{equation}\label{model:eb-mix}
(\theta_{j}|S_{j}=s)  \sim  \mbox{Bernoulli}(\pi_s), \quad (T_j|S_j, \theta_j)  \sim  (1-\theta_j) F_0(\cdot) + \theta_j F_1(\cdot|S_j). 
\end{equation}
The specification of this model incorporates several important considerations. 

Firstly, the covariate-adaptive model \eqref{model:eb-mix} should be regarded as a working model and the utilization of the empirical Bayes framework serves purely as a means to inspire and motivate our methodology. As shown in subsequent sections, our inference remains valid even when the working model \eqref{model:eb-mix} deviates from the true data-generating model. While the underlying state $\theta_i$ is conceptualized as a binary variable, our theory specifically focuses on the frequentist FDR, treating $(\theta_i)_{i=1}^m$ as a non-random sequence. 

Secondly, the covariate \( S_{i} \in \mathcal{X} \) can take on either discrete or continuous values, and it can be either deterministic or stochastic. The joint distribution of \( \mathbf{S} \) is left unspecified. This provides flexibility to accommodate diverse types of covariates. 

Thirdly, the dependence of $\theta_j$ on $S_j$ is captured through the local sparsity level $\pi_s=\PP(\theta_j=1|S_j=s)$. In the scenario where $S_j$ represents, say, group memberships (or spatial locations), $\pi_s$ indicates varying sparsity levels across different groups (or local neighborhoods in a spatial region), thereby providing critical structural information that can be leveraged to construct more efficient FDR procedures \citep{SABHA19,  LAWS22}. 

Fourthly, the test data points $T_j$ are modeled using a mixture distribution that depends on both $\theta_j$ and $S_j$. The mixture distribution comprises two components: the null distribution $F_0(\cdot)$ and the non-null distribution $F_1(\cdot|S_j)$. A key assumption in  model \eqref{model:eb-mix} is that $F_0$ is invariant with respect to the covariate $S_j$, i.e.,  
\begin{equation}\label{cond:invariance}
(T_j|S_j, \theta_j=0)\sim F_0(t|S_j) \equiv F_0, \;\; j\in[m];\;\; \mbox{and}\;\; (T_j^0|S_j) \sim F_0(\cdot), \;\; j\in\mathcal{D}_0. 
\end{equation}
Similar assumptions have been employed in the literature on structured multiple testing \citep{lei18adapt,SABHA19,Ignatiadis21IHW,LAWS22}, where it is commonly assumed that the null p-values remain independent and super-uniform, given the auxiliary covariates and the remaining non-null p-values. Assumption \eqref{cond:invariance} will be revisited when discussing relevant exchangeability conditions in Section \ref{subsec:pw_exch}.

Finally, in contrast to the assumption of a fixed null distribution \(F_0\) across all \(i \in [m]\), model \eqref{model:eb-mix} allows the non-null distribution \(F_1(\cdot|S_j)\) to vary across different values of \(S_j\). This flexibility is crucial for accommodating the heterogeneity among the non-null units, which is commonly encountered in practice.

If we further assume that (a) $T_j$ is a continuous random variable and (b) $\theta_j$'s are independent with each other, then model \eqref{model:eb-mix} can be equivalently expressed as follows: 
\begin{equation}
(T_{j}|S_{j}=s) \stackrel{ind.}{\sim} f_{s}(t) = (1-\pi_{s})f_{0}(t) + \pi_{s}f_{1s}(t),\quad  i\in[m], \label{model:mixture} 
\end{equation}
where $f_{0}(t)$ and $f_{1s}(t)$ are the density functions of $F_0(t)$ and $F_1(t|s)$, respectively. 
This covariate-adaptive mixture density function \eqref{model:mixture}, which has been widely employed in empirical Bayes FDR procedures \citep{Egil08, scott15,Tanetal18, CARS}, extends the classical two-group mixture model \citep{efron01, sc07}:
$
T_{j} \stackrel{i.i.d.}{\sim} f(t) = (1-\pi)f_{0}(t) + \pi f_{1}(t),
$
to the more complex scenario with side information.

\subsection{Empirical Bayes methods: basics, challenges and our proposal}

Multiple testing involves solving a compound decision problem, where harnessing the overall structure of many parallel problems enhances the efficiency of simultaneous inference \citep{Rob51,sc07}. 
To investigate the optimal utilization of side information, we start by examining the ideal scenario where an oracle possesses pertinent knowledge of the working model \eqref{model:mixture}. Within this setting, the optimal FDR procedure takes the form of a thresholding rule based on a covariate-informed statistic known as the conditional local FDR \citep{cs09, CARS}: 
\begin{equation}\label{clfdr-stat}
\mbox{Clfdr}(T_i, S_i)=\PP(\theta_i=0|T_i,S_i)=\frac{(1-\pi_{S_i})f_{0}(T_i)}{f_{S_i}(T_i)}. 
\end{equation}
In practical scenarios where estimating the Clfdr is necessary, various approaches have been proposed. These include Bayesian computational methods utilizing parametric priors \citep{scott15, Tanetal18}, as well as nonparametric empirical Bayes (NEB) methods employing $f$-modeling \citep{CARS, HART} or $g$-modeling techniques \citep{GuKoe23, gang2023ranking}. While the parametric Bayesian methods may encounter issues if the priors are mis-specified, the NEB methods offer greater flexibility and robustness, exhibiting desirable frequentist properties. However, the theoretical analysis of these methods is inherently complex. The validity theory often relies on asymptotic arguments and assumes conditions that may not hold or be difficult to validate in real-word scenarios.

In this article, we address the challenges  by leveraging recent advancements in key areas such as knockoff filters \citep{barber15knockoff, ren23knockoff}, conformal inference \citep{vovk05, lei14prediction,marandon22mlfdr}, and e-values \citep{wang22ev, ren2023derandomized}. We propose the Conformalized Locally Adaptive Weighting (CLAW) approach, which offers a compelling demonstration of how empirical Bayes ideas can be effectively implemented within a principled frequentist framework. Unlike Bayesian methods, CLAW eliminates the need for correctly specified priors or strong regularity conditions, and provides valid and efficient inference in finite samples. 

The development of CLAW consists of two crucial steps. In the first step (Section \ref{sec:preliminary}), we establish fundamental principles and lay the theoretical foundations for conformalized multiple testing with side information. In the second step (Section \ref{sec:claw}), we develop innovative strategies to construct conformity scores that integrate  side information into inference effectively. The new method achieves improved statistical power and rigorous theoretical guarantees simultaneously under mild conditions of exchangeability. In Section \ref{sec:extension}, we demonstrate that CLAW can be further extended to handle semi-supervised setups and integrate side information from multiple sources. Our numerical results show that CLAW substantially improves the performance of existing methods across various settings.

\subsection{Connections and distinctions with related work}
\label{subsec:conformal_inference}

CLAW is closely related to three significant lines of research (see Section \ref{app:discuss} of the Supplement for a detailed discussion on the connections and distinctions between CLAW and related methods). The first direction focuses on incorporating side information through weighting. For example, IHW \citep{Ignatiadis21IHW} divides hypotheses into different groups based on covariate values and generates cross-fitting weights for each group. SABHA \citep{SABHA19} and LAWS \citep{LAWS22} develop sparsity-adaptive weights to adjust the corresponding p-values. However, our numerical studies reveal that these weighting strategies are suboptimal due to information loss in the grouping step or the omission of other important structural information in the test data. In contrast, CLAW develops covariate-assisted weights to emulate the optimal decision rule under the empirical-Bayes setup, demonstrating superior performance across various settings.

The second approach involves learning covariate-modulated decision boundaries by gradually unmasking the data (AdaPT, \citealp{lei18adapt}; adaptive knockoffs, \citealp{ren23knockoff}). All three methods (CLAW, AdaPT, and adaptive knockoffs) operate as generalizations of the Selective SeqStep+ algorithm \citep{barber15knockoff}. Both AdaPT and adaptive knockoffs assume the prior availability of mirror-conservative p-values or anti-symmetric statistics, with covariates utilized separately at a later stage to determine the adaptive masking rules. In contrast, CLAW directly constructs powerful conformity scores by aggregating the side information through the working model \eqref{model:mixture}, offering a direct, intuitive, and principled method for covariate-assisted inference.

The third approach, which falls within the framework of conformal inference, exemplified by the BONuS \citep{yang21bonus} and AdaDetect \citep{marandon22mlfdr}, aims to utilize test data to construct more powerful conformity scores. Our proposed method combines NEB modeling and conformal inference techniques, aligning with the ideas in these recent advancements. However, CLAW departs from the strict requirement of joint exchangeability imposed by BONuS and AdaDetect by constructing covariate-adaptive scores that fulfill a weaker pairwise exchangeability condition. This new framework improves the flexibility and efficiency in both the modeling and inference stages. 

Finally, our work is related to the PLIS procedure in \citet{zhao2023plis}, which aims to leverage the dependencies in structured probabilistic models. However, PLIS requires explicitly specified models, such as hidden Markov models, to capture these dependencies, and it cannot  handle the generic setup where side information is encoded as a covariate sequence. CLAW constructs novel bivariate score functions to incorporate side information, which represents a substantial departure from the strategy employed in PLIS. 

\subsection{Outline}

The article is organized as follows. Section \ref{sec:preliminary} outlines the basic framework, followed by Section \ref{sec:claw}, which details the CLAW method and its theoretical properties. Section \ref{sec:extension} presents extensions and connections to existing works. We investigate the numerical performance of CLAW using both simulated data (Section \ref{sec:simu}) and real data (Section \ref{sec:application}). Section \ref{sec:discuss} concludes with a discussion on future directions. Further elaborations, technical proofs, and additional numerical results are provided in the Supplement. The code for replicating all our experiments is available for download at \url{https://github.com/zzndotzhangzhinan/clawpaper.git}.

\section{Preliminaries and Basic Framework}
\label{sec:preliminary}

Section \ref{subsec:propotype} introduces the problem formulation and presents a prototype algorithm. Section \ref{subsec:pw_exch} explores the fundamental principles that govern the construction of valid and efficient test scores. In Section \ref{subsec:fdr}, we establish finite-sample FDR theory for the prototype algorithm presented in Section \ref{subsec:propotype}, building upon the principles outlined in Section \ref{subsec:pw_exch}. The theory in Section \ref{subsec:fdr} draws upon the concept of generalized e-values, serving as the foundation for a generic information-pooling framework detailed in Section \ref{subsec:deran-claw}. 

\subsection{Problem formulation and a prototype algorithm}\label{subsec:propotype}

A multiple testing procedure can be represented by a binary decision rule $\pmb\delta=(\delta_i: 1\leq i\leq m)\in \{0, 1\}^m$, where $\delta_i=1$ indicates that we reject $H_i$ and $\delta_i=0$ otherwise. Let $\mathcal{R}=\{i\in[m]: \delta_i=1\}$ denote the index set of rejected hypotheses, and $\mathcal H_0=\{i\in[m]: H_i \mbox{ is true}\}$ the index set of null hypotheses. Then the \emph{false discovery proportion} (FDP) and true discovery proportion (TDP) are respectively defined as 
\begin{equation}\label{FDP-TDP}
\mathrm{FDP}(\mathcal R)= \frac{|\mathcal{R}\cap\mathcal{H}_{0}|}{|\mathcal{R}|\vee1}\; \mbox{ and } \;  \mathrm{TDP}(\mathcal R)= \frac{|\mathcal{R}\setminus\mathcal{H}_{0}|}{|\mathcal{H}_{0}^{c}|\vee1}, 
\end{equation}
where $|\mathcal{A}|$ represents the cardinality of a set $\mathcal{A}$. The FDR is the expected value of the FDP: $\mathrm{FDR}=\EE\{\mathrm{FDP}(\mathcal R)\}$, where the expectation is taken over the joint distribution of the null samples ${\mathbf T}^0$, test data $\mathbf T$ and auxiliary data $\mathbf S$. We employ the \emph{average power} (AP), defined as $\mathrm{AP}=\EE\left\{\mathrm{TDP}(\mathcal R)\right\}$, to compare the efficiency of different multiple testing procedures. 


Our prototype algorithm operates with the following pairs \(\{(u_i, \tilde{u}_i) : i \in [m]\}\), which represent the test and calibration scores, respectively. The construction of \(\{(u_i, \tilde{u}_i) : i \in [m]\}\) involves selecting \(m\) null samples from \(\mathcal{D}_0\) to form a calibration set \(\mathcal{D}^{cal}\), leading to the basic requirement that \(|\mathbf{T}^0| \geq m\). Let \(\tilde{\mathbf{T}} = (T_i^0 : i \in \mathcal{D}^{cal}) \coloneqq (\tilde{T}_i)_{i=1}^m\). If the data points in \(\mathbf{T}^0 = \{T_i : i \in \mathcal{D}_0\}\) are exchangeable conditoinal on \(\mathbf{S}\), then the above operation is equivalent to randomly selecting \(\tilde{T}_i\) from \(\mathbf{T}^0\) (without replacement) to form the triples \((T_i, \tilde{T}_i, S_i)_{i=1}^m\).

\begin{remark}\rm{
We briefly discuss several issues regarding the utilization of the null samples \(\mathbf{T}^0\). First, if the points in \(\mathbf{T}^0\) are non-exchangeable conditional on \(\mathbf{S}\), then randomly sampling \(\tilde{T}_i\) from \(\mathbf{T}^0\) may be inappropriate; careful attention is required to ensure the fulfillment of the exchangeability condition outlined in Section \ref{subsec:pw_exch}; see Example \ref{example:multi-class} in Section \ref{app:pwexch} of the Supplement. Second, if \(|\mathbf{T}^0| \gg m\), then the additional null samples \((T_i^0: i \in \mathcal{D}_0 \setminus \mathcal{D}^{cal})\), denoted by $\mathbf{T}^{tr0}$, can be incorporated into the training dataset \(\mathbf{T}^{tr}\) to build a predictive model within a semi-supervised framework; further discussion can be found in Section \ref{app:subsub-pu-group} and Section \ref{subsec:ssmt} of the Supplement. Moreover, the extra null samples may be utilized to derandomize our algorithm (Section \ref{subsec:deran-claw}), as demonstrated in \cite{ren2023derandomized} and \cite{bashari2023derandomized}.
}
\end{remark}

Both $u_i$ and $\tilde u_i$ are computed via a bivariate function, denoted as $g(\cdot, \cdot)$, and can be represented in the following form: 
\begin{equation}\label{biv-scores}
\left\{u_i\equiv g(T_i, S_i), \tilde u_i\equiv g(\tilde T_i, S_i): i\in[m]\right\}. 
\end{equation}
The bivariate funtion $g(\cdot, \cdot)$  is carefully designed to incorporate information from relevant datasets $\mathbf T\cup \tilde{\mathbf T}\cup \mathbf S$, guaranteeing that $u_i$ and $\tilde{u}_i$ fulfill the principle of \emph{pairwise exchangeability}, a fundamental notion thoroughly developed and explained in Section \ref{subsec:pw_exch}. As a warm-up, the primary focus of Section \ref{sec:preliminary} is to outline the basic structure of our algorithm and present a generic theory that facilitates the understanding of the core principles in later methodological developments. The intricacies in constructing $g(t, s)$ are deferred to Section \ref{sec:claw}. 

The complexity associated with the scores \eqref{biv-scores} significantly exceeds that of conventional significance indices, such as the p-value, making the derivation of the null distribution for these scores a challenging and often infeasible task. Consequently, we adopt the perspective of conformal inference, where $u_i$ are interpreted as conformity scores, assessing how well the scores in the test set {conform to} those computed from the null samples in $\mathcal D^{cal}$. This framework offers a significant advantage by eliminating the need for a known null distribution. Instead, the decision process solely relies on the relative ranks of the scores. By convention, a lower score corresponds to a higher rank, providing strong evidence against the null hypothesis. 

Denote $\mathcal{U}=\{u_i\equiv g(T_i, S_i): i\in[m]\}$ and $\Tilde{\mathcal{U}}=\{\Tilde{u}_{i}\equiv g(\Tilde T_i, S_i): i\in[m]\}$ the sets of conformity scores computed for the test and calibration sets, respectively. We focus on a class of decision rules that reject $H_{i}$ if (a) $u_{i}$ is smaller than its calibrated counterpart $\tilde u_i$ and (b) $u_{i}$ falls below a data-driven threshold, which will be determined using the following $Q(t)$ process: 
\begin{equation} \label{confq}
            \tau = \max\left\{ t\in\mathcal{U}\cup\Tilde{\mathcal{U}}: Q(t)\equiv\frac{1+\sum_{i=1}^{m}\II\{u(\tilde T_i, S_i)\leq t\wedge u(T_i, S_i)\}}{\left[\sum_{i=1}^{m}\II\{u(T_i, S_i)\leq t\wedge u(\tilde T_i, S_i)\}\right]\vee 1} \leq \alpha \right\}. 
        \end{equation}  
This above formulation draws inspiration from techniques employed in knockoff filters for variable selection problems \citep{barber15knockoff,weinstein17counting} and the empirical process perspective for the conformal BH algorithm \citep{mary22semi, marandon22mlfdr}. The corresponding decisions are given by $\pmb\delta=(\delta_i: 1\leq i\leq m)$, where $\delta_i=\II\{u_{i}\leq\tau\wedge\Tilde{u}_{i}\}$. According to mathematical conventions, we set $\tau=-\infty$ if the set $\{t\in\mathcal{U}\cup\Tilde{\mathcal{U}}:Q(t)\leq\alpha\}$ is empty, and thus no rejection is made. The aforementioned steps are summarized in Algorithm \ref{algo:claw} below.

\begin{algorithm}
\renewcommand{\algorithmicrequire}{\textbf{Input : }}
\renewcommand{\algorithmicensure}{\textbf{Output : }}
\caption{A prototype algorithm} \label{algo:claw}
\begin{algorithmic}[1]
\Require Pre-specified FDR level $\alpha$, the null samples ${\mathbf T}^0=\{T_i: i\in\mathcal D_0\}$, test data with the corresponding covariate sequence $(T_i, S_i)_{j=1}^m$. 
\Ensure The set of rejected indices $\mathcal{R}\subset[m]$.

\State Learn conformity scores $\mathcal{U}=\{u_{i}:i\in[m]\}$ and corresponding calibration scores $\Tilde{\mathcal{U}}=\{\Tilde{u}_{i}:i\in[m]\}$ such that $u_i$ and $\Tilde u_i$ are pairwise exchangeable.

\State Determine the threshold $\tau$ according to the $Q(t)$ process defined in \eqref{confq} and reject hypotheses in the set $\mathcal{R}=\{i\in[m]: u_{i}\leq\tau\wedge\Tilde{u}_{i}\}$. 

\State \textbf{Return} the set of rejected indices $\mathcal{R}$.
\end{algorithmic}
\end{algorithm}

We conclude the subsection by explaining the rationale behind Algorithm \ref{algo:claw}. In Equation \eqref{confq}, our objective is to determine the maximum threshold that ensures the estimated FDP remains below the nominal level $\alpha$. To accomplish this, we employ $Q(t)$ as a conservative estimator of the FDP, where the number of false rejections $\sum_{i\in\mathcal H_0} \II\{{u}_{i}\leq t\wedge \Tilde u_{i}\}$ is ``overestimated'' by $1+\sum_{i=1}^{m} \II\{\Tilde{u}_{i}\leq t\wedge u_{i}\}$. The efficacy of using $Q(t)$ to approximate the true FDP relies on how well $\sum_{i\in\mathcal{H}_{0}} \II\{\Tilde{u}_{i}\leq t\wedge u_{i}\}$ can mirror $\sum_{i\in\mathcal{H}_{0}} \II\{u_{i}\leq t\wedge\Tilde{u}_{i}\}$. Therefore, the validity of the algorithm critically depends on the fundamental assumption of \emph{pairwise exchangeability} between $u(T_i, S_i)$ and $u(\tilde{T}_i, S_i)$ for $i\in\mathcal{H}_{0}$. This key concept will be thoroughly elucidated next.


\subsection{Exchangeability notions in presence of side information}\label{subsec:pw_exch}


We first review commonly used notions of exchangeability for both data samples and conformity scores, and then extend these definitions to accommodate side information. Finally, we rigorously define the pairwise exchangeability between conformity scores.

The random elements in $\mathbf{Z}=(Z_{i}: i\in[m])$ are (jointly) exchangeable if their joint distribution is permutation-invariant, i.e.
$
(Z_1, \cdots, Z_m) \overset{d}{=} (Z_{\Pi_1}, \cdots, Z_{\Pi_m}),
$
where $(\Pi_1, \cdots, \Pi_m)$ represents any permutation of the indices $\{1, \cdots, m\}$.
A commonly employed assumption in conformal inference is the joint exchangeability between null samples:
\begin{equation}
    \left(T_i^{0}, i\in \mathcal D_0; T_{j},j\in\mathcal{H}_{0}\right) \text{ are exchangeable conditional on } (T_{j}:j\notin\mathcal{H}_{0}). \label{jointexch} 
\end{equation}
If the exchangeability condition \eqref{jointexch} holds, \cite{bates23} proposed a split-conformal strategy to construct  scores that fulfill the following exchangeability condition:
\begin{equation}
    \left(\Tilde{u}_i, i\in \mathcal{D}^{\mathrm{cal}}; u_{j}, j\in\mathcal{H}_{0}\right) \text{ are exchangeable conditional on } (u_{j}:j\notin\mathcal{H}_{0}).\label{jointexch-scores} 
\end{equation}
We emphasize that preserving the exchangeability property from \eqref{jointexch} to \eqref{jointexch-scores} poses substantial challenges when there is a need to incorporate extra data, such as the test data $\mathbf T$ and auxiliary covariates $\mathbf S$, alongside the training data $\mathbf{T}^{tr}$,  to construct score functions. 
Notably, \cite{yang21bonus} and \cite{marandon22mlfdr} designed an innovative class of score functions with specific permutation-invariance properties, allowing the integration of test data $\mathbf T$ into the score construction while ensuring the exchangeability condition \eqref{jointexch-scores}. This advancement improves the overall power of the analysis, while guaranteeing that the resulting conformal p-values (cf. Section \ref{app:relation}) remain super-uniform and still possess the PRDS property (positive regression dependency on subsets; cf. \citealp{by01}). However, the incorporation of covariates \(\mathbf{S}\) into the score construction process has yet to be explored.

Next, we extend the exchangeability assumption \eqref{jointexch} to encompass scenarios where side information is available. This generalized exchangeability assumption is formally stated as follows: 
\begin{equation}
    \left(T_i^{0}, i\in \mathcal D_0; T_{j},j\in\mathcal{H}_{0}\right) \text{ are exchangeable conditional on } (T_{j}:j\notin\mathcal{H}_{0};\mathbf{S}). \label{jointexch-covariate} 
\end{equation}
Assumption \eqref{jointexch-covariate} asserts that the joint structure of null samples remains unchanged, conditional on $\mathbf S$ and the remaining non-null samples. Initially, this assumption may appear to be strong. However, in light of the empirical Bayes model \eqref{model:eb-mix}-\eqref{model:mixture}, it becomes evident that \eqref{jointexch-covariate} is well-aligned with commonly utilized conditions in the multiple testing literature; see Section \ref{app:pwexch} of the Supplement for further examples and justifications on this condition.


\begin{remark}\rm{
While assumption \eqref{jointexch-covariate} primarily concerns the joint exchangeability of all null samples, which requires null data to be equally correlated, our methodology and theory remain applicable even when this assumption is relaxed to pairwise exchangeability of the null samples. In order to maintain conciseness, we have abstained from introducing additional new exchangeability concepts in the main text and provided extended discussions on generalized notions and theories in Section \ref{app:pwexch} of the Supplement.}
\end{remark}

We now introduce the \emph{pairwise exchangeability of conformity scores}, which serves as a foundational principle of Algorithm \ref{algo:claw}. This property can be rigorously stated as 
\begin{equation}
    \left( u_{i},\tilde{u}_{i}, \mathbf{U}_{-i},\tilde{\mathbf{U}}_{-i} \right) \overset{d}{=} \left( \tilde{u}_{i},u_{i}, \mathbf{U}_{-i},\tilde{\mathbf{U}}_{-i} \right) ,\quad\forall i\in\mathcal{H}_{0},\label{pwexch}  
\end{equation}
where $\mathbf{U}_{-i}=(u_{1},\cdots,u_{i-1},u_{i+1},\cdots,u_{m})$ and $\tilde{\mathbf{U}}_{-i}=(\tilde{u}_{1},\cdots,\tilde{u}_{i-1},\tilde{u}_{i+1},\cdots,\tilde{u}_{m})$. In contrast to \eqref{jointexch-covariate}, the auxiliary covariates $\mathbf{S}$ have been integrated into the conformity scores in \eqref{pwexch}, and therefore, the covariates are not explicitly represented in the conditions. The pairwise exchangeability condition \eqref{pwexch}, initially introduced in \cite{barber15knockoff}, has played a critical role in the development of knockoff filters for variable selection in regression models. Our research expands the scope of this notion beyond its original context by illustrating its applicability and effectiveness for conformalized multiple testing with side information. We highlight that our method fundamentally differs from the knockoff filters with side information \citep{ren23knockoff}. This point has been briefly mentioned in Section \ref{subsec:conformal_inference}, with further details provided in Section \ref{app:kn} of the Supplement.
 
The joint exchangeability \eqref{jointexch-scores} can be regarded as a more stringent form of pairwise exchangeability \eqref{pwexch}. It is not feasible to construct jointly exchangeable scores that satisfy \eqref{jointexch-scores} while incorporating \(\mathbf{S}\), as these covariates inherently introduce heterogeneity. In contrast, the construction of pairwise exchangeable scores that satisfy \eqref{pwexch} offers a practical approach for integrating side information. Thus, leveraging pairwise exchangeability provides greater flexibility and utility, resulting in covariate-informed conformity scores that exhibit both improved power and enhanced interpretability.
The construction of conformity scores that satisfy \eqref{pwexch} using data that obeys \eqref{jointexch-covariate} represents a pivotal yet highly challenging task. Addressing this challenge involves first formulating foundational principles (Sections \ref{subsec:fdr} and \ref{subsec:build_pw_exch_score}) and subsequently developing practical data-driven algorithms (Sections \ref{subsec:clfdr}-\ref{subsec:claw_clfdr}).

\subsection{Finite-sample theory on FDR control}
\label{subsec:fdr}

This section establishes the FDR theory of the prototype algorithm by linking Algorithm \ref{algo:claw} with the e-BH procedure \citep{wang22ev}. While alternative techniques, such as martingale or leave-one-out arguments, could also be used to establish finite-sample FDR theory, we employ the e-BH perspective for its flexibility in information aggregation. In Section \ref{subsec:deran-claw}, we examine this aspect in depth, highlighting its significant implications for integrative inference across various data sources, models, and methods.

Let \( E_j \) denote a non-negative random variable associated with \( H_j \), \( j \in [m] \). We define \( \{E_{j}, j \in [m]\} \) as a set of generalized e-variables if 
\begin{equation}\label{cond:generalev}
\mathbb{E} \left\{ \textstyle\sum_{j \in \mathcal H_0} E_{j} \right\}\leq m. 
\end{equation} 
Denote $e_j$ the observed value of \( E_j \). \citet{wang22ev} proposed the e-BH procedure for FDR control based on the classical Benjamini-Hochberg (BH) procedure \citep{bh95}. The rejection set of e-BH is given by 
\(
\mathcal{R}_{ebh} = \{ j : e_{j} \geq e_{(\hat{k})} \},
\) 
where \( e_{(1)} \geq e_{(2)} \geq \cdots \geq e_{(m)} \) are the order statistics, and the threshold 
\(
\hat{k} = \max\{i : \frac{i e_{(i)}}{m} \geq \frac{1}{\alpha}\}.
\) 
\citet{wang22ev} show that e-BH controls the FDR if \eqref{cond:generalev} holds.

Suppose the conformity scores in $\mathcal{U}=\{u_{i}:i\in[m]\}$ and $\Tilde{\mathcal{U}}=\{\Tilde{u}_{i}:i\in[m]\}$ are pairwise exchangeable. Define 
\begin{equation}\label{newev}
    e_{j} = \frac{m \mathbb{I}\{ u_{j}\leq \tau\wedge\Tilde{u}_{j}\} }{1+\sum_{i=1}^{m} \mathbb{I}\{ \Tilde{u}_{i}\leq \tau\wedge u_{i}\}}, 
\end{equation}
where $\tau$ represents the threshold specified in Algorithm \ref{algo:claw}. The next proposition reveals that Algorithm \ref{algo:claw} is equivalent to the e-BH algorithm employing generalized e-values defined in \eqref{newev}. 

\begin{proposition}\label{thm:ev}
    The variables $\{e_{j}: j\in[m]\}$ defined in (\ref{newev}) constitute a set of generalized e-values if the pairwise exchangeability \eqref{pwexch} holds and there is no tie between $u_{i}$ and $\Tilde{u}_{i}$ almost surely. When implementing the e-BH procedure with these e-values, the resulting rejection set $\mathcal{R}_{ebh}=\mathcal{R}$, where $\mathcal{R}=\{i:u_{i}\leq \tau\wedge\Tilde{u}_{i}\}$ is the index set of rejections output by Algorithm \ref{algo:claw}. 
\end{proposition}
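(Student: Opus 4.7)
The plan is to prove the two assertions separately: first the algebraic equivalence of Algorithm \ref{algo:claw} with e-BH applied to the $\{e_j\}$ (which requires no exchangeability), and then the generalized e-value property (which uses \eqref{pwexch}). Writing $D = 1 + \sum_{i=1}^{m} \mathbb{I}\{\tilde u_i \leq \tau \wedge u_i\}$ for the common denominator and $R = |\mathcal{R}|$, one sees that each $e_j$ takes only one of two values in $\{0, m/D\}$, with $e_j = m/D$ precisely when $j \in \mathcal{R}$. Hence the order statistics satisfy $e_{(1)} = \cdots = e_{(R)} = m/D$ and $e_{(R+1)} = \cdots = e_{(m)} = 0$. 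The e-BH condition $i e_{(i)}/m \geq 1/\alpha$ reduces to $i/D \geq 1/\alpha$ for $i \leq R$ and fails for $i > R$; since $\tau$ satisfies $Q(\tau) = D/R \leq \alpha$ by \eqref{confq}, we obtain $R/D \geq 1/\alpha$, so $\hat k = R$ and $\mathcal{R}_{ebh} = \{j : e_j \geq e_{(R)}\} = \mathcal{R}$ (the case $R = 0$ being immediate).

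For the e-value property I would introduce $V^+(t) = \sum_{j \in \mathcal{H}_0} \mathbb{I}\{u_j \leq t \wedge \tilde u_j\}$ and $V^-(t) = \sum_{j \in \mathcal{H}_0} \mathbb{I}\{\tilde u_j \leq t \wedge u_j\}$. Because $V^-(\tau)$ sums only over nulls while $D - 1$ sums over all indices,
\begin{equation*}
\sum_{j \in \mathcal{H}_0} E_j \;=\; \frac{m\, V^+(\tau)}{D} \;\leq\; \frac{m\, V^+(\tau)}{1 + V^-(\tau)},
\end{equation*}
so it suffices to show $\mathbb{E}\{V^+(\tau)/(1+V^-(\tau))\} \leq 1$. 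Setting $M_i = u_i \wedge \tilde u_i$ and $\sigma_i = \mathrm{sign}(\tilde u_i - u_i)$ (well-defined almost surely by the no-tie hypothesis), the null counts become $V^+(t) = \sum_{j\in\mathcal{H}_0} \mathbb{I}\{M_j \leq t,\,\sigma_j = +1\}$ and $V^-(t) = \sum_{j\in\mathcal{H}_0} \mathbb{I}\{M_j \leq t,\,\sigma_j = -1\}$. Single-coordinate swaps in \eqref{pwexch} act on disjoint variables and therefore commute, so iterating them gives invariance of the joint law under swapping $(u_i, \tilde u_i)$ across an arbitrary subset $S \subseteq \mathcal{H}_0$; equivalently, conditional on $(M_i)_{i \in [m]}$ and the non-null signs, the null signs $(\sigma_j)_{j\in\mathcal{H}_0}$ are i.i.d.\ Rademacher.

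The remaining step is the Barber--Cand\`es martingale argument. The statistic $Q(t)$, and hence the stopping threshold $\tau$, depends on the data only through $(M_i)_i$, the non-null signs, and $(\mathbb{I}\{M_i \leq t\}\sigma_i : i \in \mathcal{H}_0)$, so $\tau$ is adapted to the backward filtration $\mathcal{F}_t$ generated by these objects as $t$ decreases. Under the Rademacher structure just established, $V^+(t)/(1 + V^-(t))$ is a backward supermartingale whose limiting value is at most $1$, and optional stopping yields $\mathbb{E}\{V^+(\tau)/(1+V^-(\tau))\} \leq 1$, hence the generalized e-value inequality. I expect the main technical obstacle to be articulating the backward filtration precisely and verifying that $\tau$ is adapted to it in the present conformalized notation; once this is in place, the supermartingale step follows the standard knockoff template, and combining with the algebraic reduction completes the proof.
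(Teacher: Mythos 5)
Your proposal is correct and follows essentially the same route as the paper: the algebraic reduction showing e-BH with the $\{e_j\}$ recovers exactly $\mathcal{R}$, the inequality $\sum_{j\in\mathcal{H}_0}e_j \leq m\,V^+(\tau)/(1+V^-(\tau))$ (dropping the non-null contributions from the denominator), the Barber–Candès sign-flipping lemma giving i.i.d.\ Rademacher signs for nulls conditional on $(M_i)_i$, and a backward supermartingale with optional stopping at $\tau$. One small wording slip: the terminal value $V^+(\infty)/(1+V^-(\infty))$ is \emph{not} pointwise at most one (e.g.\ it can equal $|\mathcal H_0|$); rather its \emph{expectation} is at most one, which the paper verifies by a short symmetry calculation (Lemma \ref{lem:ex}) and which is the input needed after optional stopping.
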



The following theorem can be easily established as a corollary of Proposition \ref{thm:ev} and the theory for e-BH presented in \cite{wang22ev}. 

\begin{thm}\label{thm:fdr}
    If the pairwise exchangeability condition \eqref{pwexch} holds and there is no tie between $u_{i}$ and $\Tilde{u}_{i}$ almost surely, then Algorithm \ref{algo:claw} controls the FDR at level  $\alpha$.
\end{thm}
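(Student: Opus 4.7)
The plan is to derive the FDR bound as an immediate corollary of Proposition \ref{thm:ev} combined with the e-BH FDR guarantee of \citet{wang22ev}; no independent combinatorial argument is needed. First, I would invoke Proposition \ref{thm:ev} to extract two simultaneous conclusions under the stated hypotheses (pairwise exchangeability \eqref{pwexch} and no ties between $u_i$ and $\tilde u_i$ almost surely): (a) the variables $\{e_j:j\in[m]\}$ defined in \eqref{newev} satisfy the generalized e-value condition $\EE\{\sum_{j\in\mathcal H_0} E_j\}\leq m$, and (b) the rejection set $\mathcal R=\{i:u_i\leq \tau\wedge\tilde u_i\}$ output by Algorithm \ref{algo:claw} is identical to the rejection set $\mathcal R_{ebh}$ obtained by applying the e-BH procedure to these $\{e_j\}$.

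Next, I would appeal to the main FDR theorem of \citet{wang22ev}, which guarantees that for any collection of generalized e-values the associated e-BH procedure controls the FDR at the target level $\alpha$, with no assumption of independence required. Since $\mathcal R=\mathcal R_{ebh}$ almost surely by Proposition \ref{thm:ev}, the false discovery proportion of Algorithm \ref{algo:claw} coincides with that of e-BH on the same realization, so that
\[
\mathrm{FDR}=\EE\{\mathrm{FDP}(\mathcal R)\}=\EE\{\mathrm{FDP}(\mathcal R_{ebh})\}\leq \alpha,
\]
which is precisely the claim.

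The substantive work is already absorbed into Proposition \ref{thm:ev}, where one verifies the generalized e-value inequality by exploiting the pairwise swap $(u_i,\tilde u_i,\mathbf U_{-i},\tilde{\mathbf U}_{-i})\stackrel{d}{=}(\tilde u_i,u_i,\mathbf U_{-i},\tilde{\mathbf U}_{-i})$ for $i\in\mathcal H_0$ to bound the numerator of $\sum_{j\in\mathcal H_0}\EE[E_j]$ by its mirror counterpart (plus one), and then cancelling against the denominator. Granted that proposition, the present theorem is a one-line deduction. The only bookkeeping point worth flagging in the final write-up is that the $\tau$ appearing inside \eqref{newev} is the same data-driven threshold produced by the $Q(t)$ process in \eqref{confq}, so that $\II\{u_j\leq \tau\wedge\tilde u_j\}$ faithfully encodes membership in $\mathcal R$ and the correspondence between Algorithm \ref{algo:claw} and e-BH is exact rather than merely nominal.
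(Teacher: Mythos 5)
Your proposal is correct and takes exactly the same route as the paper: deduce Theorem \ref{thm:fdr} as an immediate corollary of Proposition \ref{thm:ev} (which gives both the generalized e-value bound and the identity $\mathcal{R}=\mathcal{R}_{ebh}$) together with the e-BH FDR guarantee of \citet{wang22ev}. The paper also sketches an alternative direct argument via the FDP decomposition and Lemma \ref{lemma1}, but your two-step corollary argument is precisely its primary proof.
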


Our theory is provably valid even in scenarios where the empirical Bayes working model \eqref{model:eb-mix} diverges from the true data generating model. This notable robustness is attained by relying exclusively on the mild exchangeability condition \eqref{pwexch}, which significantly alleviates the strict assumptions prevalent in current theories.


\section{The CLAW Procedure and Its Theoretical Properties}
\label{sec:claw}


In Section \ref{subsec:build_pw_exch_score}, we highlight key issues and subsequently establish foundational principles for constructing conformity scores that satisfy \eqref{pwexch}. Detailed illustrations of the score construction process under a conformalized NEB framework are provided in Sections \ref{subsec:clfdr}-\ref{subsec:claw_clfdr}. This section assumes the null distribution $F_0$ is known; the scenario where $F_0$ is unknown but the null samples $\mathbf{T}^0$ are available (i.e., the semi-supervised setup) is addressed in Section \ref{sec:extension}.

\subsection{Constructing conformity scores: basic strategies and roadmap}
\label{subsec:build_pw_exch_score}


Consider a class of conformity scores described in the form of \eqref{biv-scores}:
$u_i = g(T_i, S_i)$ and $\Tilde{u}_i = g(\Tilde{T}_i, S_i)$, where both $u_i$ and $\Tilde{u}_i$ employ the same $g(\cdot, \cdot)$ with the same $S_i$. In an ideal scenario where $g(\cdot, S)$ is non-random given the auxiliary covariate $S$, the pairwise exchangeability condition \eqref{pwexch} naturally follows from the prescribed condition \eqref{jointexch-covariate}. For instance, if an oracle possesses relevant knowledge of the working model, then $g(\cdot, S_i)$ can be taken as the Clfdr function \eqref{clfdr-stat}. Under the oracle setting where \( f_0 \), \( \pi_{S_i} \), and \( f_{S_i}(\cdot) \) are known, the scores \( u_{i} = \mathrm{Clfdr}(T_{i}, S_{i}) \) and \( \tilde{u}_{i} = \mathrm{Clfdr}(\tilde{T}_{i}, S_{i}) \) are pairwise exchangeable and can therefore be utilized in Algorithm \ref{algo:claw}.

However, specifying the Clfdr function typically requires knowledge of unknown quantities, such as $\pi_{S_i}$ and $f_{S_i}(\cdot)$ , which need to be estimated from data in practical scenarios. The random nature of the data-driven function $g(t, s)$ complicates the matter significantly. In general, constructing an efficient $g(t, s)$ often requires utilizing training, test, calibration, and auxiliary data jointly. Previous studies, such as the AdaDetect algorithm proposed by \cite{marandon22mlfdr}, have highlighted the challenge of incorporating test data for training score functions. In our problem setting, the presence of covariates introduces an additional layer of complexity.

Next, we introduce a theorem that consolidates relevant theories to serve as guiding principles for constructing exchangeable score functions, encompassing both pairwise exchangeability and joint exchangeability notions. To simplify the notation, we introduce two operations: $(\mathbf{T},\Tilde{\mathbf{T}})_{\mathrm{swap}(\mathcal{J})}$ and $(\mathbf T, \Tilde{\mathbf{T}})_{\Pi}$, with the former denoting the swapping of $T_{j}$ and $\Tilde{T}_{j}$ for each $j\in\mathcal{J}$, $\mathcal J\subset [m]$, across $\mathbf T$ and $\Tilde{\mathbf{T}}$ (two vectors of equal length), while the latter representing an arbitrary permutation of the elements in the vector $(\mathbf T, \Tilde{\mathbf{T}})\equiv(T_1,\cdots, T_m, \Tilde T_1, \cdots, \Tilde T_m)$.
 
\begin{thm} \label{thm:exch}
Consider a class of score functions in the form of $g(\cdot,S_{i})\equiv g\left(\cdot,S_{i};(\mathbf{T},\Tilde{\mathbf{T}}),\mathbf{S}\right)$. Denote $u_{i}=g(T_{i},S_{i})$, $\Tilde{u}_{i}=g(\Tilde{T}_{i},S_{i})$, $\mathbf{U}=(u_{1},\cdots,u_{m})$ and $\Tilde{\mathbf{U}}=(\Tilde{u}_{1},\cdots,\Tilde{u}_{m})$. Then 
\begin{enumerate}[(a)]
        \item \, $\mathbf{U}$ and $\Tilde{\mathbf{U}}$ satisfy the pairwise exchangeability condition \eqref{pwexch} if (i) the score functions are swapping-invariant with respect to $(\mathbf{T},\Tilde{\mathbf{T}})$, i.e. 
       \begin{equation}\label{principle-pw}
        \mbox{$g\left(\cdot,S_{i};(\mathbf{T},\Tilde{\mathbf{T}})_{\mathrm{swap}(\mathcal{J})},\mathbf{S}\right) = g\left(\cdot,S_{i};(\mathbf{T},\Tilde{\mathbf{T}}),\mathbf{S}\right)$ for any $\mathcal{J}\subset[m]$ }; 
        \end{equation} 
      and  (ii) $\mathbf{T}$, $\Tilde{\mathbf{T}}$ and $\mathbf{S}$ satisfy the exchangeability condition \eqref{jointexch-covariate};
        
        \item \, $\mathbf{U}$ and $\Tilde{\mathbf{U}}$ satisfy the joint exchangeability condition \eqref{jointexch-scores}, if (i) the score functions are conditionally independent of $\mathbf S$ and permutation-invariant with respect to the elements in $\{\mathbf T, \Tilde{\mathbf{T}}\}$, i.e. 
        \begin{equation}\label{principle-jt}
     \mbox{  $g\left(\cdot,S_{i};(\mathbf{T},\Tilde{\mathbf{T}}),\mathbf{S}\right)= g\left(\cdot;(\mathbf{T},\Tilde{\mathbf{T}})\right)=g\left(\cdot;(\mathbf{T},\Tilde{\mathbf{T}})_{\Pi}\right)$ };  
        \end{equation}
     and (ii) $\mathbf{T}$ and $\Tilde{\mathbf{T}}$ satisfy the exchangeability condition \eqref{jointexch}.
    \end{enumerate}
\end{thm}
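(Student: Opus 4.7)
The plan is to handle both parts in parallel by exploiting a single pattern: the invariance properties of $g$ translate a symmetry of the underlying data $(\mathbf T, \tilde{\mathbf T}, \mathbf S)$ directly into the corresponding symmetry of the score vector $(\mathbf U, \tilde{\mathbf U})$. In each case I would (i) produce a distributional identity at the data level using the relevant exchangeability assumption, (ii) lift it to the score level by invoking the appropriate invariance of $g$, and (iii) read off the desired exchangeability of the scores.

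For part (a), I would fix $i \in \mathcal H_0$ and apply the single-index swap $\mathcal J = \{i\}$ to $(\mathbf T, \tilde{\mathbf T})$ while keeping $\mathbf S$ fixed. Using \eqref{principle-pw} with this $\mathcal J$, I compute the post-swap scores coordinate by coordinate: for $j \neq i$ the first argument of $g$ and the function itself are unchanged, so $u_j$ and $\tilde u_j$ stay put; for $j=i$ only the first argument is exchanged, so the new $u_i$ equals the old $\tilde u_i$ and vice versa. Hence the post-swap score vector is exactly $(\tilde u_i, u_i, \mathbf U_{-i}, \tilde{\mathbf U}_{-i})$. The argument concludes by invoking \eqref{jointexch-covariate}: since $i \in \mathcal H_0$ and every coordinate of $\tilde{\mathbf T}$ is drawn from the null pool $\mathcal D_0$, the entries $T_i$ and $\tilde T_i$ belong to the same exchangeable block, so the swapped data are equal in distribution to the original data conditional on $(\mathbf S;\, T_j : j \notin \mathcal H_0)$. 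Marginalizing over this conditioning yields \eqref{pwexch}.

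For part (b), I would take an arbitrary permutation $\Pi$ of the $2m$-vector $(\mathbf T, \tilde{\mathbf T})$ that fixes the non-null test coordinates pointwise and acts freely on the union of the null test and calibration coordinates. Property \eqref{principle-jt} gives $g(\cdot; (\mathbf T, \tilde{\mathbf T})_\Pi) = g(\cdot; (\mathbf T, \tilde{\mathbf T}))$, so the permuted score vector is obtained by simply relabelling the entries of $(\mathbf U, \tilde{\mathbf U})$ according to $\Pi$. Condition \eqref{jointexch} says $(\mathbf T, \tilde{\mathbf T})_\Pi \overset{d}{=} (\mathbf T, \tilde{\mathbf T})$ conditional on $\{T_j : j \notin \mathcal H_0\}$, and transferring this identity through $g$ (whose independence from $\mathbf S$ is now used) delivers \eqref{jointexch-scores}.

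\textbf{Main obstacle.} The delicate step in both parts is applying the invariance of $g$ at \emph{every} output index, not only at the indices that are swapped or permuted. In part (a) this means using \eqref{principle-pw} for all $j \neq i$ so that the $2m-2$ untouched coordinates are preserved in value rather than merely in distribution; in part (b) the analogous care is needed to verify that $\mathbf S$ does not re-enter through the implicit arguments of $g$ once we drop it from the explicit argument list, and that $\Pi$ can be restricted to act on the null block while leaving the non-null conditioning coordinates genuinely fixed. Once this bookkeeping is in place, both claims follow from a single-line coupling of the symmetry of $g$ with the postulated exchangeability of the data.
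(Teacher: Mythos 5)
Your argument is correct and takes a genuinely different, cleaner route than the paper's. For part (a), the paper introduces an auxiliary vector $\mathbf{G}_i$ consisting of the remaining scores together with the order statistics $(T_i \vee \tilde{T}_i,\, T_i \wedge \tilde{T}_i)$, notes that the map $(T_i,\tilde{T}_i)\mapsto\mathbf{G}_i$ is a symmetric bivariate function given the rest, and invokes a symmetric-function lemma of the form $(X,Y)\overset{d}{=}(Y,X)\Rightarrow (X,Y,\psi(X,Y))\overset{d}{=}(Y,X,\psi(X,Y))$ to obtain conditional exchangeability of $(T_i,\tilde{T}_i)$ given $\mathbf{G}_i$; the exchangeability of $(u_i,\tilde{u}_i)$ then follows because $g(\cdot,S_i)$ is measurable with respect to that conditioning $\sigma$-field. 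You dispense with the conditioning machinery entirely: you apply the single-index swap $\mathcal{J}=\{i\}$ directly at the data level, verify coordinate by coordinate via \eqref{principle-pw} that the resulting score vector is exactly $(\tilde{u}_i,u_i,\mathbf{U}_{-i},\tilde{\mathbf{U}}_{-i})$, and then push the data-level distributional invariance guaranteed by \eqref{jointexch-covariate} through the deterministic data-to-score map. Part (b) is analogous: the paper conditions on the unordered multiset $\mathcal{C}$ of $(\mathbf{T},\tilde{\mathbf{T}})$ and appeals to the fact that exchangeable variables remain exchangeable given their order statistics, whereas you observe directly that the score map commutes with permutations of the data and transfer the identity $(\mathbf{T},\tilde{\mathbf{T}})_{\Pi}\overset{d}{=}(\mathbf{T},\tilde{\mathbf{T}})$ from \eqref{jointexch} to the scores, with $\Pi$ chosen to fix the non-null test coordinates so that the non-null scores are genuinely held fixed. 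Both proofs are rigorous. Your pushforward argument is shorter and avoids the auxiliary conditioning step; the paper's version makes explicit the $\sigma$-field (generated by the unordered pair, respectively the multiset) that absorbs the symmetry, a structure that it then reuses verbatim in the semi-supervised extension (Theorem 5 of the supplement) and that matches the filtration-based martingale style of the proof of Lemma 1.
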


We provide two remarks regarding the theorem. Firstly, parts (a) and (b) draw inspiration respectively from relevant theories presented in \citet{barber15knockoff} and \citet{marandon22mlfdr}. Nevertheless, our work differs  from these studies in terms of research objectives. Our primary focus is to provide principles for incorporating side information, a perspective that was absent in previous works. Therefore, we have repurposed and consolidated existing theories to align with our specific problem setups. Secondly, Theorem \ref{thm:exch} exclusively addresses the conventional multiple testing setup, where $F_0$ is known and thus training data $\mathbf{T}^{tr}$ is not involved. For the semi-supervised setup, we present the principle, methodology and theory with revised notations in Section \ref{app:subsub-pu-group}, and Section \ref{subsec:ssmt} of the Supplement.


\subsection{Locally adaptive estimators} 
\label{subsec:clfdr}

In this subsection, we introduce estimators for \( \pi_S \) and \( f_S(\cdot) \) by making the \emph{local smoothness} assumption, which posits that units with similar values form ``local neighborhoods''. The relational knowledge encoded within the auxiliary sequence can be conveniently represented by a weight matrix \( \mathbf{W} \equiv \mathbf{W}(\mathbf{S}) = (w_{ij})_{i,j \in [m]} \), which regulates the relative contributions from units \( j \neq i \). Let \( d: \mathcal{X} \times \mathcal{X} \to \mathbb{R} \) be a distance function that characterizes the local neighborhood within the metric space \( \mathcal{X} \). Define the weight \( w_{ij} = W\{d(S_i, S_j)\} \), where \( W(\cdot) \) is a non-negative decreasing function. Denote the vector of weights associated with unit \( i \) as \( \mathbf{W}_{S_i}(\mathbf{S}) = (w_{ij} : j \in [m]) \). To simplify the notation, we denote \( \hat{f}_{S_i}^*(t) \equiv \hat{f}^*\{t; \mathbf{T}, \mathbf{W}_{S_i}(\mathbf{S})\} \) and \( \hat{\pi}^*_{S_i} \equiv \hat{\pi}^*\{\mathbf{T}, \mathbf{W}_{S_i}(\mathbf{S})\} \), omitting the common data \( \mathbf{S} \) and \( \mathbf{T} \) shared across all units.

Let $p(T_{j})$ denote the p-value associated with hypothesis $H_{j}$. We begin by considering a class of kernel estimators for $f_{S_i}(t)$ and $\pi_{S_i}$ that allow us to incorporate the local neighborhood information through the weight matrix:
\begin{equation}
    \hat{f}_{S_i}^{*}(t)=\frac{\sum_{j=1}^{m}w_{ij}K_{h}(t-T_{j})}{\sum_{j=1}^{m}w_{ij}} \quad \text{and} \quad \hat{\pi}_{S_{i}}^{*}=1-\frac{\sum_{j=1}^{m}w_{ij}\mathbb{I}\{p(T_{j})>\lambda\}}{(1-\lambda)\sum_{j=1}^{m}w_{ij}}, \label{conventional_est}
\end{equation}
where $\lambda\in(0,1)$ is a pre-specified tuning parameter with the default choice of $\lambda=0.5$, $K_{h}(t)=h^{-1}K(t/h)$, and $K(t)$ represents a symmetric kernel function that satisfies $\int K(t)dt=1, \int tK(t)dt=0$, and $\int t^{2}K(t)dt<\infty,$ with $h$ being the bandwidth of the kernel function. The weighting strategy in \eqref{conventional_est} allows for an adaptive utilization of the available data: we borrow information from the entire sequence $\mathbf{T}$, but the units are treated differentially according to the structural information encoded in the weight matrix $\mathbf W$.

The kernel estimators \eqref{conventional_est} are intuitively appealing and encompass well-established estimators in the literature. In the scenario where $S_{i}$ represents the spatial location of unit $i$, these kernel estimators assign higher weights to units in close proximity, reflecting a local neighborhood effect. This intuition can be generalized to the case where $S_i$ is within a space $\mathcal X$ defined by a metric $d$: if $\mathcal{X}=\mathbb{R}$ and $w_{ij}=K_h(|S_i-S_j|)$, then $\hat{f}_{S_i}^{*}(t)$ recovers a variation of the bivariate density estimator proposed in \cite{CARS}, while $\hat{\pi}_{S_{i}}^{*}$ recovers the kernel estimator for the non-null proportion introduced in \cite{LAWS22}. Alternatively, when dealing with a discrete variable $S_i\in[K]$, we can set $w_{ij}=\II\{S_{i}=S_{j}\}$.  Suppose $S_i$ represents the group membership, with $K$ being the total number of groups. For $S_i=k$, $\hat{f}_{S_i}^{*}(t)\equiv \hat{f}_k^{*}(t)$ simplifies to a standard kernel density estimator constructed based on the data from the $k$th group $\{T_i: S_i=k\}$ (\citealp{cs09}). Similarly, $\hat{\pi}_{S_{i}}^{*}\equiv \hat{\pi}_{k}^{*}$ reduces to Storey's estimator (\citealp{SchSpj82, storey02}) of the non-null proportion in the $k$th group. The asymptotic properties of these kernel estimators have been investigated in \cite{CARS} and \cite{LAWS22}. 

Unfortunately, the estimators $\hat{f}_{S_i}^{*}(t)$ and $\hat{\pi}_{S_{i}}^{*}$ cannot be directly employed to construct pairwise exchangeable scores. Drawing inspiration from the strategy introduced in \citet{marandon22mlfdr}, we deliberately combine the calibration set $\tilde{\mathbf{T}}$ with the primary data $\mathbf{T}$ to ``conformalize'' the estimators presented in \eqref{conventional_est}. The conformalized version $f_{S_i}^{**}(t)$, which is explained in detail in Section \ref{app:estimate}, is defined as follows:
\begin{equation}\label{f-ds}
    \hat{f}_{S_{i}}^{**}(t) = \frac{\sum_{j=1}^{m}w_{ij}[K_{h}(t-T_{j})+K_{h}(t-\tilde{T}_{j})]}{2\sum_{j=1}^{m}w_{ij}}. 
\end{equation}
We emphasize that the bandwidth $h$ should either be pre-specified or determined using a data-driven rule that guarantees permutation-invariance with respect to $\mathbf{T}$ and $\Tilde{\mathbf{T}}$. See Appendix Section \ref{app:estimate} for more details on how the Silverman's rule \citep{silverman18density} or the Sheather-Jones method \citep{sheather91} may be tailored to select data-driven $h$ in the conformalization process. Similarly, $\hat{\pi}_{S_{i}}^{*}$ should be modified as: 
\begin{equation}\label{pi-ds} 
  \hat {\pi}^{**}_{S_{i}}=1-\frac{\sum_{j=1}^{m}w_{ij}[\II\{p(T_{j})>\lambda\}+\II\{p(\Tilde{T}_{j})>\lambda\}]}{2(1-\lambda)\sum_{j=1}^{m}w_{ij}},
\end{equation}
where $p(\Tilde{T}_{j})$ represents the p-value corresponding to $\Tilde{T}_{j}$, computed in the same way as for computing $p(T_{j})$. In Section \ref{subsec:claw_clfdr}, we prove that the conformalized estimators \eqref{f-ds} and \eqref{pi-ds} satisfy the guiding principle \eqref{principle-pw} in Theorem \ref{thm:exch} (a). Subsequently, we can construct conformity score function to emulate the Clfdr \eqref{clfdr-stat}:
\begin{equation}\label{hatclfdr**}
   \widehat{\mathrm{Clfdr}}^{**}(t,S_{i})=\frac{(1-\hat{\pi}_{S_i}^{**})f_{0}(t)}{\hat{f}_{S_{i}}^{**}(t)}, \quad \forall i\in[m]. 
\end{equation}

However, unlike previous findings (cf. Remark 2.1 and Theorem 4.1 in \citealp{marandon22mlfdr}) that suggest the optimal ranking remains unaffected during the conformalization process, the contamination of the mixture model from the calibrated data introduces systematic bias and significant complexities in the presence of side information. The next subsection focuses on the development of a strategy to effectively mitigate the systematic bias.
 
\subsection{The CLAW procedure and its validity}
\label{subsec:claw_clfdr}

To demonstrate that $\widehat{\mathrm{Clfdr}}^{**}(t,S_{i})$ can be systematically biased, we examine the large-$m$ limits of our estimators, assuming standard assumptions in kernel estimation (cf. \citealp{fan03nonlinear, CARS}). Specifically, these limits are expressed as $\hat{f}_{S_i}^{**}(t) \overset{p}{\to} \frac{1}{2} [f_{S_i}(t)+f_{0}(t)]$ and $\hat{\pi}^{**}_{S_{i}} \overset{p}{\to} \frac{1}{2}\pi_{S_{i}}$, where $\overset{p}{\to}$ indicates convergence in probability.  Consequently, we have the following relationship:
\begin{equation*}
    \widehat{\mathrm{Clfdr}}^{**}(t,S_{i}) \overset{p}{\to} \frac{(1-\pi_{S_i}/2)f_{0}(t)}{(1-\pi_{S_i}/2)f_{0}(t)+(\pi_{S_i}/2)f_{1S_i}(t)} := \mathrm{Clfdr}^{**}(t,S_{i}). 
\end{equation*}
Clearly, $\mathrm{Clfdr}^{**}(t,S_{i})$ can deviate significantly from $\mathrm{Clfdr}(t,S_{i})={(1-{\pi}_{S_{i}})f_{0}(t)}/{{f}_{S_{i}}(t)}$. 

To effectively emulate the ranking of $\mathrm{Clfdr}(t,S_{i})$, we introduce a mapping as follows: 
\begin{equation}
 \mathrm{Clfdr}(t,S_{i})=\frac{(1-\pi_{S_i})f_{0}(t)}{(1-\pi_{S_i})f_{0}(t)+\pi_{S_i}f_{1S_i}(t)} \overset{x\mapsto x/(1-x)}{\Longrightarrow}
\frac{(1-\pi_{S_i})f_{0}(t)}{\pi_{S_i}f_{1S_i}(t)}=:R(t,S_{i}). \label{or_Rt}        
\end{equation}
Since the mapping $x\mapsto x/(1-x)$ is strictly increasing on the interval $(0,1)$, the ranking of scores remains unchanged after the transformation. Thus, employing $R(t,S_{i})$ as the score function is equivalent to utilizing $\mathrm{Clfdr}(t,S_{i})$, as our prototype algorithm operates based on the relative ranks of the scores rather than their absolute values. Some elementary calculation reveals the relationship between $\mathrm{Clfdr}^{**}(t,s)$ and $R(t,s)$: 
\begin{equation*}
        \frac{2-\pi_{S_i}}{1-\pi_{S_i}}[(\mathrm{Clfdr}^{**}(t,S_{i}))^{-1}-1]=\frac{\pi_{S_i}f_{1S_i}(t)}{(1-\pi_{S_i})f_{0}(t)}=R^{-1}(t,S_{i}). 
\end{equation*}
Utilizing large-$m$ limits $\hat{\pi}_{S_i}^{**}\overset{p}{\to}\pi_{S_i}/2$ and $\widehat{\mathrm{Clfdr}}^{**}(t,S_{i})\overset{p}{\to} \mathrm{Clfdr}^{**}(t,S_{i})$, we propose to estimate $R(t,S_{i})$ by 
\begin{equation}
    \hat{R}(t,S_{i})=\frac{1/2-\hat{\pi}_{S_i}^{**}}{1-\hat{\pi}_{S_i}^{**}}\frac{\widehat{\mathrm{Clfdr}}^{**}(t,S_{i})}{1-\widehat{\mathrm{Clfdr}}^{**}(t,S_{i})}. \label{Rthat} 
\end{equation}
Consequently, under certain regularity conditions, $\hat{R}(t,S_{i})\overset{p}{\to}{R}(t,S_{i})$ as $m\rightarrow\infty$, demonstrating the efficacy of the transformation \eqref{Rthat}.

\begin{remark}\rm{
We discuss two small modifications for the quantities in \eqref{Rthat} in practical situations. First, since $\hat\pi^{**}_{S_i}\overset{p}{\to}\pi_{S_i}/2\in[0,1/2]$ in large-$m$ limits, we modify $\hat\pi^{**}_{S_i}$ to $\tilde\pi^{**}_{S_i}$ to ensure that the proportion estimator remains within the valid range of $[0, 1/2]$: 
\begin{equation}\label{pii-hat}
    \tilde\pi^{**}_{S_i}=\epsilon\II\{\hat\pi^{**}_{S_i}\leq0\} + (1/2-\epsilon)\II\{\hat\pi^{**}_{S_i}>1/2\} + \hat\pi^{**}_{S_i} \II\{0<\hat\pi^{**}_{S_i}\leq 1/2\}, 
\end{equation}
where we may set $\epsilon=0.001$. Further, as the Clfdr represents the posterior probability, the estimated Clfdr in \eqref{hatclfdr**} is modified as $\widetilde{\mathrm{Clfdr}}^{**}(t,S_{i})=\min\left\{{(1-\tilde{\pi}_{S_i}^{**})f_{0}(t)}/{\hat{f}_{S_{i}}^{**}(t)}, c\right\} $, where we may set $c=0.999$. In our numerical studies, CLAW is implemented by plugging $ \tilde\pi^{**}_{S_i}$ and $\widetilde{\mathrm{Clfdr}}^{**}(t,S_{i})$ into equation \eqref{Rthat}. 
}
\end{remark}

The newly introduced score function \( \hat{R}(t, S_{i}) \) presented in \eqref{Rthat} possesses two desirable properties. Firstly, it faithfully emulates the Clfdr ranking in the large-\( m \)-limit scenario; a theoretical justification for using the \( \mathrm{Clfdr} \) ranking (or \( R(t, s) \)) is provided in the next subsection. Secondly, the score function satisfies the guiding principle \eqref{principle-pw}. Hence, we can generate pairwise exchangeable scores \( u_{i} = \hat{R}(T_{i}, S_{i}) \) and \( \tilde{u}_{i} = \hat{R}(\tilde{T}_{i}, S_{i}) \), which in turn are employed in the prototype algorithm. The key steps of the proposed CLAW procedure are summarized in Algorithm \ref{algo:claw_clfdr}, with its theoretical properties established in Theorem \ref{thm:CLAW}.

\begin{algorithm}[!htbp]
\renewcommand{\algorithmicrequire}{\textbf{Input : }}
\renewcommand{\algorithmicensure}{\textbf{Output : }}
\caption{The CLAW procedure} \label{algo:claw_clfdr}
\begin{algorithmic}[1]
\Require The sequence of triples $({T}_{i},\Tilde T_i, S_i)_{i=1}^m$, the target FDR level $\alpha$. 
\Ensure The index set of rejected hypotheses $\mathcal{R}\subset[m]$. 
\State Construct the weight matrix $\mathbf W$ based on auxiliary covariates $\mathbf{S}$.
\ForAll{$i$ in $[m]$}
\State Compute conformalized estimators $\hat{f}^{**}_{S_i}(t)$ and $\hat{\pi}^{**}_{S_{i}}$ based on \eqref{f-ds} and \eqref{pi-ds}. 
\State Construct conformalized score functions $\widehat{\mathrm{Clfdr}}^{**}(t,S_{i})$ by \eqref{hatclfdr**}.
\State Transform $\widehat{\mathrm{Clfdr}}^{**}(t,S_{i})$ to $\hat{R}(t,S_{i})$ via \eqref{Rthat}. Obtain $u_{i}=\hat{R}(T_{i},S_{i})$ and $\Tilde{u}_{i}=\hat{R}(\Tilde{T}_{i},S_{i})$. 
\EndFor
\State Apply Algorithm \ref{algo:claw} with $u_{i}$ and $\Tilde{u}_{i}$ obtained in the previous step. Let $\mathcal{R}=\{i\in[m]: u_{i}\leq\tau\wedge\Tilde{u}_{i}\}$.
\State \textbf{Return} The rejection set $\mathcal{R}$.
\end{algorithmic}
\end{algorithm}

\begin{thm}\label{thm:CLAW}(Validity of the CLAW procedure).
 Suppose $(\mathbf{T}, \Tilde{\mathbf{T}},\mathbf{S})$ satisfy the exchangeability condition \eqref{jointexch-covariate}, and $u_i$ and $\tilde u_i$ are computed according to Algorithm \ref{algo:claw_clfdr}. Then (a)
$(u_i:{i\in[m]})$ and $(\Tilde{u}_{i}:{i\in[m]})$ satisfy the pairwise exchangeability \eqref{pwexch}.
(b) If there is no tie between $u_i$ and $\Tilde{u}_i$ almost surely, then Algorithm \ref{algo:claw_clfdr} controls the FDR at level $\alpha$.
\end{thm}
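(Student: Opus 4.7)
\textbf{Proof plan for Theorem \ref{thm:CLAW}.}

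The plan is to reduce the validity of CLAW to the two earlier results already established in the excerpt: the structural principle in Theorem \ref{thm:exch}(a) and the finite-sample FDR control result of Theorem \ref{thm:fdr}. Part (b) will follow almost mechanically from part (a), so the substantive work is to verify that the CLAW score function satisfies the swapping-invariance condition \eqref{principle-pw}. The strategy is therefore to walk through the recipe of Algorithm \ref{algo:claw_clfdr} and check, ingredient by ingredient, that every quantity entering $\hat{R}(t, S_{i})$ depends on $(\mathbf{T}, \tilde{\mathbf{T}})$ only through terms that are symmetric under the swap $T_{j}\leftrightarrow \tilde{T}_{j}$ for any $j \in \mathcal{J}\subset [m]$.

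First I would write the CLAW score as a function $g(\cdot, S_{i}) \equiv g(\cdot, S_{i}; (\mathbf{T},\tilde{\mathbf{T}}), \mathbf{S})$ whose only data-driven inputs are the conformalized density estimator $\hat f^{**}_{S_i}(t)$ in \eqref{f-ds}, the proportion estimator $\hat\pi^{**}_{S_i}$ in \eqref{pi-ds}, the weight matrix $\mathbf{W}(\mathbf{S})$, and the known $f_0$. The weight matrix $\mathbf{W}(\mathbf{S})$ depends only on $\mathbf{S}$, so it is automatically unaffected by any swap of $T_{j}$ with $\tilde T_{j}$. For $\hat f^{**}_{S_i}(t)$, observe that each pair contributes the symmetric quantity $K_h(t-T_{j}) + K_h(t-\tilde T_{j})$, which is invariant under the swap $T_{j}\leftrightarrow \tilde T_{j}$; the bandwidth $h$ can be taken as either pre-specified or computed by a data-driven rule that is itself permutation-invariant in $(\mathbf{T}, \tilde{\mathbf{T}})$, as discussed in Appendix \ref{app:estimate}. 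Hence $\hat f^{**}_{S_i}(t)$ satisfies \eqref{principle-pw}. Exactly the same argument handles $\hat\pi^{**}_{S_i}$: each pair contributes $\II\{p(T_{j})>\lambda\} + \II\{p(\tilde T_{j})>\lambda\}$, which is symmetric in $T_{j}$ and $\tilde T_{j}$, and the p-values are computed from the known $F_0$ so they do not introduce further data-dependence. The truncation in \eqref{pii-hat} and the cap in $\widetilde{\mathrm{Clfdr}}^{**}$ are deterministic functions of $\hat\pi^{**}_{S_i}$ and $\hat f^{**}_{S_i}(t)$, so they preserve swap-invariance. Composing these, both $\widehat{\mathrm{Clfdr}}^{**}(t,S_i)$ and the monotone transform $\hat R(t, S_i)$ in \eqref{Rthat} are swap-invariant functionals of $(\mathbf{T},\tilde{\mathbf{T}})$.

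With the swapping-invariance verified, the exchangeability hypothesis \eqref{jointexch-covariate} on $(\mathbf{T},\tilde{\mathbf{T}},\mathbf{S})$ and Theorem \ref{thm:exch}(a) together yield the pairwise exchangeability condition \eqref{pwexch} for $u_i = \hat R(T_i, S_i)$ and $\tilde u_i = \hat R(\tilde T_i, S_i)$. This establishes part (a). For part (b), the CLAW scores are plugged into Algorithm \ref{algo:claw}, so the conclusion is a direct application of Theorem \ref{thm:fdr}: pairwise exchangeability \eqref{pwexch} plus the no-tie assumption imply the $e_j$ of \eqref{newev} are generalized e-values, and the resulting e-BH-equivalent procedure controls FDR at level $\alpha$.

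The main obstacle, to the extent there is one, is not algebraic but notational and auxiliary: one must be disciplined in stating $\hat R$ as a single deterministic functional of $((\mathbf{T},\tilde{\mathbf{T}}),\mathbf{S})$ and then transporting swap-invariance through every downstream operation (truncation, clipping, monotone transform). The most delicate point is the data-driven bandwidth: if $h$ is chosen via, say, Silverman's or Sheather--Jones's rule applied naively to $\mathbf{T}$ alone, then swap-invariance fails. I would therefore state explicitly that $h$ is computed from the pooled sample $\mathbf{T}\cup \tilde{\mathbf{T}}$ (or from a permutation-invariant summary thereof, as detailed in Appendix \ref{app:estimate}), ensuring $h$ is a symmetric function of the $2m$ data points. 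Once this is settled, every remaining step reduces to applying Theorems \ref{thm:exch} and \ref{thm:fdr} in sequence.
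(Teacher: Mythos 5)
Your proposal is correct and follows essentially the same route as the paper's own proof: verify the swapping-invariance condition \eqref{principle-pw} for the CLAW score function by checking that $\hat{f}^{**}_{S_i}$ and $\hat{\pi}^{**}_{S_i}$ are symmetric in each pair $(T_j,\tilde T_j)$, then invoke Theorem \ref{thm:exch}(a) under assumption \eqref{jointexch-covariate} for part (a), and Theorem \ref{thm:fdr} for part (b). Your additional care regarding the data-driven bandwidth $h$ and the truncation/clipping modifications of Remark 3 is a welcome level of detail that the paper's proof leaves implicit (deferring the bandwidth discussion to Appendix \ref{app:estimate}), but it does not change the underlying argument.
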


Although Algorithm \ref{algo:claw_clfdr} employs conformalized empirical Bayes estimators, the validity of our inference framework hinges solely on the pairwise exchangeability condition \eqref{pwexch}. Compared to conventional empirical Bayes FDR procedures, the theoretical guarantees of CLAW remain unaffected under model mis-specifications, which enhances its practicality and applicability. Moreover, Algorithm \ref{algo:claw_clfdr} is a specialized version of the generic Algorithm \ref{algo:claw}, and any score functions satisfying the principle outlined in Theorem \ref{thm:exch} can be effectively applied within Algorithm \ref{algo:claw}, highlighting the flexibility and applicability of our proposed framework. See Section \ref{app:subsub-pu-group} and Section \ref{subsec:ssmt} of the Supplement for alternative methods of constructing score functions.

\subsection{An optimality theory tailored for BC algorithms}

The optimality theory concerning the use of Clfdr \eqref{clfdr-stat}, as established in \cite{CARS}, cannot be directly applied to our framework. The primary issue is that existing theories focus exclusively on rejection rules of the form $\II\{g(T_i,S_i)\leq t\}$, whereas our decision rule rejects hypotheses only within the \textit{candidate rejection set}, denoted by $\mathcal{A}$, and takes the form $\II\{u_i\leq t\wedge\tilde{u}_i\}=\II\{u_i\leq t\}\II\{i\in\mathcal{A}\}$. In light of an insightful referee's comment and as elucidated in Section \ref{app:relation}, several FDR procedures, including knockoff filters \citep{barber15knockoff}, AdaPT \citep{lei18adapt}, and CLAW, fall within the class of Selective SeqStep+ algorithms, or BC algorithms \citep{barber15knockoff}. 
Concretely, knockoff filters \citep{barber15knockoff,ren23knockoff} focus exclusively on the subset of features for which the corresponding anti-symmetric statistic is positive, AdaPT \citep{lei18adapt} only considers the subset of hypotheses for which $p_i<1-p_i$, and CLAW concentrates on the subset $\mathcal{A}=\{i:u_i<\tilde{u}_i\}$. 

Therefore, we present Proposition \ref{prop:opt} to establish an optimality theory specifically tailored for BC-type algorithms. The proposition does not claim that the rule is universally the most powerful, as its optimality is confined to the restricted subset $\mathcal{A}$. However, given the significance of BC-type algorithms, this theory may hold independent interest. For further discussions, please refer to Section \ref{app:opt_or} of the Supplement. This optimality theory is developed under an oracle setup in which the parameters in the model \eqref{model:mixture} are assumed to be known. Furthermore, the marginal FDR $\mathrm{mFDR}(\mathcal{R})=\EE(|\mathcal{R}\cap\mathcal{H}_{0}|)/\EE(|\mathcal{R}|)$ has been employed in place of the conventional FDR to simplify the theoretical derivations.

\begin{proposition} \label{prop:opt} 
    Suppose $\{(T_i, S_i, \theta_i) : i \in [m]\}$ are generated from the covariate-adaptive model \eqref{model:mixture} and $\tilde{T}_i \overset{i.i.d.}{\sim}{f_0}$. 
    Assume an oracle setup in which $R(\cdot,\cdot)$ defined in \eqref{or_Rt} is known for all $i\in[m]$.  Consider oracle scores $u_i = R(T_i,S_i)$ and $\tilde{u}_i=R(\tilde{T}_i,S_i)$, and candidate set $\mathcal{A}:=\{i:u_i<\tilde{u}_i\}$. 
    Let $\mathcal{R}_{u} = \{i\in\mathcal{A} : u_i \leq t^{*}\}$ be the rejection set for some threshold $t^{*}$
    such that $\mathrm{mFDR}(\mathcal{R}_{u}) = \alpha$.
    Then for any rejection rule $\mathcal{R}\subset\mathcal{A}$ such that $\mathrm{mFDR}(\mathcal{R}) \leq \alpha$,
    we have that $\EE(|\mathcal{R}_{u}\cap\mathcal{H}_{0}^{c}|)\geq \EE(|\mathcal{R}\cap\mathcal{H}_{0}^{c}|)$.
\end{proposition}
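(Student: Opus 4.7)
The plan is a Neyman--Pearson-type Lagrangian argument carried out on the restricted problem. Write any competitor as $\mathcal{R}=\{i\in\mathcal{A}:\delta_i=1\}$ for a $\{0,1\}$-valued decision $\delta_i$ that may depend on all observable data, and express $\EE|\mathcal{R}\cap\mathcal{H}_0|$, $\EE|\mathcal{R}\cap\mathcal{H}_0^c|$ and $\EE|\mathcal{R}|$ as indicator sums. The first critical step is to show that restricting attention to $\mathcal{A}$ does not alter the posterior Clfdr. Under the oracle model \eqref{model:mixture}, $\tilde{T}_i \stackrel{iid}{\sim} f_0$ is generated independently of $\{(T_j,S_j,\theta_j)\}_{j\in[m]}$, so the event $\{i\in\mathcal{A}\}=\{R(\tilde{T}_i,S_i)>R(T_i,S_i)\}$ is conditionally independent of $\theta_i$ given $(T_i,S_i)$. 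Consequently,
\begin{equation*}
\PP(\theta_i=0\mid T_i,S_i,\,i\in\mathcal{A}) \;=\; \PP(\theta_i=0\mid T_i,S_i) \;=\; \mathrm{Clfdr}(T_i,S_i).
\end{equation*}

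Next I would form the Lagrangian $L(\mathcal{R};\lambda)=\EE|\mathcal{R}\cap\mathcal{H}_0^c|-\lambda(\EE|\mathcal{R}\cap\mathcal{H}_0|-\alpha\EE|\mathcal{R}|)$. Using the identity above together with the tower property, it simplifies to
\begin{equation*}
L(\mathcal{R};\lambda) \;=\; \sum_{i=1}^m \EE\bigl[\delta_i\II\{i\in\mathcal{A}\}\bigl(1-(1+\lambda)\mathrm{Clfdr}(T_i,S_i)+\lambda\alpha\bigr)\bigr].
\end{equation*}
Since $\delta_i\in\{0,1\}$, the pointwise maximizer over $\delta_i$ is the threshold rule $\delta_i^{\star}=\II\{\mathrm{Clfdr}(T_i,S_i)\leq c_\lambda\}$ with $c_\lambda:=(1+\lambda\alpha)/(1+\lambda)$. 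Because $R(t,s)=\mathrm{Clfdr}(t,s)/(1-\mathrm{Clfdr}(t,s))$ is strictly increasing on $[0,1)$, thresholding $u_i \leq t^{*}$ is equivalent to $\mathrm{Clfdr}(T_i,S_i)\leq c^{*}:=t^{*}/(1+t^{*})$, so $\mathcal{R}_u$ already has the required threshold form. Applying the first-step identity to $\mathrm{mFDR}(\mathcal{R}_u)$ yields $\mathrm{mFDR}(\mathcal{R}_u)\leq c^{*}$; combined with the hypothesis $\mathrm{mFDR}(\mathcal{R}_u)=\alpha$, this forces $c^{*}\geq\alpha$, so $c_\lambda=c^{*}$ admits a unique solution $\lambda^{*}\geq 0$, and for this $\lambda^{*}$ the rule $\mathcal{R}_u$ is exactly the pointwise maximizer of $L(\cdot;\lambda^{*})$.

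The conclusion then follows from standard mFDR bookkeeping. The equality $\mathrm{mFDR}(\mathcal{R}_u)=\alpha$ gives $L(\mathcal{R}_u;\lambda^{*})=\EE|\mathcal{R}_u\cap\mathcal{H}_0^c|$; the inequality $\mathrm{mFDR}(\mathcal{R})\leq\alpha$ gives $L(\mathcal{R};\lambda^{*})\geq\EE|\mathcal{R}\cap\mathcal{H}_0^c|$; and pointwise optimality gives $L(\mathcal{R};\lambda^{*})\leq L(\mathcal{R}_u;\lambda^{*})$. Chaining these three delivers $\EE|\mathcal{R}\cap\mathcal{H}_0^c|\leq\EE|\mathcal{R}_u\cap\mathcal{H}_0^c|$. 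The main obstacle is precisely the first step: the equivalence between the oracle score $R(T_i,S_i)$ and the ``right'' thresholding statistic on the restricted problem hinges on $\tilde{T}_i$ carrying no information about $\theta_i$. If this independence were violated, the pointwise maximizer of $L$ would involve a modified \emph{conditional} Clfdr (weighted by the $\mathcal{A}$-indicator's dependence on $\theta_i$) rather than $\mathrm{Clfdr}(T_i,S_i)$ itself, and the equivalence with the oracle score $u_i=R(T_i,S_i)$ used to define $\mathcal{R}_u$ would break down.
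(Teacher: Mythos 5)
Your proposal is correct and follows essentially the same Neyman--Pearson/Lagrangian strategy as the paper's own proof: both hinge on the observation that $\EE[\II\{\theta_i=0\}\mid\mathbf{T},\tilde{\mathbf{T}},\mathbf{S}]=\mathrm{Clfdr}(T_i,S_i)$ (so that the number of false positives integrates to a weighted sum of Clfdr values over the restricted set $\mathcal{A}$), then linearize the mFDR constraint and show the Clfdr-threshold rule maximizes the linearized objective pointwise. One small caution: the identity you explicitly record, $\PP(\theta_i=0\mid T_i,S_i,\,i\in\mathcal{A})=\mathrm{Clfdr}(T_i,S_i)$, is by itself too weak to justify the Lagrangian simplification, since a competitor $\delta_i$ may depend on $(\mathbf{T}_{-i},\tilde{\mathbf{T}},\mathbf{S}_{-i})$ as well; you need the conditional expectation taken with respect to the full $\sigma(\mathbf{T},\tilde{\mathbf{T}},\mathbf{S})$ before $\delta_i$ can be pulled out, which is exactly what the paper's equalities (i)--(ii) provide. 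The independence you invoke (that $\tilde{T}_i$ is generated independently of all $(T_j,S_j,\theta_j)$, plus the independence across units in model \eqref{model:mixture}) does deliver that stronger statement, so this is a fixable imprecision rather than a flaw in the argument.
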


\section{Extensions and Related Works}
\label{sec:extension} 

In this section, we first discuss how CLAW can be employed to handle the semi-supervised setup (Section \ref{app:subsub-pu-group}). Furthermore, we explore the incorporation of data collected from multiple sources within the CLAW framework (Section \ref{subsec:deran-claw}). Finally, we cast CLAW within the broader context of conformal inference and highlight its connections with related approaches (Section \ref{app:relation}).

\subsection{Semi-supervised CLAW with grouped hypotheses}
\label{sec:relation}\label{app:subsub-pu-group}

Assume that $(\mathbf T^0, \mathbf T, \mathbf S)$ satisfy the exchangeability condition \eqref{jointexch-covariate}, where $\mathbf{T}^0=\mathbf{T}^{tr}\cup\Tilde{\mathbf{T}}$\footnote{Our framework can be employed to integrate training data \(\mathbf{T}^{tr}\) that encompasses samples from various data sources (cf. Appendix \ref{app:pwexch}). However, in this section we have intentionally chosen to utilize \(\mathbf{T}^{tr}\) to denote \(\mathbf{T}^{tr0}\) as defined in Remark 1, thereby ensuring alignment with the data-splitting strategies implemented by \citet{bates23} and \citet{marandon22mlfdr}. This slight abuse of notation serves to clarify the connections with related works, particularly in illustrating how the concepts from AdaDetect can be leveraged within the CLAW framework (Section \ref{app:subsub-pu-group}) and how existing conformal methods can be adjusted to meet the requirements of pairwise exchangeability (Section \ref{app:relation}).}. 

Although the null distribution is not explicitly known, CLAW can be implemented by directly estimating the unknown $f_0$. The estimate, denoted as $\hat{f}_{0}(t)=\hat{f}_{0}(t;\mathbf{T}^{tr})$, can be obtained by applying parametric or nonparametric methods  \citep{fan03nonlinear} on $\mathbf{T}^{tr}$. Since $\hat{f}_{0}(t)$ does not involve the test data $\mathbf{T}$ and calibration data $\Tilde{\mathbf{T}}$, substituting $\hat f_0$ for $f_0$ in \eqref{hatclfdr**} and \eqref{Rthat} fulfills the principle in Theorem \ref{thm:exch}(a), thereby producing pairwise exchangeable scores. 

This direct estimation approach may be prone to inaccuracy and instability. To mitigate these issues, it is possible to enhance the method by integrating techniques from the literature on semi-supervised classification with positive and unlabeled data (PU classification, cf. \citealp{PU_NIPS2014,PU_ML2020}). PU learning methods become especially effective when dealing with high-dimensional data. In this subsection, we explore PU learning issues concerning the case when $S_i$ takes values from a discrete set $\{1,\cdots,K\}$. This corresponds to the simple yet significant scenario of multiple testing with groups \citep{efron08, cs09}, enabling us to effectively demonstrate the benefits of CLAW over existing methods. The case where $S_i$ is a continuous variable is discussed in Section \ref{subsec:ssmt} of the Supplement.

Two commonly employed strategies for testing with groups, as mentioned in \cite{efron08}, are the \emph{pooled analysis} and \emph{separate analysis}. The former simply carries out an FDR analysis as usual, discarding the grouping variable $S_i$. The latter first conducts group-wise FDR analyses separately, and subsequently combines the testing results from separate groups. We discuss two variations of the AdaDetect algorithm \citep{marandon22mlfdr} that utilize PU learning techniques for out-of-distribution testing. The first variation adopts the pooled analysis strategy, estimating the density ratio of the samples in the training set to the combined samples from the test and calibration sets. The second variation, called $\mbox{SeparateAD}$, performs group-wise analysis by following the separate analysis strategy. Specifically, $\mbox{SeparateAD}$ first constructs conformity scores using a class of functions that are group-wise permutation-invariant, i.e.
\begin{equation}\label{group-g}
g\left(\cdot, k; \cup_{i: S_i=k}\{T_i, \Tilde T_i\}, \mathbf{T}^{tr} \right)\equiv g\left(\cdot, k; \left(\cup_{i: S_i=k}\{T_i, \Tilde T_i\}\right)_{\Pi}, \mathbf{T}^{tr} \right), 
\end{equation}
where $\Pi$ represents an arbitrary permutation of the elements in the set $\cup_{i: S_i=k}\{T_i, \Tilde T_i\}$. As the null samples within the same group satisfy the exchangeable condition \eqref{jointexch-covariate}, it follows from Theorem \ref{thm:exch} (b) that the scores constructed via \eqref{group-g} are jointly exchangeable within group $k$. Next, $\mbox{SeparateAD}$ applies AdaDetect to each group separately at level $\alpha_k$ and combines the rejections to form the final rejection set. However, the finite-sample FDR theory for $\mbox{SeparateAD}$ has not been established and it remains unclear how to adjust the $\alpha_k$'s to maximize power. 

Finally, we discuss the implementation of CLAW using PU learning techniques. Following the approach outlined in \cite{marandon22mlfdr}, we estimate the ratio of the density of $\mathbf{T}^{tr}$ to that of $\cup_{i: S_i=k}\{T_i, \Tilde T_i\}$ within the $k$-th group using the class of functions specified in \eqref{group-g}: 
\begin{equation}\label{app-pu-group-dr}
    \hat{r}(\cdot,k)=\hat{r}(\cdot,k;\cup_{i: S_i=k}\{T_i, \Tilde T_i\}, \mathbf{T}^{tr}). 
\end{equation}
Next, we discuss the estimation of the non-null proportion $\pi_k$. Since the exact knowledge of $f_0$ is unavailable, we initially construct conformal p-values using the available data. These p-values are then utilized in \eqref{pi-ds} to obtain the estimate. Further details can be found in Section \ref{app:subsub-confp} of the Supplement.
Let $\widehat{\mathrm{Clfdr}}^{**}(t,k)=(1-\hat{\pi}_{k}^{**})\hat{r}(t,k)$ for $S_{i}=k$. The ranking score function $\hat{R}(t,k)$ can be derived through the transformation \eqref{Rthat}. 

The following proposition establishes the pairwise exchangeability of the conformity scores $u_i= \hat{R}(T_i, S_i)$ and $ \tilde u_i= \hat{R}(\tilde T_i, S_i)$. Therefore, the semi-supervised CLAW procedure boils to implementing Algorithm \ref{algo:claw} with the pairs $(u_i, \tilde u_i)_{i=1}^m$.
\begin{proposition}\label{prop:group}
    Consider the score functions $\hat{R}(t,k)$ defined in \eqref{Rthat} with $\widehat{\mathrm{Clfdr}}^{**}(t,k)=(1-\hat{\pi}_{k}^{**})\hat{r}(t,k)$, where $\hat{\pi}_{k}^{**}$ and $\hat{r}(t,k)$ are defined in \eqref{pi-ds} and \eqref{app-pu-group-dr}, respectively. Let $u_i= \hat{R}(T_i, S_i)$, $ \tilde u_i= \hat{R}(\tilde T_i, S_i)$. Then $u_i$ and $\tilde u_i$ are pairwise exchangeable if $\mathbf{T}$, $\Tilde{\mathbf{T}}$, $\mathbf{T}^{tr}$ and $\mathbf{S}$ satisfy \eqref{jointexch-covariate}.
    \end{proposition}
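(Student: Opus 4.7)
\textbf{Proof plan for Proposition \ref{prop:group}.} The strategy is to invoke Theorem \ref{thm:exch}(a). Since the exchangeability condition \eqref{jointexch-covariate} on $(\mathbf{T}, \tilde{\mathbf{T}}, \mathbf{T}^{tr}, \mathbf{S})$ is assumed by hypothesis, the proof reduces to verifying that the score function $\hat{R}(\cdot, S_i)$, viewed as a function of the auxiliary data $(\mathbf{T}, \tilde{\mathbf{T}})$ and $\mathbf{S}$ (with $\mathbf{T}^{tr}$ playing the role of an external, un-swapped covariate), satisfies the swapping-invariance principle \eqref{principle-pw}: for every $\mathcal{J} \subset [m]$, replacing $(\mathbf{T}, \tilde{\mathbf{T}})$ by $(\mathbf{T}, \tilde{\mathbf{T}})_{\mathrm{swap}(\mathcal{J})}$ leaves $\hat{R}(\cdot, k)$ unchanged for each group label $k$.

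I would then decompose $\hat{R}(t, k)$ according to its ingredients and verify swapping-invariance of each ingredient separately. First, $\mathbf{T}^{tr}$ is disjoint from and unaffected by the swap, so any quantity depending only on $\mathbf{T}^{tr}$ is trivially invariant. Second, the density-ratio estimator $\hat{r}(\cdot, k) = \hat{r}(\cdot, k; \cup_{i: S_i = k}\{T_i, \tilde{T}_i\}, \mathbf{T}^{tr})$ depends on $(\mathbf{T}, \tilde{\mathbf{T}})$ only through the \emph{unordered} multiset $\cup_{i: S_i = k}\{T_i, \tilde{T}_i\}$; since swapping the pair $(T_j, \tilde{T}_j)$ does not change this multiset for any $j$, $\hat{r}(\cdot, k)$ is swapping-invariant. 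Third, $\hat{\pi}_k^{**}$ as defined in \eqref{pi-ds} (with $w_{ij} = \II\{S_i = S_j\}$ in the grouped setting) involves the sum $\II\{p(T_j) > \lambda\} + \II\{p(\tilde{T}_j) > \lambda\}$ over each index $j$, which is symmetric in the pair $(T_j, \tilde{T}_j)$; provided the conformal p-values $p(\cdot)$ are themselves computed in a swapping-invariant manner (as is the case for the construction detailed in Section \ref{app:subsub-confp} of the Supplement, where the calibration set enters only as an unordered pool), $\hat{\pi}_k^{**}$ is swapping-invariant.

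Combining the three pieces, the composite quantity $\widehat{\mathrm{Clfdr}}^{**}(t, k) = (1 - \hat{\pi}_k^{**}) \hat{r}(t, k)$ is swapping-invariant, and the final deterministic transformation \eqref{Rthat} that produces $\hat{R}(t, k)$ from $(\hat{\pi}_k^{**}, \widehat{\mathrm{Clfdr}}^{**}(t, k))$ preserves this invariance. Hence $\hat{R}(\cdot, k) = g(\cdot, k; (\mathbf{T}, \tilde{\mathbf{T}}), \mathbf{S}, \mathbf{T}^{tr})$ satisfies \eqref{principle-pw}, and Theorem \ref{thm:exch}(a) yields the desired pairwise exchangeability of $(u_i)_{i \in [m]}$ and $(\tilde{u}_i)_{i \in [m]}$.

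The main obstacle is the swapping-invariance of the conformal p-values $p(T_j)$ and $p(\tilde{T}_j)$ used inside $\hat{\pi}_k^{**}$. In the known-$F_0$ setting this is immediate because $p(\cdot)$ is a deterministic function, but here $p(\cdot)$ is data-driven (built from $\mathbf{T}^{tr}$ together with the pooled test-plus-calibration data). The verification therefore hinges on the precise construction in Section \ref{app:subsub-confp}: one must check that the calibration scores enter the p-value definition only through the unordered pool $\cup_{i}\{T_i, \tilde{T}_i\}$ (or equivalently through group-wise unordered pools), so that interchanging $T_j$ with $\tilde{T}_j$ leaves the function $t \mapsto p(t)$ unchanged and only permutes its evaluations at $T_j$ and $\tilde{T}_j$. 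Once this is established, the rest of the argument is a routine composition of invariant maps.
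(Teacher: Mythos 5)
Your proposal is correct and follows essentially the same route as the paper: invoke the exchangeability-transfer principle of Theorem~\ref{thm:exch}(a) (with $\mathbf{T}^{tr}$ treated as un-swapped conditioning data), and verify the swapping-invariance of $\hat{R}(\cdot,k)$ by checking each ingredient -- $\hat{r}(\cdot,k)$ depends on $(\mathbf{T},\tilde{\mathbf{T}})$ only through the unordered multiset $\cup_{i:S_i=k}\{T_i,\tilde{T}_i\}$, and $\hat{\pi}_k^{**}$ is a symmetric function of each pair $\{p(T_j),p(\tilde{T}_j)\}$ once the conformal p-value function itself is shown to be permutation-invariant in $(\mathbf{T},\tilde{\mathbf{T}})$, which is exactly what the construction in Section~\ref{app:subsub-confp} guarantees. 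The only cosmetic difference is that the paper packages the argument as a verification of the strictly stronger Property~\ref{app-prop:group} (stated under the weaker pairwise-exchangeability condition \eqref{data_pwexch} and its associated Theorem~\ref{appthm:exch}), from which Proposition~\ref{prop:group} is deduced as an immediate corollary, whereas you argue for Proposition~\ref{prop:group} directly; the underlying verification is identical.
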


In the classical setup for multiple testing, \cite{cs09} demonstrated that both pooled and separate FDR analyses could be uniformly enhanced by the CLfdr procedure. Similarly, in the semi-supervised setup, we anticipate that AdaDetect and SeparateAD, which respectively employ pooled and separate strategies, can be improved by CLAW. This claim is substantiated by the numerical experiments described in Section \ref{subsec:group}. 



\subsection{Multi-source data aggregation via CLAW}\label{subsec:deran-claw}

This section presents an exploratory investigation into a flexible framework for data aggregation. The motivating setup revolves around a scenario where we have collected multiple auxiliary sequences $(\mathbf{S}^{(k)}: k\in[K])$, where $\mathbf{S}^{(k)}$ represents an $m$-dimensional vector encoding the covariates associated with hypotheses $(H_j)_{j=1}^m$ from the $k$th auxiliary data source, $k\in[K]$. The task of effectively utilizing all of this data poses considerable challenges, as different $\mathbf{S}^{(k)}$ may be acquired in varying formats, assessed using disparate metrics, and measured with distinct units of measurement. To tackle this, our proposed framework leverages the property that the average of e-values remains an e-value. Consequently, we execute the prototype algorithm $K$ times, each time for a specific covariate sequence, generating $K$ e-values for each $H_i$; the $K$ e-values can be aggregated to derive an averaged e-value for each $H_i$. The generic framework allows the primary data, calibration data, as well as testing procedures, to vary across $k\in[K]$. The description of our proposal is provided in Algorithm \ref{algo:derandom}.

\begin{algorithm}
\renewcommand{\algorithmicrequire}{\textbf{Input : }}
\renewcommand{\algorithmicensure}{\textbf{Output : }}
\caption{The integrative CLAW procedure} \label{algo:derandom}
\begin{algorithmic}[1]
\Require The test data $\mathbf{T}^{(k)}$, calibration data $\Tilde{\mathbf{T}}^{(k)}$, training data $\mathbf{T}^{tr(k)}$, covariates $\mathbf{S}^{(k)}$, $k=1,\cdots,K$, where $K$ is the number of testing procedures implementation, denote the procedures by $\mathcal{P}_{1},\cdots,\mathcal{P}_{K}$; the FDR level for each implementation $\alpha_{1},\cdots,\alpha_{K}$, a target FDR level $\alpha$; e-value weights $v_{1},\cdots,v_{K}$.
\Ensure A set of rejection $\mathcal{R}\subset[m]$. 

\ForAll{$k=1,2,\cdots,K$} 

\State Implement procedure $\mathcal{P}_{k}$ at FDR level $\alpha_{k}$ on the whole dataset $(\mathbf{T}^{(k)},\Tilde{\mathbf{T}}^{(k)},\mathbf{T}^{tr(k)},\mathbf{S}^{(k)})$ to construct (generalized) e-values $e_{1}^{(k)},\cdots,e_{m}^{(k)}$.

\EndFor

\State Let $\bar{e}_{i}=\sum_{k=1}^{K}v_{k}e_{i}^{(k)}/\sum_{k=1}^{K}v_{k}$ for $i\in[m]$. Denote the ordered statistics by $\bar{e}_{(1)}\geq \bar{e}_{(2)}\geq\cdots\geq \bar{e}_{(m)}$. Let $\hat{k}=\max\{i:({i\bar{e}_{(i)}}/{m})\geq({1}/{\alpha})\}$.

\State Let $\mathcal{R}=\{i\in[m]:\bar{e}_{i}\geq \bar{e}_{(\hat{k})}\}$.

\State \textbf{Return} The rejection set $\mathcal{R}$.
\end{algorithmic}
\end{algorithm}

The following theorem establishes the validity of Algorithm \ref{algo:derandom} for FDR control.

\begin{thm}\label{thm:multiev}
Consider Algorithm \ref{algo:derandom}, assume that every procedure $\mathcal{P}_k$ produces e-values $\{e_{j}^{(k)}:j\in[m]\}$ such that (\ref{cond:generalev}) is fulfilled, $k\in[K]$. Then Algorithm \ref{algo:derandom} controls the FDR at level $\alpha$. 
\end{thm}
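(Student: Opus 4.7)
The plan is to reduce Theorem \ref{thm:multiev} to the e-BH guarantee already invoked in Section \ref{subsec:fdr}. The algorithm's output $\mathcal{R}$ is nothing but the e-BH rejection set applied to the aggregated values $\{\bar{e}_i : i\in[m]\}$, so it suffices to verify that these aggregated values form a valid collection of generalized e-values in the sense of \eqref{cond:generalev}, i.e.\ that
\[
\EE\Bigl\{\textstyle\sum_{j \in \mathcal{H}_0} \bar{e}_j\Bigr\} \le m.
\]
Once this is established, the conclusion follows immediately from the e-BH FDR-control theorem of \citet{wang22ev}, which was already used to prove Theorem \ref{thm:fdr}.

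First, I would unpack the definition $\bar{e}_i = \sum_{k=1}^{K} v_k\, e_i^{(k)} / \sum_{k=1}^{K} v_k$ and, using linearity of expectation together with the fact that the weights $v_k$ are deterministic non-negative constants specified as inputs, write
\[
\EE\Bigl\{\textstyle\sum_{j\in\mathcal{H}_0}\bar{e}_j\Bigr\} \;=\; \sum_{k=1}^{K}\frac{v_k}{\sum_{\ell=1}^{K} v_\ell}\,\EE\Bigl\{\textstyle\sum_{j\in\mathcal{H}_0} e_j^{(k)}\Bigr\}.
\]
Next, I would invoke the hypothesis that each procedure $\mathcal{P}_k$ yields valid generalized e-values satisfying \eqref{cond:generalev}, so $\EE\{\sum_{j\in\mathcal{H}_0} e_j^{(k)}\} \le m$ for every $k$. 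Substituting this bound into the display and recognizing that the convex coefficients $v_k/\sum_\ell v_\ell$ sum to one yields the required inequality $\EE\{\sum_{j\in\mathcal{H}_0}\bar{e}_j\}\le m$.

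Finally, I would note that Steps 4--5 of Algorithm \ref{algo:derandom} are exactly the e-BH procedure applied to the inputs $\{\bar{e}_i\}$ at level $\alpha$: the threshold $\hat k = \max\{i: i\bar{e}_{(i)}/m \ge 1/\alpha\}$ and the rejection set $\mathcal{R} = \{i: \bar{e}_i \ge \bar{e}_{(\hat k)}\}$ coincide with the formulation of e-BH recalled in Section \ref{subsec:fdr}. Therefore the FDR-control theorem for e-BH (\citealp{wang22ev}), combined with the generalized-e-value property just verified, delivers $\mathrm{FDR}(\mathcal{R}) \le \alpha$.

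The main obstacle is minor but worth flagging: the derivation requires that the weights $v_k$ be either deterministic or at least independent of the data used to compute the $e_j^{(k)}$, so they can be pulled outside the expectation. Provided the weights are treated as pre-specified tuning inputs, which is consistent with how they appear in the algorithm's \textbf{Input} list, there is no further technical hurdle, and no additional assumptions about dependence between the different procedures $\mathcal{P}_1,\dots,\mathcal{P}_K$ or their respective datasets $(\mathbf{T}^{(k)},\tilde{\mathbf{T}}^{(k)},\mathbf{T}^{tr(k)},\mathbf{S}^{(k)})$ are required, since the argument relies only on marginal expectations of each e-value collection.
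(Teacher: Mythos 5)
Your argument is exactly the paper's proof: reduce the problem to the e-BH guarantee by showing the weighted average $\bar{e}_i$ remains a valid generalized e-value via linearity of expectation and the convexity of the weights $v_k/\sum_\ell v_\ell$. The caveat you flag about the weights being pre-specified is a sensible clarification, but it matches the algorithm's input specification and does not represent any departure from the paper's reasoning.
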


Algorithm \ref{algo:derandom} introduces a highly flexible framework with significant ramifications across various scenarios. Firstly, in the primary scenario that motivates this framework, the prototype algorithm is implemented to each distinct auxiliary sequence $\mathbf{S}^{(k)}=(S_{i}^{(k)}:i\in[m])$, while the test dataset $\mathbf{T}$ and calibration dataset $\Tilde{\mathbf{T}}$ remain the same across different implementations. The utilization of average e-values to integrate diverse types of side information into the inferential process offers a practical and intriguing perspective (cf. \citealp{banerjee2023harnessing}). Secondly, when we have access to only one test dataset $\mathbf{T}$ and one auxiliary sequence $\mathbf S$, but a substantial number of null samples are available for calibrating or training models,  Algorithm \ref{algo:derandom}  can be implemented to utilize multiple calibration sets of null samples for improving the reliability and stability. This important perspective has been embraced by recent works \citep{ren2023derandomized, bashari2023derandomized}. 
Thirdly, if we have obtained multiple sets of test data $\mathbf{T}^{(k)}$, $k\in[K]$, from $K$ different studies, Algorithm \ref{algo:derandom} may be tailored to perform global null or partial conjunction tests to aggregate the evidence across diverse studies.  
Lastly, the framework offered by Algorithm \ref{algo:derandom} opens up possibilities for the development of new integrative tools, which can take advantage of an ensemble of machine learning models \citep{liang22integrative}. 

The discussions presented in this section are preliminary and have raised various issues that warrant further exploration. These include determining the number of implementations $K$, selecting suitable $\alpha_k$ values across different implementations, assigning proper e-value weights, and tailoring existing meta-analysis methods to perform global null or partial conjunction tests. Additionally, achieving a balance between computational efficiency, statistical power, and algorithm stability is a significant topic of interest in this field.

\subsection{Connections to conformal methods}
\label{app:relation}

Let $u(\cdot)$ denote a conformity score function. The conformal p-value \citep{vovk05, bates23} calculates the standardized rank of the score associated with $H_i$ in $\mathcal{D}^{cal}$: 
\begin{equation}\label{def:conf_pv}
   \hat p_i (T_i)=\frac{ 1+|\{k\in \mathcal D^{cal}:  u({T}_{k}^0)\leq  u(T_{i})\}| }{1+|\mathcal D^{cal}|},\quad i\in[m]. 
\end{equation}
The score function $ u(\cdot)$, which is required to satisfy certain permutation-invariant properties, can be determined \emph{a priori} \citep{mary22semi,zhao23rank,jin23selection}, learned from training data $\mathbf{T}^{tr}$ \citep{bates23}, or carefully constructed using a combination of training, calibration and test data \citep{yang21bonus,marandon22mlfdr}. 

\cite{bates23} and \cite{marandon22mlfdr} show that the null p-values $\{p_{i}:i\in\mathcal{H}_{0}\}$ defined by \eqref{def:conf_pv} are super-uniform and PRDS when the null scores $\left\{u({T}_{i}^0), i\in \mathcal D^{cal}; u(T_{j}), j\in\mathcal{H}_{0} \right\}$ are exchangeable conditional on non-null scores $\{ u(T_{j}), j\notin\mathcal{H}_{0}\}$. Therefore, following  \cite{by01}, the BH algorithm, employed with conformal p-values \eqref{def:conf_pv}, is valid for FDR control. An alternative approach to establish the FDR theory is provided by \cite{mary22semi}, which demonstrates the equivalence between the conformal BH (CBH) algorithm and the counting knockoffs algorithm \citep{weinstein17counting} that rejects $H_{i}$  if $u(T_{i})\leq\hat{t}$, where 
\begin{equation}\label{def:conf_pv_bh}
    \hat{t}=\max\left\{ t\in\{u(T_{i})\}_{i\in[m]}: Q^*(t)\equiv \frac{\frac{1}{1+|\mathcal D^{cal}|}[1+\sum_{j\in\mathcal D^{cal}}\II\{u({T}_{j}^0)\leq t\}] }{ \frac{1}{m}\sum_{j=1}^{m}\II\{u(T_{j})\leq t\}} \leq \alpha \right\}. 
\end{equation} 

Under the broader scope of conformalized multiple testing, Algorithm \ref{algo:claw} modifies the CBH algorithm \eqref{def:conf_pv_bh} in several important respects. The first notable enhancement is the incorporation of side information in the new bivariate function \(u(T_i, S_i)\), which has significant implications for both score construction and theoretical analysis. The heterogeneity in \(S_i\) leads to conformity scores that are not jointly exchangeable, rendering the conformal p-values prescribed in \eqref{def:conf_pv} invalid and necessitating the development of new principles, methodologies, and theories.
The second adjustment involves setting \(|\mathcal{D}^{cal}| = m\) and removing the factor \(\frac{m}{1+m}\) in \(Q^*(t)\). While this adjustment incurs a minor loss of power, it appears to be indispensable for establishing FDR theory in finite samples.
The third adjustment entails substituting \(\sum_{j=1}^{m}\II\{u_i \leq t\}\) with \(\sum_{j=1}^{m}\II\{u_i \leq t \wedge \tilde{u}_i\}\), thereby aligning the FDP process with the Selective SeqStep+ algorithm \citep{barber15knockoff}. The last two modifications transform the FDP process within the CBH framework into a new mirror process that eliminates the need for jointly exchangeable scores. Furthermore, the thresholding rule \(\II\{u_i \leq t \wedge \tilde{u}_i\}\), which leverages both testing and calibration scores, allows the CLAW framework to adapt to varying sparsity levels. This adaptability is particularly advantageous in scenarios where \(\pi_s\) varies with the covariate value \(s\), as it sidesteps the challenging task of estimating the null proportion. Additional explanations and illustrations can be found in Sections \ref{app:cf}-\ref{app:kn} of the Supplement.

\section{Experiments with Simulated Data}
\label{sec:simu}

This section presents simulation results that compare CLAW with existing methods. Section \ref{subsec:group} (\ref{subsec:spatial}) investigates the where $S_i$ is discrete (continuous). Supplementary numerical results are provided in Appendix \ref{app:simu}, which include additional comparisons involving multivariate data, random covariates, and correlated data. The reported results are obtained by averaging over 200 replicated experiments, with the nominal FDR level set at $\alpha=0.05$.

\subsection{Multiple testing with discrete covariates (grouped hypotheses)}
\label{subsec:group}


Consider a scenario where $S_i$ takes two values, $\{1,2\}$, dividing the testing units into two distinct groups. The data is generated using a hierarchical approach, conditioned on $S_i$, as follows: 
\begin{equation*}
\PP(\theta_{i}=1|S_{i}=k)=\pi_{k}, \quad T_{i}|(\theta_{i},S_{i}=k) \overset{\text{ind.}}{\sim} (1-\theta_{i})\mathcal{N}(0,1)+\theta_{i}F_{1k}, \quad k\in\{1,2\}, \quad i\in[m]. 
\end{equation*}
Let $m_{k}=|\{i:S_{i}=k\}|$. Our experiments have considered three settings:
\begin{enumerate}[1. ]
    \item $m_{1}=3000$, $\pi_{1}=0.2$, $F_{11}=\mathcal{N}(\mu,1)$, $\mu$ varies; $m_{2}=1500$, $\pi_{2}=0.1$, $F_{12}=\mathcal{N}(-2,0.5^2)$. 
    \item $m_{1}=3000$, $\pi_{1}=0.2$, $F_{11}=\mathcal{N}(2,1)$; $m_{2}=1500$, $\pi_{2}=p$ with $p$ varying, $F_{12}=\mathcal{N}(-4,1)$. 
    \item  $m_{1}=3000$, $\pi_{1}=0.2$, $F_{11}=\mathcal{N}(2,0.5^2)$; $m_{2}$ varies, $\pi_{2}=0.1$, $F_{12}=\mathcal{N}(-4,1)$.
\end{enumerate}

We compare CLAW, implemented by running Algorithm \ref{algo:claw_clfdr} using weights $w_{ij}=\II\{S_{i}=S_{j}\}$, with the following methods: 
\textbf{PooledBH}, which ignores $S_i$ and applies BH to all p-values; 
\textbf{SeparateBH}, which applies BH at $\alpha$ for both groups, and then outputs the rejection set by combining the rejections from both groups; 
\textbf{PooledAD}, which ignores $S_i$ and applies AdaDetect with kernel estimators; 
\textbf{SeparateAD}, which applies AdaDetect with kernel estimators at $\alpha$ for both groups, and then outputs the rejection set by combining the rejections from both groups;
\textbf{IHW}, which implement IHW \citep{Ignatiadis21IHW} by the R package \texttt{IHW}. 
CLAW, PooledAD, and SeparateAD all utilize the same calibration data, which are simulated as $\Tilde{T}_{i}\overset{i.i.d.}{\sim}\mathcal{N}(0,1)$ for $i\in[m]$. When estimating the mixture density, a Gaussian kernel with the bandwidth chosen by Silverman's rule \citep{silverman18density} is employed. We vary the parameters $\mu$, $\pi$ and $m_2$, and present in Figure \ref{fig:group1} the corresponding levels of FDR and AP.

\begin{figure}[!htbp]
    \centering
    \includegraphics[width=0.9\linewidth,height=4in]{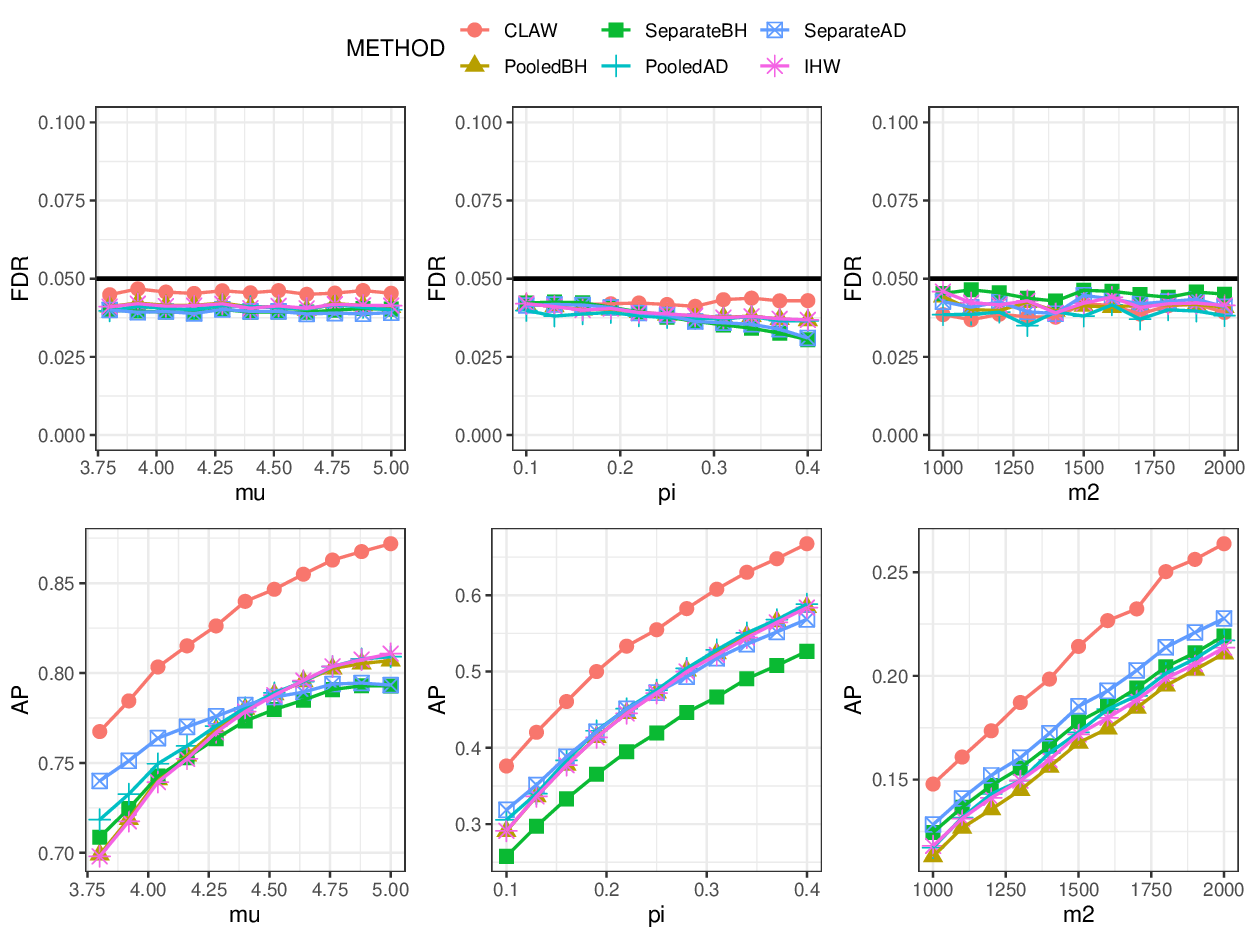}
    \caption{FDR and AP comparison for grouped multiple testing at $\alpha=0.05$. The left, middle and right columns are corresponding to setting 1, 2 and 3, respectively.}\label{fig:group1}
\end{figure}

The following patterns regarding the strengths and limitations of different methods can be observed. Firstly, all methods control the FDR below the nominal level, with CLAW consistently displaying the highest AP across all settings. Secondly, SeparateAD (PooledAD) outperforms SeparateBH (PooledBH), which can be attributed to the  utilization of the test data in constructing more efficient scores. Thirdly, the comparison between the pooled and separate strategies does not yield a definitive conclusion. Specifically, the bottom left panel demonstrates that  SeparateAD outperforms PooledAD for small values of $\mu$, while the opposite is observed for large values. This discrepancy arises due to the combined impact of (a) the ranking within the groups and (b) the allocation of ``$\alpha$-wealth'' across the groups. Similar patterns are observed with the SeparateBH and PooledBH methods. These patterns highlight the superiority of CLAW, as it constructs effective scores via locally adaptive weighting, which addresses both ranking and $\alpha$-wealth allocation issues within a unified framework.


\subsection{Multiple testing with ordinal covariates}
\label{subsec:spatial}


Consider a scenario where the primary statistics $T_i$ are observed along an ordered sequence. We utilize the natural order in the sequence $S_i=i$ as the covariate. The case where $S_i$ encodes the location of higher-dimensional spatial regions is considered in Section \ref{subsec:spatial-2D} of the supplement.   

The data are generated following a hierarchical model specified as follows: 
\begin{equation*}
\PP(\theta_{i}=1|S_{i}=s)=\pi_{s}, \quad  T_{i}|(\theta_{i},S_{i}=s) \overset{\text{ind.}}{\sim} (1-\theta_{i})\mathcal{N}(0,1)+\theta_{i}F_{1s}, 
\end{equation*}
where $i=1, \cdots, 3000$. We explore the following three settings in which the data exhibit a ``smoothness pattern’’, characterized by the similarity in values between $\pi_i$ and $\pi_j$, as well as between $F_{1i}$ and $F_{1j}$, when $i$ and $j$ are in close proximity.
\begin{enumerate}[1. ]
    \item  $F_{1s}\equiv F_{1}=\mathcal{N}(\mu,1)$; $\pi_{s}=0.6$ for $s\in[201,350]\cup[1501,1650]$, $\pi_{s}=0.3$ for $s\in[801,1000]\cup[2101,2300]$, and $\pi_{s}=0.02$ otherwise. 
    \item  $F_{1s}=\mathcal{N}(-2.5,1)$ if $s\leq1500$ and $F_{1s}=\mathcal{N}(3.6,1.5^2)$ otherwise; $\pi_{s}=2\pi$ for $s\in[201,350]\cup[1501,1650]$, $\pi_{s}=\pi$ for $s\in[801,1000]\cup[2101,2300]$, and $\pi_{s}=0.02$ otherwise. 
    \item  $F_{1s}=\mathcal{N}(\mu+0.15\sin(0.6s),1)$; $\pi_{s}=0.4(1+\sin(0.02s))$ for $s\in[201,500]\cup[801,1100]\cup[1501,1800]\cup[2101,2400]$ and $\pi_{s}=0.02$ otherwise. 
\end{enumerate}

One possible idea for testing hypotheses along an ordered sequence is to partition the sequence into multiple groups. However, this approach results in significant loss of information, and determining the optimal number of groups and optimizing the grouping procedure remain a complicated issue. Therefore, we exclude IHW, SeparateBH and SeparateAD from comparison as they both involve a potentially intractable grouping step. Instead, we expand the comparison to include several other well-established methods for covariate-assisted multiple testing, namely AdaPT \citep{lei18adapt}, SABHA \citep{SABHA19} and LAWS \citep{LAWS22}. In our simulation studies, AdaPT is implemented using the R package \texttt{adaptMT}. LAWS and SABHA are implemented using estimated proportions given by \eqref{conventional_est}, while the proportions in CLAW are estimated using \eqref{pi-ds}. These estimators employ the same weights $w_{ij}=\phi(|S_{i}-S_{j}|/150)$, where $\phi(x)$ represents the density function of a standard Gaussian variable. Both CLAW and AdaDetect utilize the same calibration data, generated as $\Tilde{T}_{i}\overset{i.i.d.}{\sim}\mathcal{N}(0,1)$ for $i\in[3000]$. 

\begin{figure}[!htbp]
    \centering
    \includegraphics[width=0.9\linewidth, height=4in]{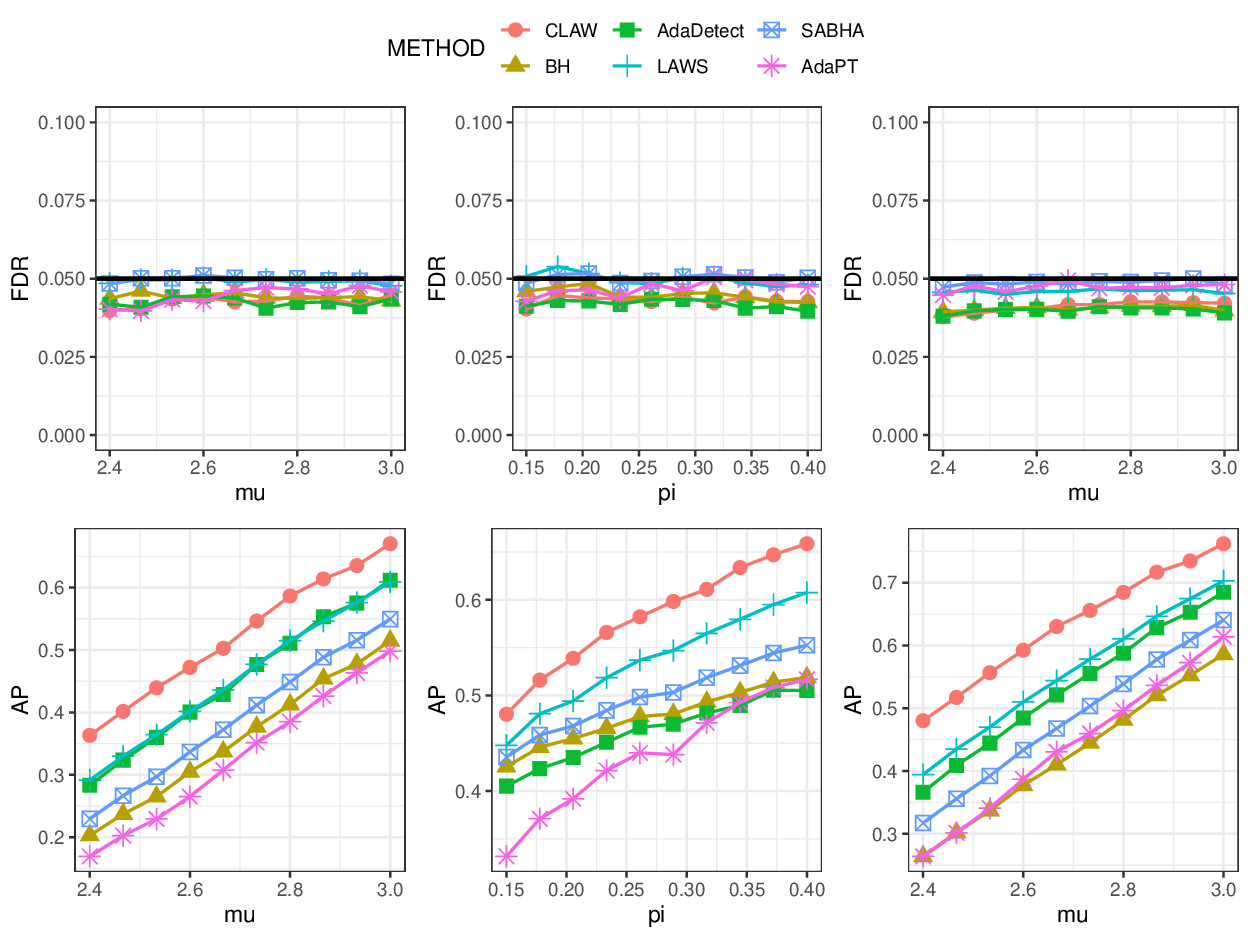}
    \caption{FDR and AP comparison for multiple testing for ordered sequences at $\alpha=0.05$. The left, middle and right columns correspond to settings 1-3, respectively.}\label{fig:spatial_1D}
\end{figure}

The simulation results from Settings 1 to 3 are summarized in corresponding columns of Figure \ref{fig:spatial_1D}, revealing several important patterns. Firstly, all methods control the FDR reasonably well. However, the FDR levels of LAWS and SABHA sometimes exceed the nominal level of $\alpha=0.05$; this is consistent with the theory as both methods only offer asymptotic control of the FDR. Secondly, the disparity in power between methods without side information, such as AdaDetect and BH, and methods that adapt to the side information, such as CLAW, AdaPT, LAWS and SABHA, becomes more pronounced as the side information becomes more informative, as can be seen in the bottom middle panel. Lastly, CLAW is superior in comparison to LAWS and SABHA in terms of both finite sample validity and higher statistical power. The power gain stems from CLAW's capability of incorporating the structural information encoded in both $\pi_s$ and $f_{1s}$. By contrast, LAWS and SABHA merely leverage the sparsity structure captured by $\pi_s$. Section \ref{app:simu} of the Supplement provides further numerical illustration of the effectiveness of CLAW in a broad range of settings where existing methods may fail to control the FDR.

\section{Experiments with Real Data}
\label{sec:application}
\label{subsec:mnist}


\subsection{Application to MNIST dataset}


We consider the novelty detection task based on the MNIST dataset \citep{lecun2010mnist}, a benchmark dataset consisting of handwritten digit images widely used for evaluating and comparing various image classification algorithms. It consists of 70,000 grayscale images, each representing a handwritten digit from 0 to 9. The images are formatted as 28$\times$28 pixel matrices, with each pixel intensity value ranging from 0 (white) to 255 (black). In our experiment, we design two settings with grouped images: the images labeled with the digit ``0'' are regarded as the inliers, while the images labeled with other digits correspond to outliers. 
\begin{enumerate}[1. ] 
    \item $\mathcal D^{\rm test1}$: Group 1 has 980 ``0''s and 120 ``6''s; Group 2 has 1500 ``0''s and 500 ``9''s.
    \item $\mathcal D^{\rm test2}$: Group 1 has 1080 ``0''s and 120 ``8''s, and Group 2 has 1500 ``0''s and 500 ``6''s.
\end{enumerate}
In above datasets, the covariate $S_i\in\{1,2\}$ represents the group memberships. Due to the high-dimensionality of the image data, accurate estimation of the working model \eqref{model:mixture} is challenging, where $\pi_{S_i}$ and $F_{1}(\cdot|S_i)$ exhibit variations across the different groups.  Additionally, precise knowledge of the null distribution is lacking. Instead, after sampling the test data from the MNIST dataset, the remaining instances of the digit ``0'' are gathered to form a dataset comprising null samples, which is denoted as $\mathbf{T}^0$. 

We apply semi-supervised CLAW, PooledAD, and SeparateAD (described in Section \ref{app:subsub-pu-group}) for detecting outliers with FDR control. Two strategies have been employed to implement these methods. The first strategy involves constructing scores using kernel density (KD) estimation methods, where the densities of both the training data and the mixture of test data and calibration data are estimated separately. The density ratio is then calculated by dividing these estimated densities. The second strategy involves the direct estimation of the density ratio using random forests (RF), as done in \citet{marandon22mlfdr}. We split the null dataset $\mathbf{T}^0$ into the calibration data $\Tilde{\mathbf{T}}$, which has the same size as the test data, and the training data $\mathbf{T}^{tr}$. In all the methods, the same calibration dataset has been utilized. 

Table \ref{table1} of the Supplement provides a summary of the experimental results obtained from Settings 1 and 2, with the FDR level set at $\alpha=0.05$. The numbers of discoveries and true discoveries (in parentheses) are presented for each method. Upon direct calculations, we can see that all methods control the FDR below the nominal level. It is evident that the KD-based methods fail to yield any discoveries. This can be attributed to the limitations of kernel methods in handling high-dimensional scenarios. In contrast, the scores generated through RF prove to be effective. Employing the RF-based scores, PooledAD outperforms SeparateAD in Setting 1, while underperforming SeparateAD in Setting 2. Furthermore, CLAW dominates AdaDetect, including both PooledAD and SeparateAD, in both Settings 1 and 2.

\subsection{Application to proteomics data}

We illustrate the application of the CLAW procedure using a proteomics dataset previously analyzed by \citet{lei18adapt} and \citet{Ignatiadis21IHW} with the AdaPT and IHW methods, respectively. The dataset, collected by \citet{dephoure2012hyper}, comprises temporal abundance profiles for 2,666 yeast proteins obtained from a quantitative mass spectrometry experiment conducted under two treatment conditions: rapamycin and dimethyl sulfoxide (DMSO). In prior analyses, p-values were first computed using Welch’s t-test, followed by the application of multiple testing procedures to identify proteins exhibiting differential abundance in yeast cells between the two conditions. Following the strategy outlined in \citet{lei18adapt}, we incorporate the logarithm of the total number of peptides as side information (covariate). The inclusion of this covariate has been shown to provide critical insights that enhance both the power and interpretability of FDR analyses. 

In the analyses conducted by \citet{Ignatiadis21IHW} and \citet{lei18adapt}, the following assumptions were made: (a) the null p-values are independent and follow a uniform distribution \(U(0,1)\); and (b) the null p-values are independent of the covariates. These assumptions appear to be well-suited for the proteomics dataset, ensuring that the joint exchangeability conditions specified in \eqref{jointexch-covariate} hold, which also implies that the condition in \eqref{jointexch} is satisfied. Consequently, we have implemented AdaDetect (which requires \eqref{jointexch}) and CLAW (which requires \eqref{jointexch-covariate}) for this analysis. The calibration data for both conformal methods are obtained by drawing iid samples from \(U(0, 1)\). Given that both the test statistics and covariates are continuous, we applied CLAW with the augmentation strategy described in Appendix \ref{subsub-pu-aug}. We compare the number of discoveries across different methods, including CLAW, BH, AdaDetect, LAWS, SABHA, AdaPT, and IHW, under a grid of nominal FDR levels \(\alpha = 0.045, 0.05, 0.055, 0.06\). 

The results are summarized in Figure \ref{fig:yeast} of Appendix \ref{app:yeast}. We can see that significant power enhancement can be achieved through the employment of methods such as AdaPT and CLAW, which effectively incorporate side information. Notably, by constructing conformity scores to emulate the Clfdr statistic, CLAW exhibits the greatest power among all methods considered. These observed patterns are consistent with our findings from the simulation studies.

\section{Discussion}\label{sec:discuss}

We conclude this article by highlighting several open issues for future exploration. First, Algorithm \ref{algo:claw} requires a minimum of \( m \) null samples for effective implementation. However, challenges may arise when null samples are limited or unavailable. Addressing these challenges may necessitate investigating how the test data can be leveraged to estimate the empirical null distribution, as advocated by \citet{efron04null}. Secondly, CLAW employs a mirror process that is restricted to testing sharp null hypotheses. A significant area for future research involves designing new FDP processes that utilize more efficient conformity scores to effectively handle composite nulls. Thirdly, Algorithm \ref{algo:claw_clfdr} employs a working model inspired by the NEB framework, which demonstrates effectiveness in independent settings with low-dimensional covariates. Nonetheless, when faced with complex data-generating processes that involve dependencies and higher-dimensional covariates, it becomes crucial to explore alternative methodologies for constructing powerful score functions. Fourthly, as low statistical power can indicate the trustworthiness of a predictive model, it is of great interest to enhance the conformal framework to incorporate power analysis. Particularly, in critical application areas such as aviation, medical screening, and cybersecurity, the risks associated with false negatives can greatly exceed those of false positives. Consequently, statistical guarantees on power or the missed discovery rate, as discussed in \cite{abraham2023sharp}, provide an important direction for future research. Fifthly, Algorithm \ref{algo:derandom} offers a flexible framework for integrating auxiliary data from diverse sources by leveraging the properties of generalized e-values. An intriguing avenue for future research involves the dynamic allocation of \(\alpha_k\) to prioritize the most informative covariates from specific sources. Finally, there is significant interest in extending the CLAW framework to tackle closely related problems. Key tasks include developing innovative methods for online testing, selective classification, and set or interval prediction in scenarios where side information is available.

\section*{Acknowledgement}

We are grateful to the Associate Editor and the referees for their constructive feedback, which has greatly enhanced the clarity, presentation, and theoretical framework of our manuscript. Special appreciation also goes to Matteo Sesia and Asaf Weinstein for their valuable suggestions.


\setlength{\bibsep}{0pt}

\bibliographystyle{chicago}

\bibliography{reference}

\begin{thebibliography}{}

\bibitem[\protect\citeauthoryear{Abraham, Castillo, and Roquain}{Abraham et~al.}{2024}]{abraham2023sharp}
Abraham, K., I.~Castillo, and {\'E}.~Roquain (2024).
\newblock {Sharp multiple testing boundary for sparse sequences}.
\newblock {\em The Annals of Statistics\/}~{\em 52\/}(4), 1564 -- 1591.

\bibitem[\protect\citeauthoryear{Banerjee, Gang, and He}{Banerjee et~al.}{2023}]{banerjee2023harnessing}
Banerjee, T., B.~Gang, and J.~He (2023).
\newblock Harnessing the collective wisdom: Fusion learning using decision sequences from diverse sources.
\newblock arXiv preprint arXiv:2308.11026.

\bibitem[\protect\citeauthoryear{Barber and Cand{\`e}s}{Barber and Cand{\`e}s}{2015}]{barber15knockoff}
Barber, R.~F. and E.~J. Cand{\`e}s (2015).
\newblock {Controlling the false discovery rate via knockoffs}.
\newblock {\em The Annals of Statistics\/}~{\em 43\/}(5), 2055 -- 2085.

\bibitem[\protect\citeauthoryear{Barber and Ramdas}{Barber and Ramdas}{2017}]{pfilter17}
Barber, R.~F. and A.~Ramdas (2017, 11).
\newblock {The p-filter: Multilayer False Discovery Rate Control for Grouped Hypotheses}.
\newblock {\em Journal of the Royal Statistical Society Series B: Statistical Methodology\/}~{\em 79\/}(4), 1247--1268.

\bibitem[\protect\citeauthoryear{Bashari, Epstein, Romano, and Sesia}{Bashari et~al.}{2023}]{bashari2023derandomized}
Bashari, M., A.~Epstein, Y.~Romano, and M.~Sesia (2023).
\newblock Derandomized novelty detection with fdr control via conformal e-values.
\newblock arXiv preprint arXiv:2302.07294.

\bibitem[\protect\citeauthoryear{Basu, Cai, Das, and Sun}{Basu et~al.}{2018}]{basu2018}
Basu, P., T.~T. Cai, K.~Das, and W.~Sun (2018).
\newblock Weighted false discovery rate control in large-scale multiple testing.
\newblock {\em Journal of the American Statistical Association\/}~{\em 113\/}(523), 1172--1183.

\bibitem[\protect\citeauthoryear{Bates, Cand{\`e}s, Lei, Romano, and Sesia}{Bates et~al.}{2023}]{bates23}
Bates, S., E.~Cand{\`e}s, L.~Lei, Y.~Romano, and M.~Sesia (2023).
\newblock {Testing for outliers with conformal p-values}.
\newblock {\em The Annals of Statistics\/}~{\em 51\/}(1), 149 -- 178.

\bibitem[\protect\citeauthoryear{Bekker and Davis}{Bekker and Davis}{2020}]{PU_ML2020}
Bekker, J. and J.~Davis (2020).
\newblock Learning from positive and unlabeled data: A survey.
\newblock {\em Machine Learning\/}~{\em 109}, 719--760.

\bibitem[\protect\citeauthoryear{Benjamini and Heller}{Benjamini and Heller}{2007}]{benjamini07spatial}
Benjamini, Y. and R.~Heller (2007).
\newblock False discovery rates for spatial signals.
\newblock {\em Journal of the American Statistical Association\/}~{\em 102\/}(480), 1272--1281.

\bibitem[\protect\citeauthoryear{Benjamini and Hochberg}{Benjamini and Hochberg}{1995}]{bh95}
Benjamini, Y. and Y.~Hochberg (1995, 12).
\newblock {Controlling the False Discovery Rate: A Practical and Powerful Approach to Multiple Testing}.
\newblock {\em Journal of the Royal Statistical Society: Series B (Methodological)\/}~{\em 57\/}(1), 289--300.

\bibitem[\protect\citeauthoryear{Benjamini and Hochberg}{Benjamini and Hochberg}{1997}]{bh97}
Benjamini, Y. and Y.~Hochberg (1997).
\newblock Multiple hypotheses testing with weights.
\newblock {\em Scandinavian Journal of Statistics\/}~{\em 24\/}(3), 407--418.

\bibitem[\protect\citeauthoryear{Benjamini and Yekutieli}{Benjamini and Yekutieli}{2001}]{by01}
Benjamini, Y. and D.~Yekutieli (2001).
\newblock {The control of the false discovery rate in multiple testing under dependency}.
\newblock {\em The Annals of Statistics\/}~{\em 29\/}(4), 1165 -- 1188.

\bibitem[\protect\citeauthoryear{Blanchard, Lee, and Scott}{Blanchard et~al.}{2010}]{Blaetal10}
Blanchard, G., G.~Lee, and C.~Scott (2010).
\newblock Semi-supervised novelty detection.
\newblock {\em Journal of Machine Learning Research\/}~{\em 11\/}(99), 2973--3009.

\bibitem[\protect\citeauthoryear{Bourgon, Gentleman, and Huber}{Bourgon et~al.}{2010}]{huber10}
Bourgon, R., R.~Gentleman, and W.~Huber (2010).
\newblock Independent filtering increases detection power for high-throughput experiments.
\newblock {\em Proceedings of the National Academy of Sciences\/}~{\em 107\/}(21), 9546--9551.

\bibitem[\protect\citeauthoryear{Cai and Sun}{Cai and Sun}{2009}]{cs09}
Cai, T.~T. and W.~Sun (2009).
\newblock Simultaneous testing of grouped hypotheses: Finding needles in multiple haystacks.
\newblock {\em Journal of the American Statistical Association\/}~{\em 104\/}(488), 1467--1481.

\bibitem[\protect\citeauthoryear{Cai, Sun, and Wang}{Cai et~al.}{2019}]{CARS}
Cai, T.~T., W.~Sun, and W.~Wang (2019, 03).
\newblock {Covariate-Assisted Ranking and Screening for Large-Scale Two-Sample Inference}.
\newblock {\em Journal of the Royal Statistical Society Series B: Statistical Methodology\/}~{\em 81\/}(2), 187--234.

\bibitem[\protect\citeauthoryear{Cai, Sun, and Xia}{Cai et~al.}{2022}]{LAWS22}
Cai, T.~T., W.~Sun, and Y.~Xia (2022).
\newblock Laws: A locally adaptive weighting and screening approach to spatial multiple testing.
\newblock {\em Journal of the American Statistical Association\/}~{\em 117\/}(539), 1370--1383.

\bibitem[\protect\citeauthoryear{Dephoure and Gygi}{Dephoure and Gygi}{2012}]{dephoure2012hyper}
Dephoure, N. and S.~P. Gygi (2012).
\newblock Hyperplexing: a method for higher-order multiplexed quantitative proteomics provides a map of the dynamic response to rapamycin in yeast.
\newblock {\em Science signaling\/}~{\em 5\/}(217), rs2--rs2.

\bibitem[\protect\citeauthoryear{Dobriban, Fortney, Kim, and Owen}{Dobriban et~al.}{2015}]{DFKO15}
Dobriban, E., K.~Fortney, S.~K. Kim, and A.~B. Owen (2015, 11).
\newblock {Optimal multiple testing under a Gaussian prior on the effect sizes}.
\newblock {\em Biometrika\/}~{\em 102\/}(4), 753--766.

\bibitem[\protect\citeauthoryear{Du and Zhang}{Du and Zhang}{2014}]{du14}
Du, L. and C.~Zhang (2014).
\newblock {Single-index modulated multiple testing}.
\newblock {\em The Annals of Statistics\/}~{\em 42\/}(4), 1262 -- 1311.

\bibitem[\protect\citeauthoryear{du~Plessis, Niu, and Sugiyama}{du~Plessis et~al.}{2014}]{PU_NIPS2014}
du~Plessis, M.~C., G.~Niu, and M.~Sugiyama (2014).
\newblock Analysis of learning from positive and unlabeled data.
\newblock In Z.~Ghahramani, M.~Welling, C.~Cortes, N.~Lawrence, and K.~Weinberger (Eds.), {\em Advances in Neural Information Processing Systems}, Volume~27. Curran Associates, Inc.

\bibitem[\protect\citeauthoryear{Durand}{Durand}{2019}]{durand19}
Durand, G. (2019).
\newblock {Adaptive $p$-value weighting with power optimality}.
\newblock {\em Electronic Journal of Statistics\/}~{\em 13\/}(2), 3336 -- 3385.

\bibitem[\protect\citeauthoryear{Efron}{Efron}{2004}]{efron04null}
Efron, B. (2004).
\newblock Large-scale simultaneous hypothesis testing.
\newblock {\em Journal of the American Statistical Association\/}~{\em 99\/}(465), 96--104.

\bibitem[\protect\citeauthoryear{Efron}{Efron}{2008}]{efron08}
Efron, B. (2008).
\newblock {Simultaneous inference: When should hypothesis testing problems be combined?}
\newblock {\em The Annals of Applied Statistics\/}~{\em 2\/}(1), 197 -- 223.

\bibitem[\protect\citeauthoryear{Efron, Tibshirani, Storey, and Tusher}{Efron et~al.}{2001}]{efron01}
Efron, B., R.~Tibshirani, J.~D. Storey, and V.~Tusher (2001).
\newblock Empirical bayes analysis of a microarray experiment.
\newblock {\em Journal of the American Statistical Association\/}~{\em 96\/}(456), 1151--1160.

\bibitem[\protect\citeauthoryear{Fan and Yao}{Fan and Yao}{2003}]{fan03nonlinear}
Fan, J. and Q.~Yao (2003).
\newblock {\em Nonlinear time series: nonparametric and parametric methods}, Volume~20.
\newblock Springer.

\bibitem[\protect\citeauthoryear{Ferkingstad, Frigessi, Rue, Thorleifsson, and Kong}{Ferkingstad et~al.}{2008}]{Egil08}
Ferkingstad, E., A.~Frigessi, H.~Rue, G.~Thorleifsson, and A.~Kong (2008).
\newblock Unsupervised empirical bayesian multiple testing with external covariates.
\newblock {\em Annals of Applied Statistics\/}~{\em 2\/}(2), 714--735.

\bibitem[\protect\citeauthoryear{Fu, Gang, James, and Sun}{Fu et~al.}{2022}]{HART}
Fu, L., B.~Gang, G.~M. James, and W.~Sun (2022).
\newblock Heteroscedasticity-adjusted ranking and thresholding for large-scale multiple testing.
\newblock {\em Journal of the American Statistical Association\/}~{\em 117\/}(538), 1028--1040.

\bibitem[\protect\citeauthoryear{Gang, Fu, James, and Sun}{Gang et~al.}{2023}]{gang2023ranking}
Gang, B., L.~Fu, G.~James, and W.~Sun (2023).
\newblock Ranking and selection in large-scale inference of heteroscedastic units.
\newblock arXiv preprint arXiv:2306.08979.

\bibitem[\protect\citeauthoryear{Gao and Zhao}{Gao and Zhao}{2023}]{zhao23rank}
Gao, Z. and Q.~Zhao (2023).
\newblock Simultaneous hypothesis testing using internal negative controls with an application to proteomics.
\newblock arXiv preprint arXiv:2303.01552.

\bibitem[\protect\citeauthoryear{Genovese, Roeder, and Wasserman}{Genovese et~al.}{2006}]{GRW06}
Genovese, C.~R., K.~Roeder, and L.~Wasserman (2006, 09).
\newblock {False discovery control with p-value weighting}.
\newblock {\em Biometrika\/}~{\em 93\/}(3), 509--524.

\bibitem[\protect\citeauthoryear{Goeman and Mansmann}{Goeman and Mansmann}{2008}]{goeman08}
Goeman, J.~J. and U.~Mansmann (2008, 01).
\newblock {Multiple testing on the directed acyclic graph of gene ontology}.
\newblock {\em Bioinformatics\/}~{\em 24\/}(4), 537--544.

\bibitem[\protect\citeauthoryear{Goldenshluger and Lepski}{Goldenshluger and Lepski}{2011}]{gl11bandwidth}
Goldenshluger, A. and O.~Lepski (2011).
\newblock {Bandwidth selection in kernel density estimation: Oracle inequalities and adaptive minimax optimality}.
\newblock {\em The Annals of Statistics\/}~{\em 39\/}(3), 1608 -- 1632.

\bibitem[\protect\citeauthoryear{Gu and Koenker}{Gu and Koenker}{2023}]{GuKoe23}
Gu, J. and R.~Koenker (2023).
\newblock Invidious comparisons: Ranking and selection as compound decisions.
\newblock {\em Econometrica\/}~{\em 91\/}(1), 1--41.

\bibitem[\protect\citeauthoryear{Hemerik, Goeman, and Finos}{Hemerik et~al.}{2020}]{goeman20}
Hemerik, J., J.~J. Goeman, and L.~Finos (2020, 05).
\newblock {Robust Testing in Generalized Linear Models by Sign Flipping Score Contributions}.
\newblock {\em Journal of the Royal Statistical Society Series B: Statistical Methodology\/}~{\em 82\/}(3), 841--864.

\bibitem[\protect\citeauthoryear{Ignatiadis and Huber}{Ignatiadis and Huber}{2021}]{Ignatiadis21IHW}
Ignatiadis, N. and W.~Huber (2021, 08).
\newblock {Covariate Powered Cross-Weighted Multiple Testing}.
\newblock {\em Journal of the Royal Statistical Society Series B: Statistical Methodology\/}~{\em 83\/}(4), 720--751.

\bibitem[\protect\citeauthoryear{Ignatiadis, Klaus, Zaugg, and Huber}{Ignatiadis et~al.}{2016}]{huber16}
Ignatiadis, N., B.~Klaus, J.~B. Zaugg, and W.~Huber (2016).
\newblock Data-driven hypothesis weighting increases detection power in genome-scale multiple testing.
\newblock {\em Nature Methods\/}~{\em 13\/}(7), 577--580.

\bibitem[\protect\citeauthoryear{Jin and Candès}{Jin and Candès}{2023}]{jin23selection}
Jin, Y. and E.~J. Candès (2023).
\newblock Selection by prediction with conformal p-values.
\newblock arXiv preprint arXiv:2210.01408.

\bibitem[\protect\citeauthoryear{LeCun, Cortes, Burges, et~al.}{LeCun et~al.}{2010}]{lecun2010mnist}
LeCun, Y., C.~Cortes, C.~Burges, et~al. (2010).
\newblock Mnist handwritten digit database.
\newblock http://yann.lecun.com/exdb/mnist/.

\bibitem[\protect\citeauthoryear{Lei and Wasserman}{Lei and Wasserman}{2014}]{lei14prediction}
Lei, J. and L.~Wasserman (2014, 07).
\newblock {Distribution-free Prediction Bands for Non-parametric Regression}.
\newblock {\em Journal of the Royal Statistical Society Series B: Statistical Methodology\/}~{\em 76\/}(1), 71--96.

\bibitem[\protect\citeauthoryear{Lei and Fithian}{Lei and Fithian}{2018}]{lei18adapt}
Lei, L. and W.~Fithian (2018, 06).
\newblock {AdaPT: An Interactive Procedure for Multiple Testing with Side Information}.
\newblock {\em Journal of the Royal Statistical Society Series B: Statistical Methodology\/}~{\em 80\/}(4), 649--679.

\bibitem[\protect\citeauthoryear{Lei, Ramdas, and Fithian}{Lei et~al.}{2020}]{STAR20}
Lei, L., A.~Ramdas, and W.~Fithian (2020, 07).
\newblock {A general interactive framework for false discovery rate control under structural constraints}.
\newblock {\em Biometrika\/}~{\em 108\/}(2), 253--267.

\bibitem[\protect\citeauthoryear{Leung and Sun}{Leung and Sun}{2022}]{ZAP}
Leung, D. and W.~Sun (2022).
\newblock {ZAP: Z-Value Adaptive Procedures for False Discovery Rate Control with Side Information}.
\newblock {\em Journal of the Royal Statistical Society Series B: Statistical Methodology\/}~{\em 84\/}(5), 1886--1946.

\bibitem[\protect\citeauthoryear{Li and Barber}{Li and Barber}{2019}]{SABHA19}
Li, A. and R.~F. Barber (2019, 11).
\newblock {Multiple Testing with the Structure-Adaptive Benjamini–Hochberg Algorithm}.
\newblock {\em Journal of the Royal Statistical Society Series B: Statistical Methodology\/}~{\em 81\/}(1), 45--74.

\bibitem[\protect\citeauthoryear{Li, Cai, and Li}{Li et~al.}{2023}]{licaili23transfer}
Li, S., T.~T. Cai, and H.~Li (2023).
\newblock Transfer learning in large-scale gaussian graphical models with false discovery rate control.
\newblock {\em Journal of the American Statistical Association\/}~{\em 118\/}(543), 2171--2183.

\bibitem[\protect\citeauthoryear{Liang, Sesia, and Sun}{Liang et~al.}{2024}]{liang22integrative}
Liang, Z., M.~Sesia, and W.~Sun (2024).
\newblock Integrative conformal p-values for powerful out-of-distribution testing with labeled outliers.
\newblock Journal of the Royal Statistical Society, Series B, to appear. arXiv preprint arXiv:2208.11111.

\bibitem[\protect\citeauthoryear{Marandon, Lei, Mary, and Roquain}{Marandon et~al.}{2024}]{marandon22mlfdr}
Marandon, A., L.~Lei, D.~Mary, and E.~Roquain (2024).
\newblock {Adaptive novelty detection with false discovery rate guarantee}.
\newblock {\em The Annals of Statistics\/}~{\em 52\/}(1), 157 -- 183.

\bibitem[\protect\citeauthoryear{Mary and Roquain}{Mary and Roquain}{2022}]{mary22semi}
Mary, D. and E.~Roquain (2022).
\newblock {Semi-supervised multiple testing}.
\newblock {\em Electronic Journal of Statistics\/}~{\em 16\/}(2), 4926 -- 4981.

\bibitem[\protect\citeauthoryear{Papadatos}{Papadatos}{2022}]{papadatos2022order}
Papadatos, N. (2022).
\newblock Order statistics from exchangeable random variables are always sufficient.
\newblock arXiv preprint arXiv:2206.00044.

\bibitem[\protect\citeauthoryear{Ramdas, Barber, Wainwright, and Jordan}{Ramdas et~al.}{2019}]{pfilter19}
Ramdas, A.~K., R.~F. Barber, M.~J. Wainwright, and M.~I. Jordan (2019).
\newblock {A unified treatment of multiple testing with prior knowledge using the p-filter}.
\newblock {\em The Annals of Statistics\/}~{\em 47\/}(5), 2790 -- 2821.

\bibitem[\protect\citeauthoryear{Ren and Barber}{Ren and Barber}{2023}]{ren2023derandomized}
Ren, Z. and R.~F. Barber (2023).
\newblock {Derandomised knockoffs: leveraging e-values for false discovery rate control}.
\newblock {\em Journal of the Royal Statistical Society Series B: Statistical Methodology\/}~{\em 86\/}(1), 122--154.

\bibitem[\protect\citeauthoryear{Ren and Cand{\`e}s}{Ren and Cand{\`e}s}{2023}]{ren23knockoff}
Ren, Z. and E.~Cand{\`e}s (2023).
\newblock {Knockoffs with side information}.
\newblock {\em The Annals of Applied Statistics\/}~{\em 17\/}(2), 1152 -- 1174.

\bibitem[\protect\citeauthoryear{Robbins}{Robbins}{1951}]{Rob51}
Robbins, H. (1951).
\newblock Asymptotically subminimax solutions of compound statistical decision problems.
\newblock In {\em Proceedings of the {S}econd {B}erkeley {S}ymposium on {M}athematical {S}tatistics and {P}robability, 1950}, Berkeley and Los Angeles, pp.\  131--148. University of California Press.

\bibitem[\protect\citeauthoryear{Roeder and Wasserman}{Roeder and Wasserman}{2009}]{roeder09gw}
Roeder, K. and L.~Wasserman (2009).
\newblock {Genome-Wide Significance Levels and Weighted Hypothesis Testing}.
\newblock {\em Statistical Science\/}~{\em 24\/}(4), 398 -- 413.

\bibitem[\protect\citeauthoryear{Roquain and van~de Wiel}{Roquain and van~de Wiel}{2009}]{roquain09}
Roquain, E. and M.~A. van~de Wiel (2009).
\newblock {Optimal weighting for false discovery rate control}.
\newblock {\em Electronic Journal of Statistics\/}~{\em 3\/}(none), 678 -- 711.

\bibitem[\protect\citeauthoryear{Schweder and Spjøtvoll}{Schweder and Spjøtvoll}{1982}]{SchSpj82}
Schweder, T. and E.~Spjøtvoll (1982, 12).
\newblock {Plots of P-values to evaluate many tests simultaneously}.
\newblock {\em Biometrika\/}~{\em 69\/}(3), 493--502.

\bibitem[\protect\citeauthoryear{Scott, Kelly, Smith, Zhou, and Kass}{Scott et~al.}{2015}]{scott15}
Scott, J.~G., R.~C. Kelly, M.~A. Smith, P.~Zhou, and R.~E. Kass (2015).
\newblock False discovery rate regression: An application to neural synchrony detection in primary visual cortex.
\newblock {\em Journal of the American Statistical Association\/}~{\em 110\/}(510), 459--471.

\bibitem[\protect\citeauthoryear{Sheather and Jones}{Sheather and Jones}{1991}]{sheather91}
Sheather, S.~J. and M.~C. Jones (1991, 12).
\newblock {A Reliable Data-Based Bandwidth Selection Method for Kernel Density Estimation}.
\newblock {\em Journal of the Royal Statistical Society: Series B (Methodological)\/}~{\em 53\/}(3), 683--690.

\bibitem[\protect\citeauthoryear{Silverman}{Silverman}{1986}]{silverman18density}
Silverman, B.~W. (1986).
\newblock {\em Density estimation for statistics and data analysis}.
\newblock Chapman and Hall.

\bibitem[\protect\citeauthoryear{Storey}{Storey}{2002}]{storey02}
Storey, J.~D. (2002, 08).
\newblock {A Direct Approach to False Discovery Rates}.
\newblock {\em Journal of the Royal Statistical Society Series B: Statistical Methodology\/}~{\em 64\/}(3), 479--498.

\bibitem[\protect\citeauthoryear{Sun and Cai}{Sun and Cai}{2007}]{sc07}
Sun, W. and T.~T. Cai (2007).
\newblock Oracle and adaptive compound decision rules for false discovery rate control.
\newblock {\em Journal of the American Statistical Association\/}~{\em 102\/}(479), 901--912.

\bibitem[\protect\citeauthoryear{Sun and Wei}{Sun and Wei}{2011}]{sunwei11}
Sun, W. and Z.~Wei (2011).
\newblock Multiple testing for pattern identification, with applications to microarray time-course experiments.
\newblock {\em Journal of the American Statistical Association\/}~{\em 106\/}(493), 73--88.

\bibitem[\protect\citeauthoryear{Sun and Wei}{Sun and Wei}{2015}]{sunwei15}
Sun, W. and Z.~Wei (2015).
\newblock Hierarchical recognition of sparse patterns in large-scale simultaneous inference.
\newblock {\em Biometrika\/}~{\em 102\/}(2), 267--280.

\bibitem[\protect\citeauthoryear{Tansey, Koyejo, Poldrack, and Scott}{Tansey et~al.}{2018}]{Tanetal18}
Tansey, W., O.~Koyejo, R.~A. Poldrack, and J.~G. Scott (2018).
\newblock False discovery rate smoothing.
\newblock {\em Journal of the American Statistical Association\/}~{\em 113\/}(523), 1156--1171.

\bibitem[\protect\citeauthoryear{Vovk, Gammerman, and Shafer}{Vovk et~al.}{2005}]{vovk05}
Vovk, V., A.~Gammerman, and G.~Shafer (2005).
\newblock {\em Algorithmic learning in a random world}, Volume~29.
\newblock Springer.

\bibitem[\protect\citeauthoryear{Wang and Ramdas}{Wang and Ramdas}{2022}]{wang22ev}
Wang, R. and A.~Ramdas (2022, 01).
\newblock {False Discovery Rate Control with E-values}.
\newblock {\em Journal of the Royal Statistical Society Series B: Statistical Methodology\/}~{\em 84\/}(3), 822--852.

\bibitem[\protect\citeauthoryear{Weinstein, Barber, and Candes}{Weinstein et~al.}{2017}]{weinstein17counting}
Weinstein, A., R.~Barber, and E.~Candes (2017).
\newblock A power and prediction analysis for knockoffs with lasso statistics.
\newblock arXiv preprint arXiv:1712.06465.

\bibitem[\protect\citeauthoryear{Yang, Lei, Ho, and Fithian}{Yang et~al.}{2021}]{yang21bonus}
Yang, C.-Y., L.~Lei, N.~Ho, and W.~Fithian (2021).
\newblock Bonus: Multiple multivariate testing with a data-adaptivetest statistic.
\newblock arXiv preprint arXiv:2106.15743.

\bibitem[\protect\citeauthoryear{Yekutieli}{Yekutieli}{2008}]{yekutieli08hierarchical}
Yekutieli, D. (2008).
\newblock Hierarchical false discovery rate–controlling methodology.
\newblock {\em Journal of the American Statistical Association\/}~{\em 103\/}(481), 309--316.

\bibitem[\protect\citeauthoryear{Zhao and Sun}{Zhao and Sun}{2024}]{zhao2023plis}
Zhao, Z. and W.~Sun (2024).
\newblock False discovery rate control for structured multiple testing: Asymmetric rules and conformal q-values.
\newblock {\em Journal of the American Statistical Association\/}~{\em 0\/}(0), 1--13.

\end{thebibliography}

\renewcommand{\appendixname}{Appendix~\Alph{section}}

\setcounter{equation}{0}
\renewcommand{\theequation}{\thesection.\arabic{equation}}
\counterwithin{equation}{section}
\counterwithin{figure}{section} 
\counterwithin{table}{section}

\appendix


\begin{center}\bf\Large
	Appendix
\end{center}

The supplement provides further details on methodological developments (Section \ref{app-sec:method-develop}), proofs of the primary theory (Section \ref{app:proof}), proofs of auxiliary theories (Section \ref{app:proof-4-app}), connections to existing work (Section \ref{app:discuss}), and supplementary numerical results (Section \ref{app:simu}). 

\section{Details in Methodological Developments}
\label{app-sec:method-develop}

\subsection{Derivation of conformalized estimators}
\label{app:estimate}

\subsubsection{The density estimator} \label{app:subsub-density}

We begin by discussing the process of ``conformalizing’’ the estimator of $f_{S_i}(t)$, so that it processes the swapping-invariance property with respect to $T_{j}$ and $\Tilde{T}_{j}$. In equation \eqref{conventional_est}, the estimator $\hat{f}_{S_i}^{*}(t)$ can be expressed as:
\begin{equation*}
   \hat{f}_{S_i}^{*}(t) = \sum_{j=1}^{m} \frac{w_{ij}}{\sum_{k=1}^{m}w_{ik}}K_{h}(t-T_{j}),
\end{equation*}
which represents a weighted sum of $K_{h}(t-T_{j})$. In order to create equality between the roles of $T_{j}$ and $\Tilde{T}_{j}$ in the estimator, our proposal involves introducing an additional term $K_{h}(t-\Tilde{T}_{j})$ with an equal weight to $K_{h}(t-T_{j})$. Consequently, the modified estimator is formulated as:
\begin{equation}
    \hat{f}_{S_{i}}^{**}(t)=\frac{\sum_{j=1}^{m}w_{ij}[K_{h}(t-T_{j})+K_{h}(t-\tilde{T}_{j})]}{2\sum_{j=1}^{m}w_{ij}}.\label{f**}
\end{equation}
Note that to ensure that $\hat{f}_{S_i}^{**}(t)$ remains a proper density function with an integral value of 1, we have introduced the multiplier of 2 to the denominator in  \eqref{f**}.

In practical applications, the kernel bandwidth $h$ is a critical parameter that should be either determined prior to fitting the density function, or be chosen as a data-driven quantity in a principled way. When selecting a data-driven bandwidth $h$, we recommend employing well-established techniques, such as Silverman's rule \citep{silverman18density},  Sheather-Jones method \citep{sheather91}, or  Lepski’s method \citep{gl11bandwidth} for the combined dataset \((\mathbf{T},\Tilde{\mathbf{T}})\). This ensures that $h\left((\mathbf{T},\Tilde{\mathbf{T}})_{\Pi}\right) = h\left(\mathbf{T},\Tilde{\mathbf{T}} \right)$ holds for any permutation $\Pi$ of the elements in the vector $(\mathbf{T},\Tilde{\mathbf{T}}) = (T_1, \cdots, T_m, \Tilde T_1, \cdots, \Tilde T_m)$. The permutation invariance property guarantees that the estimator \eqref{f**} is swapping invariant, and consequently the resulting conformity scores $(u_i, \Tilde u_i)$ satisfy the exchangeability condition \eqref{pwexch}.


\subsubsection{The proportion estimator}\label{app:subsub-proportion}

Before presenting the conformalized proportion estimator, we first explain the basic steps involved in deriving the estimator $\hat{\pi}^{*}_{S_i}$ \eqref{conventional_est}. Consider the quantity $m_{i}=\sum_{j=1}^{m}w_{ij}$, which represents the cumulative ``mass’’ (or ``effective number of observations’’) within the vicinity of unit $i$. For instance, in the simple case of grouped multiple testing where $w_{ij}=\II\{S_{i}=S_{j}\}$, $m_i$ denotes the size of the group to which $T_{i}$ belongs. In the more complicated case where $S_i$ is continuous, we leverage the local structure to compute $m_{i}$ by borrowing strength from points close to $S_{i}$ while assigning lesser weight to distant points. To provide intuitions for deriving the local adaptive estimator, we assume that the null p-values are  uniformly distributed on $[0,1]$.

Suppose our objective is to determine the number of null p-values that exceed the threshold $\lambda$. A conservative estimate of the empirical count  $\sum_{j\in\mathcal{H}_{0}}w_{ij}\II \{p(T_{j})>\lambda\}$ is given by
\begin{equation}\label{em:counts}
 \sum_{j=1}^{m}w_{ij}\II \{p(T_{j})>\lambda\}.
\end{equation}
The quantity provides a reasonably good approximation when $\lambda$ is large, as we expect that most non-null p-values will be relatively small. On the other hand, the expected count is given by:
\begin{equation}\label{ex:counts}
\EE \Big[\sum_{_{j\in\mathcal{H}_{0}}}w_{ij}\II \{p(T_{j})>\lambda\} \Big|\mathbf{S} \Big] = (1-\pi_{S_{i}})(1-\lambda)\sum_{j=1}^{m}w_{ij}.
\end{equation}
Consequently, we can recover the estimator for the non-null proportion presented in \eqref{conventional_est}: 
$$
\hat{\pi}^{*}_{S_{i}}= 1-\frac{\sum_{j=1}^{m}w_{ij}\II \{p(T_{j})>\lambda\}} {(1-\lambda)\sum_{j=1}^{m}w_{ij}}.
$$ 

To conformalize $\hat{\pi}^{**}_{S_{i}}$, we mix the calibration data and test data when computing the empirical counts \eqref{em:counts}, giving rise to $\sum_{j=1}^{m}w_{ij}[\II\{p(T_{j})>\lambda\}+\II\{p(\Tilde{T}_{j})>\lambda\}]$. This guarantees that the resulting estimator maintains swapping-invariance. Correspondingly, the expected counts \eqref{ex:counts} will be adjusted by a factor of 2. Setting the expected counts and empirical counts equal, our proposed estimator is given by:
\begin{equation}\label{pii-tilde} 
\Tilde{\pi}^{**}_{S_{i}}=1-\frac{\sum_{j=1}^{m}w_{ij}[\II\{p(T_{j})>\lambda\}+\II\{p(\Tilde{T}_{j})>\lambda\}]}{2(1-\lambda)\sum_{j=1}^{m}w_{ij}}. 
\end{equation}
Here, $p(\Tilde{T}_{j})$ represents the p-value of $\Tilde{T}_{j}$ calculated in the same manner as $p(T_{j})$. The estimator \eqref{pii-tilde} is subsequently adjusted using \eqref{pii-hat} to guarantee that its value remains within the feasible range of $[0,1/2]$.

\subsection{Semi-supervised CLAW via PU learning}
\label{subsec:ssmt}

This section extends the CLAW procedure to the semi-supervised multiple testing scenario. We propose a class of novel conformity scores, constructed through carefully designed PU learning algorithms, that satisfy the pairwise exchangeability property \eqref{pwexch}. Achieving this involves relaxing the exchangeability notion (Section \ref{app:pwexch}), modifying estimators for the local sparsity level (Section \ref{app:subsub-confp}), and devising new strategies for estimating density ratios (Sections \ref{sub:grouping} and \ref{subsub-pu-aug}).

\subsubsection{Pairwise exchangeability between samples}
\label{app:pwexch}

We begin the discussion by relaxing the joint exchangeability condition \eqref{jointexch-covariate} to a pairwise exchangeability between the data points:

Suppose we have labeled samples \(\mathbf{T}^0 = (T^0_i : i \in \mathcal{D}_0)\). Consider the partitioning \(\mathcal{D}_0 = \mathcal{D}^{tr} \cup \mathcal{D}^{cal}\), with \(\mathcal{D}^{tr} \cap \mathcal{D}^{cal} = \emptyset\). Let \(\mathbf{T}^{tr} = (T^0_i : i \in \mathcal{D}^{tr})\) and \(\tilde{\mathbf{T}} = (\tilde{T}_i : i \in [m]) := (T_i^0 : i \in \mathcal{D}^{cal})\) denote the training and calibration datasets. The pairwise exchangeability condition is given by:
\begin{equation}\label{data_pwexch}
\left( (\mathbf{T}, \tilde{\mathbf{T}})_{\mathrm{swap}(\mathcal{J})} \big| \mathbf{T}^{tr}, \mathbf{S} \right) \overset{d}{=} \left( \mathbf{T}, \tilde{\mathbf{T}} \big| \mathbf{T}^{tr}, \mathbf{S} \right), \quad \forall \mathcal{J} \subset \mathcal{H}_{0}.
\end{equation}
We highlight important distinctions between \eqref{jointexch-covariate} and \eqref{data_pwexch}, along with their implications:

\begin{enumerate}
\item Assumption \eqref{data_pwexch} allows the null distribution to depend on the covariates. 
Hence the data generation process can be represented as follows:
\begin{eqnarray*}
S_j  \sim  G(\cdot), \quad 
(\theta_{j}|S_{j}=s)  \sim  \mbox{Bernoulli}(\pi_s), \quad
(T_j|S_j, \theta_j)  \sim  (1-\theta_j) F_0(\cdot|S_j) + \theta_j F_1(\cdot|S_j). 
\end{eqnarray*}
Likewise, the calibration data is allowed to be generated as $\Tilde{T}_{j}\sim F_0(\cdot|S_j)$ for $j\in[m]$. 
This flexibility facilitates the modeling of complex correlation structures between calibration and test data through their relationships with the covariates, as illustrated in Examples 2 and 3 of this section.

\item Assumption \eqref{jointexch-covariate} is stronger than assumption \eqref{data_pwexch}, as the swapping-invariant property directly follows from the permutation-invariant property. Moreover, Assumption \eqref{data_pwexch} imposes no constraints on the dependency structure of \(\mathbf{T}\), thereby significantly relaxing the requirement for the equal correlation structure among the null samples as dictated by assumption \eqref{jointexch-covariate}. Example 4 in this section demonstrates that this flexibility allows for the accommodation of some complex dependence structures. 

\item Assumption \eqref{data_pwexch} eliminates the requirement for the training data \(\mathbf{T}^{tr}\) to be exchangeable with the null samples in the calibration and test sets [\(\tilde{\mathbf{T}}\) and \((T_i : i \in \mathcal{H}_{0})\)]. This flexibility allows for the use of integrative and transfer learning algorithms to leverage labeled outliers or external data from related source domains (as explored by \citealp{liang22integrative}), facilitating the development of more powerful predictive models. 
\end{enumerate}

The next theorem, delineating principles for constructing conformity scores within the semi-supervised framework, extends Theorem \ref{thm:exch} under the less stringent condition \eqref{data_pwexch}.

\begin{thm} \label{appthm:exch}
Consider a class of score functions in the form of $g(\cdot,S_{i})=g(\cdot,S_{i};(\mathbf{T},\Tilde{\mathbf{T}}),\mathbf{T}^{tr},\mathbf{S})$ for $i\in[m]$. Define $u_{i}=g(T_{i},S_{i})$ and $\Tilde{u}_{i}=g(\Tilde{T}_{i},S_{i})$. Let $\mathbf{U}=(u_{1},\cdots,u_{m})$ and $\Tilde{\mathbf{U}}=(\Tilde{u}_{1},\cdots,\Tilde{u}_{m})$. 
    \begin{enumerate}[(a)]
        \item $\mathbf{U}$ and $\Tilde{\mathbf{U}}$ satisfy the pairwise exchangeability \eqref{pwexch} if (i) the score functions are swapping invariant, i.e.
       \begin{equation}\label{app:principle-pw}
        \mbox{$g\left(\cdot,S_{i};(\mathbf{T},\Tilde{\mathbf{T}})_{\mathrm{swap}(\mathcal{J})},\mathbf{T}^{tr},\mathbf{S}\right) = g\left(\cdot,S_{i};(\mathbf{T},\Tilde{\mathbf{T}}),\mathbf{T}^{tr},\mathbf{S}\right)$ for any $\mathcal{J}\subset[m]$ }; 
        \end{equation} 
      and  (ii) $\mathbf{T}$, $\Tilde{\mathbf{T}}$, $\mathbf{T}^{tr}$ and $\mathbf{S}$ satisfy the  pairwise exchangeability condition \eqref{data_pwexch};
        
        \item $\mathbf{U}$ and $\Tilde{\mathbf{U}}$ satisfy the joint exchangeability \eqref{jointexch-scores}, if (i) the score functions are permutation-invariant, i.e.
        \begin{equation}\label{app:principle-jt}
     \mbox{  $g\left(\cdot,S_{i};(\mathbf{T},\Tilde{\mathbf{T}}),\mathbf{T}^{tr},\mathbf{S}\right)= g\left(\cdot;(\mathbf{T},\Tilde{\mathbf{T}}),\mathbf{T}^{tr}\right)=g\left(\cdot;(\mathbf{T},\Tilde{\mathbf{T}})_{\Pi},\mathbf{T}^{tr}\right)$ }; 
        \end{equation}
     and (ii)  $\mathbf{T}$  and $\mathbf{T}^0=\Tilde{\mathbf{T}}\cup\mathbf{T}^{tr}$ satisfy the joint exchangeability condition \eqref{jointexch}.
    \end{enumerate}
\end{thm}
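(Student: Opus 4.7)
The plan is to reduce both parts of the theorem to applications of the corresponding data-level exchangeability combined with the prescribed invariance of the score function $g$. The key observation is that a deterministic function applied to a random input preserves equality in distribution: if $\mathbf{X} \overset{d}{=} \mathbf{Y}$ (possibly conditionally) and $\phi$ is a fixed measurable map, then $\phi(\mathbf{X}) \overset{d}{=} \phi(\mathbf{Y})$. The hypotheses in each part are engineered so that a symmetry at the data level ($\mathbf{T}, \tilde{\mathbf{T}}$) together with an invariance of the score function $g$ induces exactly the target symmetry on the computed scores $(u_i, \tilde u_i)_{i=1}^m$.

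For part (a), I would fix an arbitrary $i \in \mathcal{H}_0$ and take $\mathcal{J} = \{i\}$ in the data-level pairwise exchangeability \eqref{data_pwexch}. This gives that $(\mathbf{T}, \tilde{\mathbf{T}})$ and $(\mathbf{T}, \tilde{\mathbf{T}})_{\mathrm{swap}(\{i\})}$ share the same joint distribution conditional on $(\mathbf{T}^{tr}, \mathbf{S})$. Because the score function is swapping-invariant by \eqref{app:principle-pw}, the function $g(\cdot, \cdot; (\mathbf{T}, \tilde{\mathbf{T}}), \mathbf{T}^{tr}, \mathbf{S})$ is identical to $g(\cdot, \cdot; (\mathbf{T}, \tilde{\mathbf{T}})_{\mathrm{swap}(\{i\})}, \mathbf{T}^{tr}, \mathbf{S})$. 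Evaluating this unchanged function on the swapped inputs, I would check coordinatewise that for each $j \neq i$ the values $u_j = g(T_j, S_j)$ and $\tilde{u}_j = g(\tilde{T}_j, S_j)$ are preserved, while at index $i$ the roles of $u_i$ and $\tilde{u}_i$ are exchanged. Composing these observations with the data-level distributional equality yields $(u_i, \tilde{u}_i, \mathbf{U}_{-i}, \tilde{\mathbf{U}}_{-i}) \overset{d}{=} (\tilde{u}_i, u_i, \mathbf{U}_{-i}, \tilde{\mathbf{U}}_{-i})$, which is exactly \eqref{pwexch}.

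For part (b), I would follow an analogous strategy with an arbitrary permutation $\Pi$ acting on the $2m$ coordinates of $(\mathbf{T}, \tilde{\mathbf{T}})$ that fixes the indices corresponding to non-nulls. The joint exchangeability \eqref{jointexch} guarantees that the conditional distribution of $(\mathbf{T}, \tilde{\mathbf{T}})$ given $(T_j : j \notin \mathcal{H}_0)$ is invariant under such $\Pi$. By \eqref{app:principle-jt}, the score function has no explicit dependence on $\mathbf{S}$ and is permutation-invariant in $(\mathbf{T}, \tilde{\mathbf{T}})$, so the induced action on the score vectors $(\mathbf{U}, \tilde{\mathbf{U}})$ is precisely $\Pi$ itself. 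Transporting the data-level distributional invariance through this deterministic permutation of coordinates then delivers \eqref{jointexch-scores}.

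The main subtlety, rather than a deep obstacle, is carefully tracking how a transformation of the data arguments of $g$ translates into a transformation of the evaluated scores. One must separately verify (i) that the function $g$ as a map is left unchanged by the swap or permutation of its data arguments (pure invariance of the functional form) and (ii) that plugging the transformed data into this unchanged function produces precisely the swapped or permuted score vector. A single source of confusion is that the argument $T_i$ enters $u_i = g(T_i, S_i; (\mathbf{T}, \tilde{\mathbf{T}}), \mathbf{T}^{tr}, \mathbf{S})$ twice---as the first argument and as an entry of the tuple $(\mathbf{T}, \tilde{\mathbf{T}})$---so one must keep both occurrences in mind when performing the swap, but since the invariance of $g$ handles the tuple argument, only the first-argument substitution affects the output. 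Once this bookkeeping is made explicit, both conclusions follow cleanly from the respective data-level exchangeability assumptions.
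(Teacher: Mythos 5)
Your proposal is correct, and for part (a) it follows a genuinely more direct route than the paper's.

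The paper's proof of part (a) (imported from the proof of Theorem~\ref{thm:exch}) constructs an auxiliary vector $\mathbf{G}_i=(\mathbf{U}_{-i},\tilde{\mathbf{U}}_{-i},T_i\vee\tilde T_i,T_i\wedge\tilde T_i)$, observes it is a \emph{symmetric} function of $(T_i,\tilde T_i)$, invokes the fact that pairwise exchangeability of $(X,Y)$ persists after conditioning on any symmetric statistic $\psi(X,Y)$, and then, since the score function is measurable given $\sigma(\mathbf{G}_i,\mathbf{T}_{-i},\tilde{\mathbf{T}}_{-i},\mathbf{T}^{tr},\mathbf{S})$, pushes this conditional exchangeability through $g$ and marginalizes. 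Your argument bypasses the auxiliary vector $\mathbf{G}_i$ and the ``conditioning on symmetric functions'' step entirely: you treat the map $\phi:(\mathbf{T},\tilde{\mathbf{T}})\mapsto(\mathbf{U},\tilde{\mathbf{U}})$ (deterministic given $\mathbf{T}^{tr},\mathbf{S}$) and verify the \emph{equivariance} identity $\phi\bigl((\mathbf{T},\tilde{\mathbf{T}})_{\mathrm{swap}(\{i\})}\bigr)=\bigl(\phi(\mathbf{T},\tilde{\mathbf{T}})\bigr)_{\mathrm{swap}(\{i\})}$, then push $\phi$ through the conditional distributional equality $(\mathbf{T},\tilde{\mathbf{T}})_{\mathrm{swap}(\{i\})}\overset{d}{=}(\mathbf{T},\tilde{\mathbf{T}})$ supplied by \eqref{data_pwexch} with $\mathcal{J}=\{i\}$. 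This is the same underlying mechanism but removes one layer of conditioning machinery, making the bookkeeping lighter; the paper's formulation, in contrast, more visibly parallels the classical knockoff-filter argument via Lemma~\ref{lem:barbercandes}/order-statistics conditioning, which may be why the authors chose it. For part (b), your argument and the paper's are closer in spirit: both condition appropriately (you via the restriction that $\Pi$ fixes the non-null positions, the paper via the ordered pair $\mathcal{C}=(\mathbf{T}^{tr},\{T_1,\dots,T_m,\tilde T_1,\dots,\tilde T_m\})$) and then exploit permutation-invariance of $g$. Two small things worth making explicit if you write this up: at the end of part (a) you should state that the established conditional equality given $(\mathbf{T}^{tr},\mathbf{S})$ is marginalized to obtain the unconditional form of \eqref{pwexch}, and in part (b) you should note that the needed conditional exchangeability of $(\tilde{\mathbf{T}},T_j:j\in\mathcal{H}_0)$ given $(\mathbf{T}^{tr},T_j:j\notin\mathcal{H}_0)$ follows from \eqref{jointexch} by restricting to permutations that fix the $\mathbf{T}^{tr}$ block.
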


The remaining part of this subsection provides examples that demonstrate how assumptions \eqref{jointexch-covariate} and \eqref{data_pwexch} can accommodate a diverse array of covariate types. Specifically, covariates can be random (Examples 1 \& 3) or non-random (Examples 2 \& 4), and they can be continuous (Examples 1 \& 3) or discrete (Example 2). Furthermore, we demonstrate that assumption \eqref{data_pwexch} allows for complex correlation structures, both between the null samples and the covariates (Examples 2 \& 3) and among the null samples themselves (Example 4).

\begin{example}[Two-sample sparse inference; \citealp{CARS}] \rm{In contrast to existing works that derive side information from external sources, this example demonstrates the application of the CLAW framework for handling covariates constructed from the dataset at hand. 

\medskip

Let \(\{X_{ij}: 1 \leq j \leq n_{x}\}\) and \(\{Y_{ij}: 1 \leq j \leq n_{y}\}\) denote independent copies of \(X_i \sim \mathcal{N}(\mu_{xi}, \sigma_{xi}^{2})\) and \(Y_i \sim \mathcal{N}(\mu_{yi}, \sigma_{yi}^{2})\),  \(i \in [m]\). Consider the following two-sample multiple testing problem:
\[
H_{i,0}: \mu_{xi} = \mu_{yi} \quad \text{versus} \quad H_{i,1}: \mu_{xi} \neq \mu_{yi}, \quad i \in [m].
\]
Define \(n = n_{x} + n_{y}\), \(\gamma_{x} = \frac{n_{x}}{n}\), \(\gamma_{y} = \frac{n_{y}}{n}\), \(\Bar{X}_{i} = \frac{1}{n_{x}} \sum_{j=1}^{n_x} X_{ij}\), and \(\Bar{Y}_{i} = \frac{1}{n_{y}} \sum_{j=1}^{n_y} Y_{ij}\). The following statistics can be constructed to summarize the information in the data:
\begin{equation}\label{T-S}
(T_{i}, S_{i}) = \sqrt{\frac{n_{x} n_{y}}{n}} \left( \frac{\Bar{X}_{i} - \Bar{Y}_{i}}{\sigma_{pi}}, \frac{\Bar{X}_{i} + \kappa_{i} \Bar{Y}_{i}}{\sqrt{\kappa_{i}} \sigma_{pi}} \right), \quad i \in [m],
\end{equation}
where \(\sigma_{pi}^{2} = \gamma_{yi} \sigma_{xi}^{2} + \gamma_{xi} \sigma_{yi}^{2}\) and \(\kappa_{i} = \frac{\gamma_{yi} \sigma_{xi}^{2}}{\gamma_{xi} \sigma_{yi}^{2}}\). Unlike traditional  methods that rely solely on the primary statistics \((T_i: i \in [m])\), the CARS procedure \citep{CARS} proposes to integrate auxiliary covariates \((S_i: i \in [m])\) as side information into the inferential process to enhance statistical power.

According to the construction, \(T_{i}\) and \(S_{i}\) are independent (given that they are uncorrelated Gaussian variables). Moreover, the pairs \((T_{i}, S_{i})\) are mutually independent across the \(m\) units. Since \(T_{i} | H_{i,0} \sim \mathcal{N}(0,1)\), we can independently draw data points from \(\mathcal{N}(0,1)\) to create a calibration dataset \(\Tilde{\mathbf{T}} = (\Tilde{T}_{i}: i \in [m])\). It follows that the test data \(\mathbf{T} = (T_{i}: i \in [m])\), the calibration data \(\Tilde{\mathbf{T}} = (\Tilde{T}_{i}: i \in [m])\), and the auxiliary covariates \(\mathbf{S} = (S_{i}: i \in [m])\) satisfy  the joint exchangeability condition \eqref{jointexch-covariate}. Consequently, we can implement the CLAW procedure utilizing the triplet \((\mathbf{T}, \Tilde{\mathbf{T}}, \mathbf{S})\), which can be regarded as a conformalized adaptation of CARS.

Lastly, we emphasize that both \citet{CARS} and our analysis are predicated on two idealized assumptions: (a) the variances \(\sigma_{xi}^{2}\) and \(\sigma_{yi}^{2}\) are known, and (b) both \(X_i\) and \(Y_i\) are Gaussian. 
In cases where the variances must be estimated or in instances where \(X_i\) and \(Y_i\) are non-Gaussian, \(T_{i}\) and \(S_{i}\) constructed via \eqref{T-S} would be correlated, leading to potential violations of both exchangeability conditions \eqref{jointexch-covariate} and \eqref{data_pwexch}. Therefore, the generalization of CLAW with finite-sample FDR theory, without relying on these idealized assumptions, presents an important avenue for future research. \qed
}

\end{example}

\begin{example}[Multi-class outlier detection]\label{example:multi-class} \rm{This example involves discrete covariates encoding side information regarding (nonrandom) group memberships. In this scenario, the triplet \((\mathbf{T}, \tilde{\mathbf{T}}, \mathbf{S})\) satisfies the pairwise exchangeability condition \eqref{data_pwexch} but not the joint exchangeability condition \eqref{jointexch-covariate}.

Our analysis focuses on a semi-supervised multiple testing framework in which test samples \(\mathbf{T}\) can be divided into \(K\) groups. Let \(\mathcal{D}^{test} = \bigcup_{i=1}^{K}\mathcal{D}_k\) represent the index set of all test samples, \((S_j \in [K]: j \in \mathcal{D}^{test})\) the set of covariates indicating group memberships, and \(\mathcal{D}_k = \{j \in \mathcal{D}^{test} : S_j = k\}\). Denote the test samples by \(\mathbf{T} = (\mathbf{T}_{1}, \ldots, \mathbf{T}_{K}) \coloneqq (T_1, \ldots, T_m)\), where \(\mathbf{T}_{k} = (T_j : j \in \mathcal{D}_k)\). In the above notation, \(\mathbf{S} = (\mathbf{S}_{1}, \ldots, \mathbf{S}_{K}) \coloneqq (S_1, \ldots, S_m)\) denotes the covariate sequence encoding grouping information, where \(\mathbf{S}_i\) is a vector of length \(|\mathcal{D}_i|\) with all elements equal to \(i\), \(i \in [K]\). Denote \(\mathcal{D}^0\) as the index set of all labeled null samples (inliers). The inliers corresponding to each group are given by \(\mathbf{T}^0_k = (T^0_i : i \in \mathcal{D}_{0,k})\), where \(\mathcal{D}_{0,k}\) denotes the index set of labeled samples from class \(k\), \(k \in [K]\).

Consider an outlier detection problem in medical image classification. Suppose we have collected brain images from a large cohort of healthy individuals (labeled null samples), and the objective is to identify abnormal images in new subjects. The covariate \(S\) may represent demographic characteristics such as gender or race. For example, brain images from healthy males and females can exhibit significant differences, suggesting that the exchangeability condition may only apply within distinct groups. 
Specifically, let \(\mathcal{H}_{0,k}\subset\mathcal D_k\) denote the index set of inliers from class \(k\) within the test data, i.e., $j\in\mathcal{H}_{0,k}$ if and only if $T_{j}$ is an inlier of class $k$, and $\mathcal{H}_0=\cup_{i=1}^{K}\mathcal{H}_{0,i}$. A fundamental and intuitive assumption underpinning our analysis is:
\begin{equation}\label{data_pwexch3}
(T_i, i \in \mathcal{H}_{0,k}; T_{j}^{0}, j \in \mathcal{D}_{0,k}) \text{ are exchangeable conditional on } \left(T_{i} : i \notin \mathcal{H}_{0,k}\right) \cup \left(T_{j}^{0} : j \notin \mathcal{D}_{0,k}\right).
\end{equation}

To implement CLAW, we partition the set \(\mathcal{D}_{0,k}\) into two subsets: a calibration set \(\mathcal{D}^{cal}_{k}\) of size $|\mathcal{D}^{cal}_{k}|=|\mathcal{D}_{k}|$, and a training set \(\mathcal{D}^{tr}_{k}\). Let \(\tilde{\mathbf{T}}_{k} = ({T}_i^0 : i \in \mathcal{D}^{cal}_{k})\) and the whole calibration dataset be $\tilde{\mathbf{T}}=(\tilde{\mathbf{T}}_{1},\cdots,\tilde{\mathbf{T}}_{K}):=(\tilde{T}_1,\cdots,\tilde{T}_m)$. The training dataset is defined as \(\mathbf{T}^{tr} = \{T_i^0 : i \in \bigcup_{k=1}^{K} \mathcal{D}^{tr}_{k}\}\).

Note that the inliers from different groups do not share the same (null) distribution; therefore, the triplet \((\mathbf{T}, \tilde{\mathbf{T}}, \mathbf{S})\) fails to satisfy the joint exchangeability condition \eqref{jointexch-covariate}. However, for every \(i \in \mathcal{H}_0\), since both \(T_i\) and \(\tilde{T}_i\) are inliers of class \(k = S_i\), it follows from condition \eqref{data_pwexch3} that 
$$(T_i, \tilde{T}_i, \mathbf{T}_{-i}, \tilde{\mathbf{T}}_{-i} | \mathbf{S}, \mathbf{T}^{tr}) \overset{d}{=} (\tilde{T}_i, T_i, \mathbf{T}_{-i}, \tilde{\mathbf{T}}_{-i} | \mathbf{S}, \mathbf{T}^{tr}).$$ Hence, the pairwise exchangeability assumption \eqref{data_pwexch} holds. Consequently, we can apply the CLAW procedure with \((\mathbf{T}, \tilde{\mathbf{T}}, \mathbf{S})\) for outlier detection, following the steps outlined in Section \ref{sec:relation} and Sections \ref{app:subsub-confp}-\ref{sub:grouping}. This modified version of CLAW effectively utilizes both labeled null samples and the structural information encoded in group memberships to enhance detection efficiency while maintaining effective control over the FDR in finite samples. \qed
}

\end{example}

\begin{example}[Multiple testing under heteroscedasticity; \citealp{HART}] \rm{This example involves continuous covariates that encode side information related to the heteroscedasticity present among the testing units. We illustrate how to construct a triplet \((\mathbf{T}, \tilde{\mathbf{T}}, \mathbf{S})\) from the raw observations that satisfies the pairwise exchangeability condition \eqref{data_pwexch} and can therefore be implemented within the CLAW framework.

Suppose that we collect \(n_i\) repeated measurements \((X_{ij}: j \in [n_i])\) from testing unit \(i\), where \(i \in [m]\). The observations \((X_{ij}: j \in [n_i])\) are independent across units and obey the following hierarchical model:  
\[
X_{ij} | \mu_i, \sigma_{i} \overset{ind}{\sim} F(\cdot | \mu_i, \sigma_i^2), \quad 
\mu_i | \sigma_{i} \overset{i.i.d.}{\sim} (1-\pi)\delta_{0} + \pi G(\cdot | \sigma_i), \quad \sigma_i \overset{i.i.d.}{\sim} V(\cdot), \quad j \in [n_i], \quad i \in [m],
\]
where \(F(\cdot | \mu, \sigma^2)\) represents a distribution with mean \(\mu\) and variance \(\sigma^2\); \(G(\cdot | \sigma)\) denotes an unspecified distribution parameterized by \(\sigma\); \(V(\cdot)\) refers to another unspecified distribution; and \(\pi = \mathbb{P}(\mu_i = 0)\) indicates the sparsity level. Moreover, the unobserved parameters \(\mu_i\) and \(\sigma_{i}^2\) are allowed to exhibit correlation. For each testing unit \(i \in [m]\), we assume the availability of a null dataset denoted by \(\{Y_{ij}: j \in [N_i]\}\), \(N_i \geq n_i\), which are independently drawn from \(F(\cdot | \mu_i = 0, \sigma_i^2)\). The objective is to simultaneously test \(m\) null hypotheses: \(H_i: ~\mu_i = 0\), for \(i \in [m]\).

The classical multiple testing frameworks, which utilize standardized statistics such as p-values or z-values, may result in information loss, as the heterogeneity in variances provides critical structural information. \citet{HART} demonstrated that a heteroscedasticity-adjusted ranking and thresholding (HART) procedure, which incorporates sample variances as side information, can effectively enhance the power of existing FDR methods. However, the asymptotic theory in \citet{HART} relies on Gaussian assumptions and consistent estimates of model parameters. Below, we outline the key steps for utilizing the CLAW framework to conformalize the HART procedure. The primary technical tool is Theorem 5 in Section A.2 of the Supplement, which provides guidelines for constructing test statistics and covariates from raw observations that satisfy the pairwise exchangeability condition \eqref{data_pwexch}.

Consider test statistics \(T_i = \frac{1}{n_i} \sum_{j=1}^{n_i} X_{ij}\), with calibration statistics \(\tilde{T}_i = \frac{1}{n_i} \sum_{j=1}^{n_i} Y_{ij}\) for \(i \in [m]\). The remaining null data points, denoted \(\mathbf{T}^{tr}_{i} = (Y_{ij}: j = n_{i}+1, \ldots, N_i)\), will be used as training data to estimate the unknown variances:
$$
S_i = \frac{1}{N_i - n_i} \sum_{j=n_i+1}^{N_i} \left(Y_{ij} - \frac{\sum_{j=n_i+1}^{N_i} Y_{ij}}{N_i - n_i + 1}\right)^2.
$$
Since \(\mathbf{T}^{tr}_{i}\) is independent of \((T_i, \tilde{T}_i)\), and \(S_i\) is measurable with respect to \(\mathbf{T}^{tr}_{i}\), we have
$$
(T_i, \tilde{T}_i | \mathbf{T}^{tr}_{i}, S_i) \overset{d}{=} (T_i, \tilde{T}_i) \overset{d}{=} (\tilde{T}_i, T_i) \overset{d}{=} (\tilde{T}_i, T_i | \mathbf{T}^{tr}_{i}, S_i), \quad i \in \mathcal{H}_0.
$$
If \(n_i = N_i\), there are no additional labeled samples for estimating the variance. In this case, we define
$$
S_i^* = \frac{1}{2n_i - 2} \sum_{j=1}^{n_i} \left[(X_{ij} - T_i)^2 + (Y_{ij} - \tilde{T}_i)^2\right].
$$

Next, we verify the pairwise exchangeability \((T_i, \tilde{T}_i | S_i^*) \overset{d}{=} (\tilde{T}_i, T_i | S_i^*)\) for \(i \in \mathcal{H}_0\). Let \(A = \sum_{j=1}^{n_i} (X_{ij} - T_i)^2\) and \(B = \sum_{j=1}^{n_i} (Y_{ij} - \tilde{T}_i)^2\). By construction, for \(i \in \mathcal{H}_0\), we have 
$(T_i, A) \overset{d}{=} (\tilde{T}_i, B)$. Moreover, $(T_i,A)$ is independent of $(\tilde{T}_i, B)$. Thus, 
$$
\left(T_i,\tilde{T}_i, A \vee B, A \wedge B\right) \overset{d}{=} \left(\tilde{T}_i, T_i, A \vee B, A \wedge B\right).
$$
The pairwise exchangeability follows since \(S_i^*\) is a symmetric function of \(\{A, B\} = \{A \vee B, A \wedge B\}\). Finally, as the raw data from different test units are mutually independent, the assumption \eqref{data_pwexch} holds.

This example highlights the effectiveness of the CLAW framework in three key aspects. First, the conformalized HART procedure diverges from conventional FDR methods by eliminating the Gaussian assumption. This relaxation broadens the applicability of the method across a wider range of data distributions. Second, the extension of FDR validity to finite samples enhances the asymptotic theory in existing works. Finally, the shift from joint exchangeability to pairwise exchangeability further increases the applicability of the CLAW framework. Specifically, while the correlation of \(S_i^*\) with both \(T_i\) and \(\tilde{T}_i\) presents challenges for existing conformal methods that depend on joint exchangeability assumptions, the CLAW framework is well-equipped to address these complexities effectively. 
\qed
}
\end{example}

\begin{example}[Correlated and non-exchangeable null test samples]

\rm{This example demonstrates the capability of the relaxed assumption \eqref{data_pwexch} to effectively address complex correlation structures that conventional FDR methods may struggle to accommodate. We will begin by outlining the background context from which the problems of interest may arise and subsequently discuss the implementation of the CLAW framework as a solution to the problem at hand.

In various signal processing applications, such as wireless sensor networks, communication systems, and biomedical monitoring, the challenge of outlier detection arises when multiple receivers are employed to capture signals from a common source. Consider a scenario where a single source produces an original signal \((y_{i}: i \in [m])\), which is received by two different signal receivers. Under normal conditions, both receivers accurately record the true signal along with inherent noise, which obeys a specified distribution \(F_{\epsilon}\). However, discrepancies may occur when one of the receivers becomes faulty. For instance, the first receiver, which operates correctly, outputs the reliable records (calibration samples) \(\tilde{\mathbf{T}}=(\tilde{T}_{i}: i \in [m])\). In contrast, the second receiver, experiencing malfunction or contamination, outputs records \(\mathbf{T}=(T_{i}: i \in [m])\) (test samples) that deviate significantly from the expected values, following a different distribution. In this framework, practitioners can leverage the trustworthy data from the properly functioning receiver \((\tilde{T}_{i}: i \in [m])\) to calibrate and identify specific time points at which the other receiver exhibits outlier behavior. 

Let \((y_{i}: i \in [m])\) symbolize an underlying stochastic process. For illustrative purposes, we can consider \((y_{i}: i \in [m])\) as a stationary AR(1) process, where \(\mathrm{cor}(y_{i}, y_{j}) = \rho^{|i-j|}\), \(\rho\in[-1, 1]\), and \(y_i\) obeys a marginal distribution \(F_{y}\). Assume that \(\mathbf{T}\) and \(\tilde{\mathbf{T}}\) obey the following model:
\[
T_{i}|(\theta_{i} = 0, S_{i} = s) = y_{i} + \epsilon_{i}, \quad T_{i}|(\theta_{i} = 1, S_{i} = s) \sim F_{1s}, \quad \tilde{T}_{i} = y_{i} + \epsilon_{m+i},
\]
where \(\PP(\theta_{i} = 1 | S_{i} = s) = \pi_{s}\) and \(\{\epsilon_{i}: i \in [2m]\}\) represent i.i.d. noises following distribution \(F_{\epsilon}\). The non-null observations \((T_{i}: i \notin \mathcal{H}_{0})\), which are sampled from \(F_{1s}\) conditioned on \(S_{i}=s\), are assumed to be independent of the null samples \((T_{i}: i \in \mathcal{H}_{0}) \cup (\tilde{T}_{i}: i \in [m])\). As anomalies tend to appear in clusters, we adopt the sequential order as side information, i.e., \(S_i = i\) for \(i \in [m]\).

The problem of detecting abnormal signals can be framed within the multiple testing framework: \(H_i: \theta_i = 0, i \in [m]\). To implement CLAW, we need to verify the pairwise exchangeability. Observe that for every \(i \in \mathcal{H}_0\),
\[
\left( T_i, \tilde{T}_i, \mathbf{T}_{-i}, \tilde{\mathbf{T}}_{-i} \,|\, (y_{j}: j \in [m]) ,\mathbf{S} \right) \overset{d}{=} \left( \tilde{T}_i, T_i, \mathbf{T}_{-i}, \tilde{\mathbf{T}}_{-i} \,|\, (y_{j}: j \in [m]),\mathbf{S} \right).
\]
This equality highlights that the joint distribution of the test and calibration data is invariant to the swapping of \(T_i\) and \(\tilde{T}_i\) conditional on \((y_{j}: j \in [m])\), provided that  $T_i$ is an inlier. By construction, the randomness in both \(\mathbf{T}\) and \(\tilde{\mathbf{T}}\) comes from \(\{\epsilon_{i}\}\), conditioned on \((y_{i}: i \in [m])\). The desired pairwise exchangeability \eqref{data_pwexch} can be established by integrating out \((y_{i}: i \in [m])\).   \qed
    
}
 \end{example}

\subsubsection{Estimating the non-null proportion $\pi_{S_i}$ under the semi-supervised setup}
\label{app:subsub-confp}

Our proposed estimator for the non-null proportion $\pi_{S_i}$ (or local sparsity level) in \eqref{pii-tilde} relies on the availability of p-values. In the classical setting, these p-values can be computed directly based on the null distribution $F_0$. However, in the semi-supervised scenario, $F_0$ is unknown. Therefore, we propose an alternative approach to address this challenge by first constructing conformal p-values through the following steps:
\begin{enumerate}[1.]
    \item Split the training set $\mathcal{D}^{tr}$ into $\mathcal{D}^{tr}_1$ and $\mathcal{D}^{tr}_2$, denote $\mathbf{T}^{tr1}=\{T_{i}^0:i\in\mathcal{D}^{tr}_1\}$ and $\mathbf{T}^{tr2}=\{T_{i}^0:i\in\mathcal{D}^{tr}_2\}$.
    \item Learn some conformity score function $s(t)=s(t;\mathbf{T},\Tilde{\mathbf{T}},\mathbf{T}^{tr1},\mathbf{T}^{tr2})$ based on $(\mathbf{T},\Tilde{\mathbf{T}},\mathbf{T}^{tr1},\mathbf{T}^{tr2})$, where $s$ is chosen such that
    \begin{equation}\label{st-class}
    s\left(t;\mathbf{T},\Tilde{\mathbf{T}},\mathbf{T}^{tr1},\mathbf{T}^{tr2}\right)=s\left(t;(\mathbf{T},\Tilde{\mathbf{T}},\mathbf{T}^{tr1})_\Pi,\mathbf{T}^{tr2}\right),
    \end{equation}
    for any permutation $\Pi$ on $(\mathbf{T},\Tilde{\mathbf{T}},\mathbf{T}^{tr1})$.
    \item Calculate the conformity scores, and define the conformal $p$-values by
    \begin{equation}\label{pu-conf-pv}
        \hat{p}(T_{i})=\frac{1+|\{k\in\mathcal{D}^{tr}_{1}:s(T_{k}^0)\leq s(T_{i})\}| }{1+|\mathcal{D}^{tr}_1|}, \quad \hat{p}(\Tilde{T}_{i})=\frac{1+|\{k\in\mathcal{D}^{tr}_{1}:s(T_{k}^0)\leq s(\Tilde{T}_{i})\}| }{1+|\mathcal{D}^{tr}_1|}, \quad i\in[m]. 
    \end{equation}
\end{enumerate}

We now establish the exchangeability properties of the conformal p-values \eqref{pu-conf-pv}.

\begin{property}\label{app-prop:confpv}
    Consider the conformal p-values $\hat{p}(T_j)$ and $\hat{p}(\Tilde{T}_j)$ constructed by \eqref{pu-conf-pv} using score function $s(\cdot)$ satisfying \eqref{st-class}. Then we have:
    \begin{enumerate}[(a)]
        \item If $\mathbf{T}$ and $\mathbf{T}^0=\Tilde{\mathbf{T}}\cup\mathbf{T}^{tr}$ satisfy the joint exchangeability \eqref{jointexch}, then the null p-values 
    \begin{equation}\label{ssmt-exch1}
    \left(\hat{p}(\Tilde{T}_1), \cdots, \hat{p}(\Tilde{T}_m), \hat{p}(T_i), i\in\mathcal{H}_0 \right)
    \end{equation}
    are jointly exchangeable.        
        \item If $\mathbf{T}$, $\Tilde{\mathbf{T}}$ and $\mathbf{T}^{tr}$ satisfy the pairwise exchangeability \eqref{data_pwexch}, then the null p-values are pairwise exchangeable:
        \begin{equation}\label{ssmt-exch2}
        \big( \hat{p}(T_i),\hat{p}(\Tilde{T}_i) \big) \overset{d}{=} \big( \hat{p}(\Tilde{T}_i),\hat{p}(T_i) \big) \text{ conditional on } \big( \hat{p}(T_j),\hat{p}(\Tilde{T}_j):j\neq i \big) \mbox{  for $i\in\mathcal{H}_0$.}
        \end{equation}
    \end{enumerate}
\end{property}

\begin{remark}\rm{
The methodology and theoretical framework presented in this subsection for constructing conformal p-values using $(\mathbf{T}, \Tilde{\mathbf{T}}, \mathbf{T}^{tr})$ is closely related to but departs from existing approaches \citep{mary22semi,marandon22mlfdr,bates23} due to the incorporation of new exchangeability conditions \eqref{ssmt-exch1} and \eqref{ssmt-exch2} that involve null p-values in both the test and calibration sets.}
\end{remark}

The construction of the conformalized estimator for the non-null proportion within the semi-supervised framework entails replacing conventional p-values in \eqref{pii-tilde} with conformal $p$-values presented in \eqref{pu-conf-pv}:
\begin{equation}
   \tilde{\pi}^{**}_{S_{i}}=1-\frac{\sum_{j=1}^{m}w_{ij}[\II\{\hat{p}(T_{j})>\lambda\}+\II\{\hat{p}(\Tilde{T}_{j})>\lambda\}]}{2(1-\lambda)\sum_{j=1}^{m}w_{ij}}. \label{pu-p**}
\end{equation}

\subsubsection{Density ratio estimation when $S_i$ is discrete}\label{sub:grouping}

The next two subsections explore the extension of the PU learning strategy for estimating the density ratio $\hat{r}(t, S)$  in the presence of side information, including the grouping strategy (discrete $S_i$) and augmentation strategy (continuous $S_i$).

In Section \ref{app:subsub-pu-group}, we have proposed \eqref{app-pu-group-dr} for estimating $\hat{r}(t, S)$ when the covariates indicate group membership, and further provided Proposition \ref{prop:group} to justify the pairwise exchangeability condition. Next we establish a property to consolidate Proposition \ref{prop:group}.

\begin{property}\label{app-prop:group}
    Consider the ranking score function $\hat{R}(t,k)$ gained by the transformation \eqref{Rthat} with $\widehat{\mathrm{Clfdr}}^{**}(t,k)=(1-\hat{\pi}_{k}^{**})\hat{r}(t,k)$, where $\hat{\pi}_{k}^{**}$ is defined by \eqref{pu-p**} and $\hat{r}(t,k)$ is deduced by \eqref{app-pu-group-dr}. If $\mathbf{T}$, $\Tilde{\mathbf{T}}$, $\mathbf{T}^{tr}$ and $\mathbf{S}$ satisfy the conditional pairwise exchangeability \eqref{data_pwexch}, then the scores 
    $u_i= \hat{R}(T_i, S_i)$, $ \tilde u_i= \hat{R}(\tilde T_i, S_i)$ satisfy the pairwise exchangeability \eqref{pwexch}.
\end{property}

Although the conclusions of Proposition \ref{prop:group} and Property \ref{app-prop:group} are identical, the conditions in Property \ref{app-prop:group} are weaker because: (a) the conformal p-value is allowed to depend on the training data beyond a known non-random function $F_0$, and (b) the exchangeability assumption \eqref{jointexch-covariate} is relaxed to pairwise exchangeability \eqref{data_pwexch}. Hence, in Section \ref{app:proof-4-app}, we only verify Property \ref{app-prop:group}, from which Proposition \ref{prop:group} directly follows as a corollary.

\subsubsection{Density ratio estimation when $S_i$ is continuous}\label{subsub-pu-aug}

Consider the working model \eqref{model:mixture}. Let $q(s)$ denote the marginal density of $S$, $f(t,s)$ denote the joint probability density of $(T,S)$, and $f_{0}(t,s)=f_{0}(t)q(s)$ denote the joint probability density of $(T,S)$ under the null. The conditional independence between $T$ and $S$ under the null implies the relationship:
$
{f_{0}(t)}/{f_{s}(t)}={f_{0}(t,s)}/{f(t,s)}.
$
This observation serves as motivation to augment both the test data and corresponding calibration data with the covariate. The data augmentation process consists of three steps.

In Step 1, we create augmented data $T_{i}^{+}=(T_{i},S_{i})$ and $\Tilde{T}_{i}^{+}=(\Tilde{T}_{i},S_{i})$, $i\in[m]$, for both the test and calibration sets. 

In Step 2, we randomly pair each $S_{i}$ with one training sample $T_{i}^{tr}\in \mathbf{T}^{tr}=(T_j^0:j\in\mathcal{D}^{tr}\subset\mathcal{D}_{0})$ to obtain augmented training data $\{T_{i}^{tr+}\}_{i\in\mathcal{D}^{tr}}=\{(T_{i}^{tr},S_{i})\}_{i\in\mathcal{D}^{tr}}$.

In Step 3, we apply a PU learning algorithm, which is permutation invariant to the unordered set $\cup_{i\in[m]}\{T_i^+, \Tilde T_i^+\}$, to estimate the ratio of the density of $\{T_{i}^{tr+}\}$ to the density of $\{T_{i}^{+}\}\cup\{\Tilde{T}_{i}^{+}\}$. This ratio is denoted as:
\begin{equation}\label{app-pu-augm-dr}
    \hat{r}(t,s) = \hat{r}\left(t,s; \cup_{i\in[m]}\{T_i^+,\Tilde{T}_i^+\},\{T_i^{tr+}:i\in[m]\}\right).
\end{equation}
By applying transformation \eqref{Rthat} to $\widehat{\mathrm{Clfdr}}^{**}(t,s)=(1-\hat{\pi}_{s}^{**})\hat{r}(t,s)$, the conformity score function $\hat{R}(t,s)$ can be obtained. The pairwise exchangeability between conformity scores is established in the next property. 

\begin{property}\label{app-prop:augm}
Consider the conformity score function $\hat{R}(t,s)$ calculated by following Steps 1-3. If $\mathbf{T}$, $\Tilde{\mathbf{T}}$, $\mathbf{T}^{tr}$ and $\mathbf{S}$ satisfy the conditional pairwise exchangeability \eqref{data_pwexch}, then 
    $u_i= \hat{R}(T_i, S_i)$, $ \tilde u_i= \hat{R}(\tilde T_i, S_i)$ satisfy the pairwise exchangeability condition \eqref{pwexch}.
\end{property}

\begin{remark}\rm{In Step 2, if $|\mathcal{D}^{tr}|<m$, we may sample $T_{i}^{tr}$ from $\mathbf{T}^{tr}$ with replacement. On the other hand, when $|\mathcal{D}^{tr}|>m$, we have two strategies to make optimal use of the null samples. 
The first strategy involves sampling from $\mathbf{S}$ with replacement. Such strategies are valid because resampling $\mathbf{T}^{tr}$ and $\mathbf{S}$ still ensures that the pairwise exchangeability condition \eqref{data_pwexch} holds.
Alternatively, the second strategy involves implementing a derandomized procedure to enhance reliability and efficiency. This can be achieved by leveraging the e-values obtained from Algorithm \ref{algo:claw_clfdr}, as discussed in Section \ref{subsec:deran-claw}. 
 }
\end{remark}

\section{Proofs for Primary Theory}
\label{app:proof}
This section proves the primary theories in the main text.

\subsection{Proof of Proposition \ref{thm:ev}}
\label{app:proof-ev}

We first state and proof a lemma that is instrumental for establishing the finite-sample FDR theory concerning Algorithm \ref{algo:claw}. It delineates the method and theory on utilizing Algorithm \ref{algo:claw} to construct generalized e-values, paving the way for employing the e-BH theory in \cite{wang22ev} for our problem. 

\begin{lemma}\label{lemma1}
    Suppose that the scores $(u_1,\cdots,u_m)$ and $(\Tilde{u}_1,\cdots,\Tilde{u}_m)$ satisfy \eqref{pwexch}. Let $\tau$ be the  threshold output by Algorithm \ref{algo:claw}. If there is no ties between $u_i$ and $\Tilde{u}_i$ almost surely, then
    \begin{equation*}
        \EE \left[ \frac{\sum_{j\in\mathcal{H}_{0}} \II \{u_{j}\leq\tau\wedge\Tilde{u}_{j}\} }{1+\sum_{j\in\mathcal{H}_{0}} \II \{\Tilde{u}_{j}\leq\tau\wedge u_{j}\}}  \right] =
        \EE \left[ \frac{\sum_{j\in\mathcal{H}_{0}} \II\{u_{j}<\Tilde{u}_{j}\}\II \{u_{j}\leq\tau\} }{1+\sum_{j\in\mathcal{H}_{0}} \II\{\Tilde{u}_{j}< u_{j}\}\II \{\Tilde{u}_{j}\leq\tau\}}  \right] 
        \leq1.
    \end{equation*} 
\end{lemma}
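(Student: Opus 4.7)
For the equality, note that the no-ties assumption implies, almost surely, $\{u_j \leq \tilde u_j\} = \{u_j < \tilde u_j\}$ and $\{\tilde u_j \leq u_j\} = \{\tilde u_j < u_j\}$. Consequently $\II\{u_j \leq \tau \wedge \tilde u_j\} = \II\{u_j \leq \tau\}\II\{u_j < \tilde u_j\}$ a.s., with an analogous identity for the denominator; substituting term-by-term in numerator and denominator yields the claimed equality.

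For the inequality $\leq 1$, the plan is to reduce the problem to the Selective SeqStep+ framework of \cite{barber15knockoff}. For each $i \in [m]$, decompose $(u_i, \tilde u_i)$ into a triple $(W_i, M_i, b_i)$ with $W_i = \min(u_i, \tilde u_i)$, $M_i = \max(u_i, \tilde u_i)$, and $b_i = +1$ if $u_i < \tilde u_i$ and $b_i = -1$ otherwise. Under this reparametrization the $Q$-process in \eqref{confq} rewrites as
\[
Q(t) \;=\; \frac{1 + \#\{i \in [m]: b_i = -1,\, W_i \leq t\}}{\#\{i \in [m]: b_i = +1,\, W_i \leq t\} \vee 1},
\]
so $\tau$ is a measurable function of $\{(W_i, M_i, b_i)\}_{i=1}^m$. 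Similarly, the ratio of interest becomes $V^+(\tau)/(1 + V^-(\tau))$, where $V^\pm(t) := \#\{j \in \mathcal{H}_0: b_j = \pm 1,\, W_j \leq t\}$.

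Next I would leverage pairwise exchangeability \eqref{pwexch}, in its iterated subset-swap form on any $\mathcal{J} \subset \mathcal{H}_0$ (which follows from the swapping-invariance of the score function in Theorem \ref{thm:exch}(a)), to show that conditional on $\mathcal{G} := \sigma(\{(W_i, M_i)\}_{i=1}^m,\, \{b_j\}_{j \notin \mathcal{H}_0})$ the null signs $\{b_j\}_{j \in \mathcal{H}_0}$ are i.i.d.\ Rademacher. The remainder of the argument would invoke the supermartingale machinery underlying Selective SeqStep+ (cf.\ the proof of Theorem~3 in \cite{barber15knockoff}): ordering the null indices by $W_j$ in decreasing order and sequentially revealing their signs yields a filtration on which $\tau$ corresponds to a stopping time and a suitable version of $(1 + V^-(\cdot))/(1 + V^+(\cdot))$ is a supermartingale; optional stopping then gives $\EE[V^+(\tau)/(1 + V^-(\tau)) \mid \mathcal{G}] \leq 1$, and taking outer expectation finishes the proof.

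The main obstacle will be bridging the literally stated single-swap pairwise exchangeability \eqref{pwexch} with the subset-swap property that is required to obtain i.i.d.\ Rademacher null signs conditional on $\mathcal{G}$. In the intended applications (Proposition \ref{thm:ev}, Theorems \ref{thm:fdr} and \ref{thm:CLAW}), \eqref{pwexch} is established via the conditions of Theorem \ref{thm:exch}(a), whose swapping-invariance hypothesis \eqref{principle-pw} already delivers the subset-swap property, so this gap is harmless in practice. A secondary technical point is the careful setup of the backward filtration so that $\tau$ is a valid stopping time; this is routine since $Q(t)$ depends on the data only through the cumulative counts $V^\pm(t)$ together with the analogous counts for non-null indices, which are $\mathcal{G}$-measurable.
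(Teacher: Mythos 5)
Your proposal follows the same high-level route as the paper's proof: decompose each pair $(u_i,\tilde u_i)$ into $(\nu_i,\eta_i)=(u_i\wedge\tilde u_i,\II\{u_i<\tilde u_i\})$ (your $(W_i,b_i)$), establish that the null indicators are i.i.d.\ Bernoulli$(1/2)$ given the rest, set up a backward filtration under which $\hat k$ is a stopping time, and invoke optional stopping for a backward supermartingale. Two points merit attention.

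First, the ``obstacle'' you flag---bridging the single-swap exchangeability literally stated in \eqref{pwexch} to the subset-swap property needed for i.i.d.\ Rademacher null signs---is not a genuine gap, and you do not need to retreat to the hypotheses of Theorem~\ref{thm:exch}(a) to close it. Applying the anti-symmetric map $h(x,y)=\mathrm{sign}(x-y)(x\wedge y)$ to \eqref{pwexch} yields, for every $i\in\mathcal H_0$, that the conditional law of $1-2\eta_i$ given $(\nu_j:j\in[m])$, $(\eta_j:j\neq i)$ (and even the maxima $M_j$, which are symmetric in the pair) is symmetric, i.e.\ Rademacher. A joint pmf on $\{0,1\}^{\mathcal H_0}$ is pinned down by its full conditionals: flipping one coordinate at a time and using that each conditional ratio is $1$ shows the pmf is constant on the hypercube, hence the i.i.d.\ fair-coin product. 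So single-swap \eqref{pwexch} alone already gives the paper's claim \eqref{flipcoin} (cited as Lemma~\ref{lem:barbercandes}); no ``harmless in practice'' hedge is needed.

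Second, and this is a real gap, your sketch does not establish the terminal bound. You assert that ``a suitable version of $(1+V^-(\cdot))/(1+V^+(\cdot))$ is a supermartingale'' and that optional stopping then bounds $\EE[V^+(\tau)/(1+V^-(\tau))\mid\mathcal G]$ by $1$. Two objections: (i) the quantity you want to bound is $V^+/(1+V^-)$, not its reciprocal, so if the reciprocal is what is a supermartingale you would need an extra step (and $\EE[1/X]\leq1$ does not imply $\EE[X]\leq1$); the paper instead shows the target itself, $M_i=V_i/(1+\tilde V_i)$, is a backward supermartingale. (ii) Even granting that, optional stopping gives $\EE[M_{\hat k}]\leq\EE[M_m]$, and one still must prove $\EE[M_m]=\EE\bigl[\sum_{j\in\mathcal H_0}\eta_j\big/(1+\sum_{j\in\mathcal H_0}(1-\eta_j))\bigr]\leq1$, which is the substance of the bound. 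The paper does this via Lemma~\ref{lem:ex}, splitting the sum over $i$, lower-bounding the denominator by $\eta_i+(1-\eta_i)+\sum_{j\neq i}(1-\eta_j)$, and exchanging $\eta_i\leftrightarrow1-\eta_i$; the classical Barber--Cand\`es route via a Binomial$(|\mathcal H_0|,1/2)$ computation also works. Either way this final step must be carried out explicitly; ``optional stopping then gives $\leq1$'' skips the payload of the argument.
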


\begin{proof}[Proof of Lemma \ref{lemma1}.]

We first present an equivalent expression of Algorithm \ref{algo:claw}. Let $\nu_{i}=u_{i}\wedge\Tilde{u}_{i}$ and $\eta_{i}=\II\{u_{i}<\Tilde{u}_{i}\}$. Since $\PP(u_{i}=\Tilde{u}_{i})=0$, we have $\II\{\Tilde{u}_{i}<u_{i}\}=1-\eta_{i}$ almost surely for all $i\in[m]$. As $Q(t)$ only jumps at the points in the set $\{\nu_i: i\in[m]\}$, the threshold $\tau$ output by Algorithm \ref{algo:claw} can be narrowed down within the set $\{\nu_i\}_{i=1}^m$, i.e.  
$\tau = \max\{ t\in\mathcal{U}\cup\Tilde{\mathcal{U}}: Q(t) \leq \alpha \} = \max\{ t\in\{\nu_i\}_{i=1}^m: Q(t)\leq \alpha \}$. Let $\nu_{(1)}\leq\cdots\leq\nu_{(m)}$ be the order statistics. We have $\tau=\nu_{(\hat{k})}$, where
\begin{equation}\label{anotherstop}
    Q(t)= \frac{1+\sum_{j=1}^m (1-\eta_{j})\II\{\nu_j\leq t\} }{ [\sum_{j=1}^m \eta_{j} \II\{\nu_j\leq t\}]\vee1 } \mbox{ and }   \hat{k}=\max\{i\in[m]:Q(\nu_{(i)})\leq\alpha\}.
\end{equation}

We first claim that:
    \begin{equation}\label{flipcoin}
        (\eta_{i}:i\in\mathcal{H}_0)\overset{i.i.d.}{\sim}B(1,1/2) \text{ conditional on } (\nu_1,\cdots,\nu_m).
    \end{equation}
To prove the claim, we first state a useful lemma without proof. 

\begin{lemma}\label{lem:barbercandes}(\citealp{barber15knockoff})
For any anti-symmetric function $h(x,y)$ satisfying $h(x,y)=-h(y,x)$, if the scores $(u_1,\cdots,u_m)$ and $(\Tilde{u}_1,\cdots,\Tilde{u}_m)$ are pairwise exchangeable under the null, i.e., \eqref{pwexch} holds, then $(\mathrm{sign}(h(u_i,\Tilde{u}_i)):i\in\mathcal{H}_0)$ are i.i.d. coin flips conditional on $(|h(u_i,\Tilde{u}_i)|:i\in[m])$.
\end{lemma}
    
    We start by considering the sign of $u_i-\Tilde{u}_i$, indicated by $\{-1, 1\}$, which can also be expressed as $1-2\eta_i$. Define the following anti-symmetric function $h(x,y)$
    $$h(x,y)=\mathrm{sign}(x-y)(x\wedge y).$$
    Then $\nu_i=|h(u_i,\Tilde{u}_i)|$ and $1-2\eta_i=\mathrm{sign}(u_i-\Tilde{u}_i)=\mathrm{sign}(h(u_i,\Tilde{u}_i))$. 
    By Lemma \ref{lem:barbercandes}, we have that $((1-2\eta_i:i\in\mathcal{H}_0)$ are i.i.d. coin flips conditional on $(\nu_i:i\in[m])$, and $\PP(\eta_{i}=0|\nu_1,\cdots,\nu_m)=\PP(1-2\eta_i=1|\nu_1,\cdots,\nu_m)=1/2$, $\PP(\eta_{i}=1|\nu_1,\cdots,\nu_m)=\PP(1-2\eta_{i}=-1|\nu_1,\cdots,\nu_m)=1/2$. Equivalently, \eqref{flipcoin} holds.

    Let $\mathcal{G}=\sigma\left( (\nu_{i}:i\in[m]), \{\eta_{i}:i\notin\mathcal{H}_0\} \right)$. Consider the filtration $\mathcal{F}=(\mathcal{F}_k:k\in[m])$ generated by
    \begin{equation*}
        \mathcal{F}_{k}=\sigma\left( \mathcal{G} \cup \sigma(V_{j},\Tilde{V}_{j}:k\leq j \leq m) \right),
    \end{equation*}
    where $V_{j}=\sum_{l\in\mathcal{H}_0}\eta_{l}\II\{\nu_{l}\leq\nu_{(j)}\},\quad \Tilde{V}_j = \sum_{l\in\mathcal{H}_0}(1-\eta_{l})\II\{\nu_{l}\leq\nu_{(j)}\}.$
    It is easy to check that $\mathcal{F}_{i+1}\subset\mathcal{F}_{i}$. Define the following random process
    $$M_{i}=\frac{V_i}{1+\Tilde{V}_i},\quad i=1,\cdots,m.$$
Following the arguments, for example, in \citet{barber15knockoff} or \citet{zhao2023plis}, we can show that $(M_i:i\in[m])$ is a backward discrete-time super-martingale with respect to $\mathcal{F}$, i.e.,
    \begin{equation}\label{tower-exp}
        \EE[M_{i}|\mathcal{F}_{i+1}] \leq M_{i+1} ,\quad \forall i\in[m-1].
    \end{equation}
Moreover, $\hat{k}$ is an $\mathcal{F}$-stopping time, as knowing $\{\eta_{j}:j\notin\mathcal{H}_{0}\}$, $(\nu_{j}:j\in[m])$ and $\{V_{i}, \Tilde{V}_{i}: K\leq i\leq m\}$ is sufficient to determine whether the event $\{\hat{k}=K\}$ occurs. Therefore, we can apply Doob's optional stopping theorem on $(M_{i}:i\in[m])$ and $\hat{k}$ to establish that
    $$\EE[M_{\hat{k}}]\leq\EE[M_m]=\EE\left[ \frac{\sum_{j\in\mathcal{H}_{0}} \eta_{j}}{1+\sum_{j\in\mathcal{H}_{0}} (1-\eta_{j})} \right].$$
The above expectation can be computed through various methods (cf. \citealp{barber15knockoff, weinstein17counting}). Alternatively, we introduce a novel and more generic approach that capitalizes on the pairwise exchangeability property inherent in our problem setup. This calculation entails the application of the following simple yet useful lemma.

\begin{lemma}\label{lem:ex}
For non-negative random variables $X$, $Y$ and $Z$ satisfying $(X,Y,Z)\overset{d}{=}(Y,X,Z)$, we have
$$ 
\mathbb{E}\left[ \frac{X}{X+Y+Z} \right] = \mathbb{E}\left[ \frac{Y}{X+Y+Z} \right].
$$
\end{lemma}
\begin{proof}[Proof of Lemma \ref{lem:ex}.]
Since $(X,Y,Z)\overset{d}{=}(Y,X,Z)$, we have that $(X,Y,X+Y+Z)\overset{d}{=}(Y,X,Y+X+Z),$ which implies $$\frac{X}{X+Y+Z} \overset{d}{=} \frac{Y}{X+Y+Z},$$ and the lemma follows.  
\end{proof}

    By \eqref{flipcoin}, we have that $(\eta_{j}:j\in\mathcal{H}_0)\overset{i.i.d.}{\sim} B(1,1/2)$ conditional on $(\nu_i:i\in[m])$, which implies that 
    \begin{equation*}
    \begin{split}
        & \Big( \eta_i,1-\eta_i \Big| \sum_{j\in\mathcal{H}_0,j\neq i}\eta_{j},(\nu_i:i\in[m]) \Big) \overset{d}{=} \Big( \eta_i,1-\eta_i \Big| (\nu_i:i\in[m]) \Big)  \\
        \overset{d}{=}& \Big( 1-\eta_i,\eta_i \Big| (\nu_i:i\in[m]) \Big) \overset{d}{=} \Big( 1-\eta_i,\eta_i \Big| \sum_{j\in\mathcal{H}_0,j\neq i}\eta_{j},(\nu_i:i\in[m]) \Big).
    \end{split}
    \end{equation*}

Integrating $(\nu_i:i\in[m])$ out, the following pairwise exchangeability holds:
    \begin{equation}
    \Big( \eta_i,1-\eta_i , \sum_{j\in\mathcal{H}_0,j\neq i}\eta_{j} \Big) \overset{d}{=} \Big( 1-\eta_i,\eta_i ,\sum_{j\in\mathcal{H}_0,j\neq i}\eta_{j} \Big).
    \end{equation}
By  Lemma \ref{lem:ex}, we have that
\begin{equation*}
\begin{split}
   & \EE\left[ \frac{\sum_{j\in\mathcal{H}_{0}} \eta_{j}}{1+\sum_{j\in\mathcal{H}_{0}} (1-\eta_{j})} \right] 
    \leq  \sum_{i\in\mathcal{H}_{0}} \EE\left[ \frac{ \eta_{i}}{\eta_{i}+(1-\eta_i)+\sum_{j\in\mathcal{H}_{0},j\neq i} (1-\eta_{j})} \right] \\
    = &  \sum_{i\in\mathcal{H}_{0}} \EE\left[ \frac{ 1-\eta_{i}}{\eta_{i}+(1-\eta_i)+\sum_{j\in\mathcal{H}_{0},j\neq i} (1-\eta_{j})} \right] \text{   (apply Lemma \ref{lem:ex})}\\
    \leq& \sum_{i\in\mathcal{H}_{0}} \EE\left[ \frac{ 1-\eta_{i}}{(1-\eta_i)+\sum_{j\in\mathcal{H}_{0},j\neq i} (1-\eta_{j})} \right] 
    = 1.
\end{split}
\end{equation*}
The proof of Lemma \ref{lemma1} is complete by noting that
    \begin{equation*}
  \EE \left[ \frac{\sum_{j\in\mathcal{H}_{0}} \II\{u_{j}<\Tilde{u}_{j}\}\II \{u_{j}\leq\tau\} }{1+\sum_{j\in\mathcal{H}_{0}} \II\{\Tilde{u}_{j}< u_{j}\}\II \{\Tilde{u}_{j}\leq\tau\}}  \right]
           \leq \EE [M_{m}]
            =\EE\left[ \frac{\sum_{j\in\mathcal{H}_{0}} \eta_{j}}{1+\sum_{j\in\mathcal{H}_{0}} (1-\eta_{j})} \right] \leq 1.
    \end{equation*}       
\end{proof}

\begin{proof}[Proof of Proposition \ref{thm:ev}.]

First, we can see that $\mathbb{E} \big[ \sum_{j \in \mathcal{H}_{0} }e_{j} \big]\leq m$ according to Lemma \ref{lemma1}. Therefore, the e-BH is valid for such a set of generalized e-values. Let $R=|\mathcal{R}|$. By the definition of  $\tau$, we have $\frac{1+\sum_{j=1}^{m} \II \{\Tilde{u}_{j}\leq\tau\wedge u_{j}\}}{R}\leq\alpha$, so for $j\in\mathcal{R}$,
\begin{equation*}
    e_{j} = \frac{m \II \{u_{j}\leq\tau\wedge\Tilde{u}_{j}\} }{1+\sum_{i=1}^{m} \II \{\Tilde{u}_{i}\leq\tau\wedge u_{i}\}} \geq \frac{m}{\alpha R}.
\end{equation*}
Therefore, $\hat{k} = \max\{i:e_{(i)}\geq\frac{m}{\alpha i}\} \geq R$. Since only the largest $R$ e-values are non-zero, we have that $e_j\geq e_{(R)}\geq e_{(\hat k)}$, indicating $j\in \mathcal{R}_{ebh}$. Conversely, if $j\notin \mathcal{R}$, $e_{j}=0$, which means that $j$ cannot be selected by the e-BH procedure, then $j\notin \mathcal{R}_{ebh}$. In conclusion, $\mathcal{R}=\mathcal{R}_{ebh}$.
\end{proof}

\subsection{Proof of Theorem \ref{thm:fdr}}

The theorem can be established as a corollary of Proposition \ref{thm:ev}: the e-BH procedure with generalized e-values defined in (\ref{newev}) is equivalent to Algorithm \ref{algo:claw}. Hence the conclusion follows from the e-BH theory \citep{wang22ev}. For readers interested in an alternative proof, we offer one directly utilizing Lemma \ref{lemma1}. Note that

\begin{equation*}
        \begin{split}
            \mathrm{FDP}(\mathcal{R}) &= \frac{\sum_{i\in\mathcal{H}_{0}} \II\{u_{i}\leq\tau\wedge\Tilde{u}_i\} }{(\sum_{i=1}^{m} \II\{u_{i}\leq\tau\wedge\Tilde{u}_i\}) \vee 1 } \\
            &= \frac{1+\sum_{i=1}^{m} \II\{\Tilde{u}_{i}\leq\tau\wedge u_{i}\} }{(\sum_{i=1}^{m} \II\{u_{i}\leq\tau\wedge\Tilde{u}_{i}\}) \vee 1 } \cdot \frac{1+\sum_{i\in\mathcal{H}_{0}} \II\{u_{i}\leq\tau\wedge\Tilde{u}_{i}\} }{1+\sum_{i=1}^{m} \II\{\Tilde{u}_{i}\leq\tau\wedge u_{i}\} }\\
            &= Q(\tau) \cdot \frac{1+\sum_{i\in\mathcal{H}_{0}} \II\{\Tilde{u}_{i}\leq\tau\wedge u_i\} }{1+\sum_{i=1}^{m} \II\{\Tilde{u}_{i}\leq\tau\wedge u_i\} } \cdot \frac{\sum_{j\in\mathcal{H}_{0}} \II \{u_{j}\leq\tau\wedge\Tilde{u}_{j}\} }{1+\sum_{j\in\mathcal{H}_{0}} \II \{\Tilde{u}_{j}\leq\tau\wedge u_{j}\}}\\
            &\leq \alpha \cdot 1 \cdot \frac{\sum_{j\in\mathcal{H}_{0}} \II \{u_{j}\leq\tau\wedge\Tilde{u}_{j}\} }{1+\sum_{j\in\mathcal{H}_{0}} \II \{\Tilde{u}_{j}\leq\tau\wedge u_{j}\}}.
        \end{split}
    \end{equation*}
    The last inequality holds because of the definition of $\tau$ and the trivial fact $\mathcal{H}_{0}\subset[m]$. The desired result follows by taking expectations on the both sides:
    \begin{equation*}
        \mathrm{FDR}=\EE[\mathrm{FDP}(\mathcal{R})] \leq \alpha \EE \left[ \frac{\sum_{j\in\mathcal{H}_{0}} \II \{u_{j}\leq\tau\wedge\Tilde{u}_{j}\} }{1+\sum_{j\in\mathcal{H}_{0}} \II \{\Tilde{u}_{j}\leq\tau\wedge u_{j}\}}  \right] \leq\alpha. \qed
    \end{equation*}

\subsection{Proof of Theorem \ref{thm:exch}} 
\label{app:thm2}

\begin{proof}[Proof of part (a).]
Let $\psi(x,y)$ be a vector-valued symmetric function satisfying $\psi(x,y)=\psi(y,x)$. Consider two random elements $X$ and $Y$ that are pairwise exchangeable, i.e. $(X,Y)\overset{d}{=}(Y,X)$. Then we have
\begin{equation}\label{pw_sym_func}
(X,Y,\psi(X,Y)) \overset{d}{=} (Y,X,\psi(Y,X))  {=} (Y,X,\psi(X,Y)).
\end{equation}

Suppose we are interested in utilizing function $g(t,S_{j};(\mathbf{T},\Tilde{\mathbf{T}}),\mathbf{S})$ to construct conformity scores. The swapping-invariance property \eqref{principle-pw} implies that $g$ is fully determined by the unordered pairs $\{T_1,\Tilde{T}_1\},\cdots,\{T_m,\Tilde{T}_m\}$ and the covariate sequence $\mathbf{S}$. To emphasize that $g$ it is invariant when swapping $T_i$ and $\Tilde{T}_i$, we adopt the notation $g(t,S_{j};\{T_i,\Tilde{T}_i\},(\mathbf{T}_{-i},\Tilde{\mathbf{T}}_{-i}),\mathbf{S})$, where $\{T_i,\Tilde{T}_i\}$ represents the unordered set of $T_i$ and $\Tilde{T}_i$. The corresponding scores are 
\begin{equation}\label{uj-tuj}
u_j=g(T_j,S_{j};\{T_i,\Tilde{T}_i\},(\mathbf{T}_{-i},\Tilde{\mathbf{T}}_{-i}),\mathbf{S}), \quad
\Tilde{u}_j=g(\Tilde{T}_j,S_{j};\{T_i,\Tilde{T}_i\},(\mathbf{T}_{-i},\Tilde{\mathbf{T}}_{-i}),\mathbf{S}).
\end{equation}

Let $\mathbf{G}_i \equiv(u_1,\cdots,u_{i-1},u_{i+1},\cdots,u_m,\Tilde{u}_1,\cdots,\Tilde{u}_{i-1},\Tilde{u}_{i+1},\cdots,\Tilde{u}_{m},T_i\vee\Tilde{T}_{i},T_i\wedge \Tilde{T}_{i})$. The vector $\mathbf{G}_i$ comprises two components. The first part encompasses scores from units excluding $i$: $$(u_1,\cdots,u_{i-1},u_{i+1},\cdots,u_m,\Tilde{u}_1,\cdots,\Tilde{u}_{i-1},\Tilde{u}_{i+1},\cdots,\Tilde{u}_{m})\coloneqq(\mathbf{U}_{-i},\Tilde{\mathbf{U}}_{-i}),$$ while the second part $(T_i\vee\Tilde{T}_{i},T_i\wedge \Tilde{T}_{i})$ provides the values of the unordered set $\{T_i, \Tilde T_i\}$. 

Note that $(T_i\vee\Tilde{T}_{i},T_i\wedge \Tilde{T}_{i})=(\Tilde{T}_i\vee{T}_{i},\Tilde{T}_i\wedge {T}_{i})$, and the scores $u_i$ and $\Tilde{u}_i$ are swapping invariant [cf. \eqref{uj-tuj}]. Given $(\mathbf{T}_{-i},\Tilde{\mathbf{T}}_{-i},\mathbf{S})$, the following mapping
$$
(T_i,\Tilde{T}_i) \mapsto \mathbf{G}_i \equiv(u_1,\cdots,u_{i-1},u_{i+1},\cdots,u_m,\Tilde{u}_1,\cdots,\Tilde{u}_{i-1},\Tilde{u}_{i+1},\cdots,\Tilde{u}_{m},T_i\vee\Tilde{T}_{i},T_i\wedge \Tilde{T}_{i})
$$
represents a (vector-valued) bivariate function that is symmetric with respect to $(T_i,\Tilde{T}_i)$.

Since permutation invariance implies swapping invariance, a direct consequence of condition \eqref{jointexch-covariate} is 
$$
(T_{i},\Tilde{T}_{i}|\mathbf{T}_{-i},\Tilde{\mathbf{T}}_{-i},\mathbf{S}) \overset{d}{=} (\Tilde{T}_{i},T_i|\mathbf{T}_{-i},\Tilde{\mathbf{T}}_{-i},\mathbf{S})
$$ for $i\in\mathcal{H}_0$. Applying \eqref{pw_sym_func}, we have
\begin{equation}\label{proof-condi-ex-id}
    \Big( T_i, \Tilde{T}_i \Big| \mathbf{G}_{i} , \mathbf{T}_{-i},\Tilde{\mathbf{T}}_{-i},\mathbf{S} \Big) \overset{d}{=} \Big( \Tilde{T}_i, T_i \Big|\mathbf{G}_{i}, \mathbf{T}_{-i},\Tilde{\mathbf{T}}_{-i},\mathbf{S} \Big).
\end{equation}

As the function $g(t,S_{i}; \{T_i,\Tilde{T}_i\},(\mathbf{T}_{-i},\Tilde{\mathbf{T}}_{-i}),\mathbf{S})$ is nonrandom with respect to $$\sigma(\{T_i,\Tilde{T}_i\},(\mathbf{T}_{-i},\Tilde{\mathbf{T}}_{-i}),\mathbf{S}) \subset \sigma(\mathbf{G}_{i}, {\mathbf{T}}_{-i}, \Tilde{\mathbf{T}}_{-i},\mathbf{S})$$ it follows from  \eqref{proof-condi-ex-id} that
$$
\Big( u_i, \Tilde{u}_i \Big| \mathbf{G}_{i},\mathbf{T}_{-i},\Tilde{\mathbf{T}}_{-i},\mathbf{S} \Big) \overset{d}{=} \Big( \Tilde{u}_i, u_i \Big|\mathbf{G}_{i}, \mathbf{T}_{-i},\Tilde{\mathbf{T}}_{-i},\mathbf{S} \Big), \quad \mbox{for $i\in\mathcal{H}_0$.}
$$
Finally, by integrating out $(\mathbf{T}_{-i},\Tilde{\mathbf{T}}_{-i},\mathbf{S})$ and $(T_i\vee\Tilde{T}_{i},T_i\wedge \Tilde{T}_{i})$, we arrive at the desired conclusion \eqref{pwexch}.
\end{proof}

\begin{remark}\rm
    In the proof of part (a), we have implicitly used the following fact: if $T_i$ and $\Tilde{T}_i$ are exchangeable, then $T_i$ and $\Tilde{T}_i$ are still exchangeable conditional on their order statistics $(T_i\vee\Tilde{T}_{i},T_i\wedge \Tilde{T}_{i})$, or the unordered set $\{T_i,\Tilde{T}_i\}$. The conclusion can be naturally extended to exchangeable variables sequence with arbitrary finite length (cf. \citealp{papadatos2022order}), which has been utilized in \citet{marandon22mlfdr}.
\end{remark}

\begin{proof}[Proof of part (b).]
We start by introducing the notations below for conciseness:
\begin{equation*}
\begin{split}
    \mathbf{A} &= (A_1,\cdots,A_{m+|\mathcal{H}_0|})=(\Tilde{T}_1,\cdots,\Tilde{T}_{m},T_i:i\in\mathcal{H}_0), \quad
    \mathbf{B} = (T_i:i\notin\mathcal{H}_0), \\
    \mathcal{C} &= \{T_1,\cdots,T_m,\Tilde{T}_1,\cdots,\Tilde{T}_m\}, \mbox{ an unordered multiset of the elements in $(\mathbf{T},\Tilde{\mathbf{T}})$}, \\
    U_i &= g(A_i;(\mathbf{T},\Tilde{\mathbf{T}})) = g(A_i;\mathcal{C}),\quad i\in\{1,\cdots,m+|\mathcal{H}_0|\}.
\end{split}
\end{equation*}

The permutation-invariance property \eqref{principle-jt} implies that $g(t;(\mathbf{T},\Tilde{\mathbf{T}})) = g(t;\mathcal{C})$. 
By condition \eqref{jointexch}, $$({A}_{\Pi_0(1)},\cdots,{A}_{\Pi_0(m+|\mathcal{H}_0|)},\mathbf{B}) \overset{d}{=} ({A}_1,\cdots,A_{m+|\mathcal{H}_0|},\mathbf{B}),$$ for any permutation $\Pi_0$ of $\{1,\cdots,m+|\mathcal{H}_0|\}$. According to \cite{papadatos2022order}, if a set of variables are exchangeable, then the variables remain exchangeable conditional on their unordered multiset. We conclude that
$$
({A}_{\Pi_0(1)},\cdots,{A}_{\Pi_0(m+|\mathcal{H}_0|)}| \mathbf{B} ,\mathcal{C}) \overset{d}{=} ({A}_1,\cdots,A_{m+|\mathcal{H}_0|} | \mathbf{B} ,\mathcal{C}).
$$

Since $g(t;(\mathbf{T},\Tilde{\mathbf{T}})) = g(t;\mathcal{C})$ is nonrandom conditional on $\mathcal{C}$, we have
$$\Big( g(A_1;\mathcal{C}),\cdots, g(A_{m+|\mathcal{H}_0|};\mathcal{C}) \Big| \mathbf{B},\mathcal{C} \Big) \overset{d}{=} \Big( g(A_{\Pi_{0}(1)};\mathcal{C}),\cdots, g(A_{\Pi_{0}(m+|\mathcal{H}_0|)};\mathcal{C}) \Big| \mathbf{B},\mathcal{C} \Big). 
$$
Note that $u_i=g(T_i;(\mathbf{T},\Tilde{\mathbf{T}}))=g(T_i;\mathcal{C})$ for $i\notin\mathcal{H}_0$ are deterministic given $(\mathbf{B},\mathcal{C})$, we have 
$$\Big( U_1,\cdots, U_{m+|\mathcal{H}_0|} \Big| \mathbf{B},\mathcal{C},(u_i:i\notin\mathcal{H}_0) \Big) \overset{d}{=} \Big( U_{\Pi_{0}(1)},\cdots, U_{\Pi_{0}(m+|\mathcal{H}_0|)} \Big|  \mathbf{B},\mathcal{C},(u_i:i\notin\mathcal{H}_0) \Big), $$
where $U_j=g(A_j;\mathcal{C})$, $j=1,\cdots,m+|\mathcal{H}_0|$. The desired result follows by integrating out $(\mathbf{B},\mathcal{C})$. 
\end{proof}

\subsection{Proof of Proposition \ref{prop:opt}}
\label{app:prop1}

Consider \( R_{i}(t) \) defined in \eqref{or_Rt}. Under model \eqref{model:mixture}, the elements in the set \(\{(T_{i},S_{i},\theta_{i}):i\in[m]\}\) are independent with each other. It follows that \(\PP(\theta_{i}=0|\mathbf{T},\mathbf{S})=\frac{(1-\pi_{S_{i}})f_{0}(T_{i})}{f_{S_i}(T_{i})} \coloneqq \mathrm{Clfdr}_{i}\). Let \( t^{OR} = \frac{t^{*}}{1+t^{*}} \). The monotonicity of the transformation \( x \mapsto \frac{x}{1+x} \) implies that the following two decisions are equivalent:
\[
\II \{i \in \mathcal{A}, R_{i}(T_{i}) \leq t^{*}\} = \II \{i \in \mathcal{A}, \mathrm{Clfdr}_i \leq t^{OR}\}, \quad \forall i \in [m].
\]
In the optimality theories presented in \cite{CARS} and \cite{marandon22mlfdr}, the oracle rules are based solely on the scores constructed from test data; hence, the decision rules are measurable with respect to \( (\mathbf{T},\mathbf{S}) \). However, the decision rule \(\pmb{\delta} = (\delta_i: i \in [m]) = (\II\{i \in \mathcal{R}\}: i \in [m])\) considered in our scenario is only measurable with respect to \( (\tilde{\mathbf{T}},\mathbf{T},\mathbf{S}) \) [particularly, $\pmb\delta$ is not measurable given only \( (\mathbf{T},\mathbf{S}) \)]. Thus, a more careful argument is necessary.

In what follows, the expectation is taken over the calibration, test, and auxiliary data \(\{\tilde{\mathbf{T}}, \mathbf{T}, \mathbf{S}\}\). The expected number of false positives can be calculated as:
\begin{equation}
	\begin{split}
		\mathbb{E}\Big[ \sum_{j\in\mathcal{H}_{0}} \delta_j \Big] =& \mathbb{E}\Big[ \sum_{j\in[m]} \delta_j \II\{\theta_j=0\} \Big] 
		= \mathbb{E} \left\{ \mathbb{E} \Big[\sum_{j=1}^{m} \II\{\theta_j=0\} \delta_j \Big| \tilde{\mathbf{T}}, \mathbf{T}, \mathbf{S} \Big] \right\} \\
		\overset{(i)}{=} & \mathbb{E} \left\{ \sum_{j=1}^{m} \delta_j \mathbb{E} \Big[\II\{\theta_j=0\} \Big| \tilde{\mathbf{T}}, \mathbf{T}, \mathbf{S} \Big] \right\} 
		\overset{(ii)}{=} \mathbb{E} \left\{ \sum_{j=1}^{m} \delta_j \mathbb{E} \Big[\II\{\theta_j=0\} \Big| \mathbf{T}, \mathbf{S} \Big] \right\} \\
		=& \mathbb{E} \left\{ \sum_{j=1}^{m} \delta_j \mathrm{Clfdr}_j \right\}.
	\end{split}
\end{equation}
Equality (i) holds because \((\delta_i : i \in [m])\) are measurable with respect to \((\tilde{\mathbf{T}}, \mathbf{T}, \mathbf{S})\), while Equality (ii) holds due to the independence between \((\theta_i : i \in [m])\) and \(\tilde{\mathbf{T}}\).

Let \(u_i = R_i(T_i)\) and \(\tilde{u}_i = R_i(\tilde{T}_i)\). Denote \(\mathrm{Clfdr}_{i}\) and \(\widetilde{\mathrm{Clfdr}}_i\) as the corresponding Clfdr values transformed from \(u_i\) and \(\tilde{u}_i\). Let \(\mathcal{A} = \{i \in [m] : u_i < \tilde{u}_i\} \equiv \{i \in [m] : \mathrm{Clfdr}_{i} < \widetilde{\mathrm{Clfdr}}_i\}\). As the proposition is trivially true if \(\mathcal{A} = \emptyset\), without loss of generality, we assume that \(\mathcal{A} \neq \emptyset\). According to our assumption, the rejection set is given by 
\[
\mathcal{R}_{u} = \{i : u_{i} \leq t^{*} \wedge \tilde{u}_i\} = \{i : \mathrm{Clfdr}_{i} \leq t^{OR} \wedge \widetilde{\mathrm{Clfdr}}_i\} = \{i \in \mathcal{A} : \mathrm{Clfdr}_{i} \leq t^{OR}\}.
\]
This rejection set \(\mathcal{R}_{u}\) has an mFDR of exactly \(\alpha\), which leads to
\begin{equation}\label{totalexpect} \small
    \mathbb{E} \left\{ \sum_{i=1}^{m} (\mathrm{Clfdr}_{i} - \alpha) \mathbb{I}\{ \mathrm{Clfdr}_{i} \leq t^{OR} \wedge \widetilde{\mathrm{Clfdr}}_i \} \right\} = \mathbb{E} \left\{ \sum_{i \in \mathcal{A}} (\mathrm{Clfdr}_{i} - \alpha) \mathbb{I}\{ \mathrm{Clfdr}_{i} \leq t^{OR} \} \right\} = 0.
\end{equation}
Thus, if \(\alpha > t^{OR}\), the sum in \eqref{totalexpect} would be negative, leading to the conclusion that \(\alpha \leq t^{OR}\).

Define
$Q_{OR}(t) = \frac{\sum_{j\in\mathcal{H}_{0}\cap\mathcal{A}} \mathbb{I}\{ \mathrm{Clfdr}_{j}\leq t \}}{\sum_{j\in\mathcal{A}} \mathbb{I}\{ \mathrm{Clfdr}_{j}\leq t \}}.$
Let $Q_{OR}(t_{j})=\alpha_{j}$ for $j=1,2$. By \eqref{totalexpect}, we have $\alpha_j\leq t_j$ and
\begin{equation}\label{totalexpect2}
    \EE \left[\sum_{i\in\mathcal{A}} (\mathrm{Clfdr}_{i}-\alpha_{j})  \mathbb{I} \{\mathrm{Clfdr}_{i}\leq t_{j}\} \right]=0.
\end{equation}

We claim that \(Q_{OR}(t)\) is monotone in \(t\). We only need to show that \(\alpha_{1} \leq \alpha_{2}\) if \(t_{1} < t_{2}\) and shall prove this by contradiction. Assume instead that \(\alpha_{1} > \alpha_{2}\) for \(t_{1} < t_{2}\). Then we can write:
\[
\begin{split}
    &(\mathrm{Clfdr}_{i} - \alpha_{2}) \mathbb{I}(\mathrm{Clfdr}_{i} \leq t_{2}) = (\mathrm{Clfdr}_{i} - \alpha_{2}) \mathbb{I}(\mathrm{Clfdr}_{i} \leq t_{1}) + (\mathrm{Clfdr}_{i} - \alpha_{2}) \mathbb{I}(t_{1} < \mathrm{Clfdr}_{i} \leq t_{2}) \\
    =& (\mathrm{Clfdr}_{i} - \alpha_{1}) \mathbb{I}(\mathrm{Clfdr}_{i} \leq t_{1}) + (\alpha_{1} - \alpha_{2}) \mathbb{I}(\mathrm{Clfdr}_{i} \leq t_{2}) + (\mathrm{Clfdr}_{i} - \alpha_{1}) \mathbb{I}(t_{1} < \mathrm{Clfdr}_{i} \leq t_{2}),
\end{split}
\]
where we have 
$
\mathbb{E}[(\alpha_{1} - \alpha_{2}) \mathbb{I}(\mathrm{Clfdr}_{i} \leq t_{2}) + (\mathrm{Clfdr}_{i} - \alpha_{1}) \mathbb{I}(t_{1} < \mathrm{Clfdr}_{i} \leq t_{2})] > 0
$
for all \(i \in \mathcal{A}\). Consequently, it follows that 
\[
\mathbb{E}\left[\sum_{i \in \mathcal{A}} (\mathrm{Clfdr}_{i} - \alpha_{2}) \mathbb{I}(\mathrm{Clfdr}_{i} \leq t_{2})\right] > 0,
\]
which contradicts the condition outlined in \eqref{totalexpect2}. 
Thus, we conclude that our initial assumption must be incorrect, claiming that \(Q_{OR}(t)\) is indeed monotone in \(t\).

Let \(\mathcal{R}' \subset \mathcal{A}\) be a rejection set satisfying \(\mbox{mFDR} \leq \alpha\). The corresponding individual decisions are defined as \(\delta_{i}' = 1\) for \(i \in \mathcal{R}'\) and \(\delta_{i}' = 0\) otherwise. Using similar arguments as in (\ref{totalexpect}), we have 
$$
\mathbb{E}\left[\sum_{i=1}^{m}(\mathrm{Clfdr}_{i}-\alpha)\delta_{i}'\right] = \mathbb{E}\left[\sum_{i \in \mathcal{A}}(\mathrm{Clfdr}_{i}-\alpha)\delta_{i}'\right] \leq 0.
$$
Note that \(\mathbb{I}\{ \mathrm{Clfdr}_{i} \leq t^{OR} \} = \mathbb{I}\left\{ \frac{\mathrm{Clfdr}_{i}-\alpha}{1 - \mathrm{Clfdr}_{i}} \leq \lambda_{OR}\right\}\). Let \(\lambda_{OR} = \frac{t^{OR}-\alpha}{1 - t^{OR}}\). Since \(\frac{x - \alpha}{1 - x}\) is increasing in \(x\) for \(\alpha < x < 1\), we have 
\begin{equation}\label{ineq1}
    \mathbb{E}\left[\sum_{i \in \mathcal{A}}(\mathrm{Clfdr}_{i} - \alpha)\left(\mathbb{I}\{ \mathrm{Clfdr}_{i} \leq t^{OR} \} - \delta_{i}'\right)\right] \geq 0.
\end{equation}
It follows that, for all \(i \in \mathcal{A}\):
\begin{eqnarray*}
\mathrm{Clfdr}_{i} - \alpha - \lambda_{OR}(1 - \mathrm{Clfdr}_{i}) \leq 0 \text{ if } \mathbb{I}\{ \mathrm{Clfdr}_{i} \leq t^{OR} \} > \delta_{i}', 
\\
\mathrm{Clfdr}_{i} - \alpha - \lambda_{OR}(1 - \mathrm{Clfdr}_{i}) > 0 \text{ if } \mathbb{I}\{ \mathrm{Clfdr}_{i} \leq t^{OR} \} < \delta_{i}'.
\end{eqnarray*}
We conclude that for all $i\in\mathcal{A}$, 
$$
(\mathbb{I}\{ \mathrm{Clfdr}_{i}\leq t^{OR} \}-\delta_{i}')[\mathrm{Clfdr}_{i}-\alpha-\lambda_{OR}(1-\mathrm{Clfdr}_{i})]\leq 0.
$$ 
Summing over $i$ and taking expectation, we have
\begin{equation}\label{ineq2}
    \mathbb{E}\left\{\sum_{i\in\mathcal{A}} (\mathbb{I}\{ \mathrm{Clfdr}_{i}\leq t^{OR} \}-\delta_{i}')[\mathrm{Clfdr}_{i}-\alpha-\lambda_{OR}(1-\mathrm{Clfdr}_{i})] \right\} \leq 0.
\end{equation}
Combining (\ref{ineq1}) and (\ref{ineq2}), we have
\begin{equation*}
  \lambda_{OR}\mathbb{E}\left\{\sum_{i\in\mathcal{A}} (\mathbb{I}\{ \mathrm{Clfdr}_{i}\leq t^{OR} \}-\delta_{i}')(1-\mathrm{Clfdr}_{i}) \right\} 
      \geq \mathbb{E}\left\{\sum_{i\in\mathcal{A}} (\mathbb{I}\{ \mathrm{Clfdr}_{i}\leq t^{OR} \}-\delta_{i}')(\mathrm{Clfdr}_{i}-\alpha) \right\} \geq 0.
\end{equation*}

Finally, noting that \(\lambda_{OR} > 0\), the expected number of true discoveries for any rejection rule \(\mathcal{R} \subset \mathcal{A} \subset [m]\) is given by 
$$
\mathbb{E}\left\{\sum_{i=1}^{m}\mathbb{I}\{i \in \mathcal{R}\}(1 - \mathrm{Clfdr}_{i})\right\} = \mathbb{E}\left\{\sum_{i \in \mathcal{A}}\mathbb{I}\{i \in \mathcal{R}\}(1 - \mathrm{Clfdr}_{i})\right\}.
$$ 
The proof is completed by noting that
\begin{eqnarray*}
&& \mathbb{E}\left\{\sum_{i \in [m]}\mathbb{I}\{\mathrm{Clfdr}_{i} \leq t^{OR} \wedge \widetilde{\mathrm{Clfdr}}_{i}\}(1 - \mathrm{Clfdr}_{i})\right\} \\ 
& = & \mathbb{E}\left\{\sum_{i \in \mathcal{A}}\mathbb{I}\{\mathrm{Clfdr}_{i} \leq t^{OR}\}(1 - \mathrm{Clfdr}_{i})\right\} \\ 
& \geq & \mathbb{E}\left\{\sum_{i \in \mathcal{A}}\delta_{i}'(1 - \mathrm{Clfdr}_{i})\right\} \\ 
& = & \mathbb{E}\left\{\sum_{i \in [m]}\delta_{i}'(1 - \mathrm{Clfdr}_{i})\right\}.  \qed
\end{eqnarray*}

\subsection{Proof of Theorem \ref{thm:CLAW}}

As Algorithm \ref{algo:claw_clfdr} is a special case of Algorithm \ref{algo:claw}, we only need to verify that our conformity scores are pairwise exchangeable. Consider the score function 
$$
g(t,S_{i};(\mathbf{T},\Tilde{\mathbf{T}}),\mathbf{S})=\frac{1/2-\hat{\pi}_{S_i}^{**}((\mathbf{T},\Tilde{\mathbf{T}}),\mathbf{S})}{1-\hat{\pi}_{S_i}^{**}((\mathbf{T},\Tilde{\mathbf{T}}),\mathbf{S})}\frac{\widehat{\mathrm{Clfdr}}_{i}^{**}(t;(\mathbf{T},\Tilde{\mathbf{T}}),\mathbf{S})}{1-\widehat{\mathrm{Clfdr}}^{**}(t,S_{i};(\mathbf{T},\Tilde{\mathbf{T}}),\mathbf{S})}.
$$ 
It follows from the constructions of $\hat{\pi}_{S_i}^{**}$ [\eqref{pi-ds}] and $\hat{f}_{S_i}^{**}(t)$ [\eqref{f-ds}] that
$$
  \hat{\pi}^{**}_{S_i}((\mathbf{T},\Tilde{\mathbf{T}})_{\mathrm{swap}(\mathcal{J})},\mathbf{S})=\hat{\pi}^{**}_{S_i}((\mathbf{T},\Tilde{\mathbf{T}}),\mathbf{S}) \mbox{ and }
  \hat{f}_{S_i}^{**}(t;(\mathbf{T},\Tilde{\mathbf{T}})_{\mathrm{swap}(\mathcal{J})},\mathbf{S}) = \hat{f}_{S_i}^{**}(t;(\mathbf{T},\Tilde{\mathbf{T}}),\mathbf{S})\;\; \forall \mathcal{J}\subset[m].
$$ 
Hence $\widehat{\mathrm{Clfdr}}^{**}(t,S_{i};(\mathbf{T},\Tilde{\mathbf{T}}),\mathbf{S})$, and consequently $g(t,S_{i};(\mathbf{T},\Tilde{\mathbf{T}}),\mathbf{S})$, satisfies condition (i) of Theorem \ref{thm:exch} (a). Finally, the pairwise exchangeability between conformity scores follow from  \eqref{jointexch-covariate} and Theorem \ref{thm:exch} (a), completing the proof. \qed 

\subsection{Proof of Theorem \ref{thm:multiev}}
\label{app:ev}

First recall that if $\EE\left[\sum_{i\in\mathcal{H}_{0}}e_{i}^{(k)} \right] \leq m$, then $e_{1}^{(k)},\cdots,e_{m}^{(k)}$ constitute a set of generalized e-values. Moreover, the work by \citet{wang22ev} demonstrates that by utilizing a set of generalized e-values, the e-BH procedure effectively controls the FDR at the nominal level. Note that 
\begin{equation*}
    \EE [\sum_{i\in\mathcal{H}_{0}}\bar{e}_{i} ] = \EE\left[ \sum_{i\in\mathcal{H}_{0}}\frac{1}{\sum_{k=1}^{K}v_{k}}\sum_{k=1}^{K}v_{k}e_{i}^{(k)} \right]= \frac{1}{\sum_{k=1}^{K}v_{k}} \sum_{k=1}^{K}v_{k}\EE\left[\sum_{i\in\mathcal{H}_{0}}e_{i}^{(k)} \right] \leq m.
\end{equation*}
We conclude that $\bar{e}_{1},\cdots,\bar{e}_{m}$ represent a set of generalized e-values, establishing the validity of Algorithm \ref{algo:derandom}. \qed

\section{Auxiliary Theoretical Results} 
\label{app:proof-4-app}
This section provides proofs for the theorems and properties developed in Section \ref{app-sec:method-develop}, and verification of exchangeability conditions in applications.

\subsection{Proof of Theorem \ref{appthm:exch}}

This theorem will be proved as an extension of Theorem \ref{thm:exch}, but some differences will occur because of the introduction of the additional training data set $\mathbf{T}^{tr}$ and a less stringent condition \eqref{data_pwexch}. Hence, some details may be omitted in the proofs of this section.

\begin{proof}[Proof of part (a)]
For each $i\in[m]$, consider the following vector-valued bivariate function 
$$
(T_i,\Tilde{T}_i) \mapsto \mathbf{G}_i \equiv(u_1,\cdots,u_{i-1},u_{i+1},\cdots,u_m,\Tilde{u}_1,\cdots,\Tilde{u}_{i-1},\Tilde{u}_{i+1},\cdots,\Tilde{u}_{m},T_i\vee\Tilde{T}_{i},T_i\wedge \Tilde{T}_{i}),
$$
where 
$u_j=g(T_j,S_{j};(\mathbf{T},\Tilde{\mathbf{T}}),\mathbf{T}^{tr},\mathbf{S}),\quad  \Tilde{u}_j=g(\Tilde{T}_j,S_{j};(\mathbf{T},\Tilde{\mathbf{T}}),\mathbf{T}^{tr},\mathbf{S}).$ 
Given $(\mathbf{T}_{-i},\Tilde{\mathbf{T}}_{-i},\mathbf{T}^{tr},\mathbf{S})$, the vector 
$\mathbf{G}_i$ 
is a symmetric bivariate function of $(T_i,\Tilde{T}_i)$. For $i\in\mathcal{H}_0$, we have 
$( T_i, \Tilde{T}_i, \mathbf{G}_{i} | \mathbf{T}_{-i},\Tilde{\mathbf{T}}_{-i},\mathbf{T}^{tr},\mathbf{S} ) \overset{d}{=} ( \Tilde{T}_i, T_i, \mathbf{G}_{i} | \mathbf{T}_{-i},\Tilde{\mathbf{T}}_{-i},\mathbf{T}^{tr},\mathbf{S} )$ by condition \eqref{data_pwexch} and claim \eqref{pw_sym_func}.
Equivalently, we have that
\begin{equation*}
    \Big( T_i, \Tilde{T}_i \Big| \mathbf{G}_{i} ,\mathbf{T}_{-i},\Tilde{\mathbf{T}}_{-i},\mathbf{T}^{tr},\mathbf{S} \Big) \overset{d}{=} \Big( \Tilde{T}_i, T_i \Big|\mathbf{G}_{i}, \mathbf{T}_{-i},\Tilde{\mathbf{T}}_{-i},\mathbf{T}^{tr},\mathbf{S} \Big).
\end{equation*}
Consider $g(t,S_{i};(\mathbf{T},\Tilde{\mathbf{T}}),\mathbf{T}^{tr},\mathbf{S})=g(t,S_{i};         \{T_i,\Tilde{T}_i\},(\mathbf{T}_{-i},\Tilde{\mathbf{T}}_{-i}),\mathbf{T}^{tr},\mathbf{S}),$ which is nonrandom with respect to $ \sigma(\{T_i,\Tilde{T}_i\},(\mathbf{T}_{-i},\Tilde{\mathbf{T}}_{-i}),\mathbf{T}^{tr},\mathbf{S}) \subset \sigma(\mathbf{G}_{i}, \mathbf{T}_{-i},\Tilde{\mathbf{T}}_{-i},\mathbf{T}^{tr},\mathbf{S})$, we have 
$$
\Big( u_i, \Tilde{u}_i \Big| \mathbf{G}_{i},\mathbf{T}_{-i},\Tilde{\mathbf{T}}_{-i},\mathbf{T}^{tr},\mathbf{S} \Big) \overset{d}{=} \Big( \Tilde{u}_i, u_i \Big|\mathbf{G}_{i},\mathbf{T}_{-i},\Tilde{\mathbf{T}}_{-i},\mathbf{T}^{tr},\mathbf{S} \Big),
$$
for $i\in\mathcal{H}_0$. The desired result follows  by integrating out $(\mathbf{T}_{-i},\Tilde{\mathbf{T}}_{-i},\mathbf{T}^{tr},\mathbf{S})$ and $(T_i\vee\Tilde{T}_{i},T_i\wedge \Tilde{T}_{i})$.
\end{proof}

\begin{proof}[Proof of part (b).]
We first modify the definitions of the following key quantities:
\begin{equation*}
\begin{split}
    \mathbf{A} &= (A_1,\cdots,A_{m+|\mathcal{H}_0|})=(\Tilde{T}_1,\cdots,\Tilde{T}_{m},T_i:i\in\mathcal{H}_0), \\
    \mathbf{B} &= (T_i:i\notin\mathcal{H}_0), \\
    \mathcal{C} &= (\mathbf{T}^{tr},\{T_1,\cdots,T_m,\Tilde{T}_1,\cdots,\Tilde{T}_m\}),\\
    U_i &= g(A_i;(\mathbf{T},\Tilde{\mathbf{T}}),\mathbf{T}^{tr}) = g(A_i;\mathcal{C}),\quad i\in\{1,\cdots,m+|\mathcal{H}_0|\}.
\end{split}
\end{equation*}
The rest of the proof follows the same lines as the proof of Theorem \ref{thm:exch} (b), and is omitted. 
\end{proof}

\subsection{Verification for Properties \ref{app-prop:confpv}-\ref{app-prop:augm}}

\begin{proof}[Verification of Property \ref{app-prop:confpv}.]
 Consider conformal p-values \eqref{pu-conf-pv} calculated through
    \begin{equation}\label{app-proof-pvfunc}
        \hat{p}(t;\mathbf{T},\Tilde{\mathbf{T}},\mathbf{T}^{tr})=\frac{1+|\{k\in\mathcal{D}^{tr}_{1}:s(T_{k}^0;\mathbf{T},\Tilde{\mathbf{T}},\mathbf{T}^{tr1},\mathbf{T}^{tr2})\leq s(t;\mathbf{T},\Tilde{\mathbf{T}},\mathbf{T}^{tr1},\mathbf{T}^{tr2})\}| }{1+|\mathcal{D}^{tr}_1|},
    \end{equation}
where $\mathbf{T}^{tr}=\mathbf{T}^{tr1}\cup\mathbf{T}^{tr2}$, and $s(t)$ satisfies
    $$
    s(t;\mathbf{T},\Tilde{\mathbf{T}},\mathbf{T}^{tr1},\mathbf{T}^{tr2})=s(t;(\mathbf{T},\Tilde{\mathbf{T}},\mathbf{T}^{tr1})_\Pi,\mathbf{T}^{tr2}).
    $$
This implies that $\hat{p}(t;\mathbf{T},\Tilde{\mathbf{T}},\mathbf{T}^{tr})$ fulfills the permutation-invariance condition \eqref{app:principle-jt} with respect to $(\mathbf{T},\Tilde{\mathbf{T}},\mathbf{T}^{tr1})$ and its generalized swapping-invariance condition \eqref{app:principle-pw} with respect to $(\mathbf{T},\Tilde{\mathbf{T}})$. According to Theorem \ref{appthm:exch}, part (a) of Property \ref{app-prop:confpv} follows from \eqref{jointexch}, and part (b) follows from \eqref{data_pwexch}. 
\end{proof}

\smallskip

\begin{proof}[Verification of Property \ref{app-prop:group}.]
    According to Theorem \ref{appthm:exch} (a), we only need to show that $\hat{R}(t,k)$ is swapping-invariant with respect to $(\mathbf{T},\Tilde{\mathbf{T}})$. As $\hat{R}(t,k)$ is constructed via $\hat{\pi}^{**}_k$ [\eqref{pu-p**}] and $\hat{r}(t,k)$ [\eqref{app-pu-group-dr}], we only need to verify condition \eqref{app:principle-pw} for these two estimators. First, since the conformal p-value function \eqref{app-proof-pvfunc} is permutation-invariant with respect to $(\mathbf{T},\Tilde{\mathbf{T}})$, we have
$\hat{\pi}^{**}_{k}((\mathbf{T},\Tilde{\mathbf{T}})_{\mathrm{swap}(\mathcal{J})},\mathbf{T}^{tr},\mathbf{S})=\hat{\pi}^{**}_{k}((\mathbf{T},\Tilde{\mathbf{T}}),\mathbf{T}^{tr},\mathbf{S})$
for any $\mathcal{J}\subset[m]$, establishing \eqref{app:principle-pw} for $\hat{\pi}^{**}_{k}$. Moreover, note that $\hat{r}(t,k)=\hat{r}(t,k;\cup_{i: S_i=k}\{T_i, \Tilde T_i\}, \mathbf{T}^{tr})$
is determined by $\mathbf{S}$, $\mathbf{T}^{tr}$ and the union of the unordered sets $\{T_i,\Tilde{T}_i\}$ for $S_i=k$, establishing \eqref{app:principle-pw} for $\hat{r}(t,k)$. 
\end{proof}

\smallskip

\begin{proof}[Verification of Property \ref{app-prop:augm}. ]
    According to Theorem \ref{appthm:exch} (a), we only need to show the score function $\hat{R}(t,k)$ is swapping-invariant with respect to $(\mathbf{T},\Tilde{\mathbf{T}})$. As $\hat{R}(t,k)$ is derived by $\hat{\pi}^{**}_k$ in \eqref{pu-p**} and $\hat{r}(t,k)$ in \eqref{app-pu-augm-dr}, we need to justify \eqref{app:principle-pw} for these two estimators. In the proof of Property \ref{app-prop:group}, we have verified \eqref{app:principle-pw} for $\hat{\pi}^{**}_k$ in \eqref{pu-p**}. Now we turn to the density ratio estimator \eqref{app-pu-augm-dr}. Since 
$$
\hat{r}(t,s) = \hat{r}(t,s;\mathbf{T}^+,\tilde{\mathbf{T}}^+,\mathbf{T}^{tr+}  )= \hat{r}(t,s; \cup_{i\in[m]}\{T_i^+,\Tilde{T}_i^+\},\{T_i^{tr+}:i\in[m]\} )
$$ is determined by the unordered sets $\cup_{i\in[m]}\{T_i^+,\Tilde{T}_i^+\}$ and $\{T_i^{tr+}:i\in[m]\}$, we have that
$$\hat{r}(t,s;(\mathbf{T}^+,\tilde{\mathbf{T}}^+)_{\rm{swap}(\mathcal{J})},\mathbf{T}^{tr+}) = \hat{r}(t,s;\mathbf{T}^+,\tilde{\mathbf{T}}^+,\mathbf{T}^{tr+}  ),$$
for any $\mathcal{J}\subset[m]$. By the construction of the augmented data, swapping $T_i^+=(T_i,S_i)$ and $\Tilde{T}_i^+=(\Tilde{T}_i,S_i)$ is equivalent to swapping $T_i$ and $\Tilde{T}_i$ given $\mathbf{S}$, implying that \eqref{app:principle-pw} holds for $\hat{r}(t,k)$ in \eqref{app-pu-augm-dr}.
\end{proof}

\section{Further details for comparison with existing work}
\label{app:discuss}

\subsection{CLAW versus PLIS}
\label{subsec:plis}

The PLIS procedure, proposed by \citet{zhao2023plis}, offers an assumption-lean approach for multiple testing in structured probabilistic models such as the hidden Markov models (HMM). PLIS begins by constructing baseline data and subsequently computes the conformity scores using a user-specified working model. PLIS guarantees finite-sample FDR control under the pairwise exchangeability condition, and exhibits substantial power improvement when the underlying data-generating process can be well represented by the chosen working model. 

In this section, we begin by comparing the theoretical frameworks of CLAW and PLIS. Subsequently, we present numerical results to illustrate the strengths and limitations of both methods across various practical scenarios.

\subsubsection{Theoretical comparisons}

We outline several significant distinctions between PLIS and CLAW below. 

\begin{itemize}

\item Firstly, the two methods serve different purposes. PLIS is designed to integrate the \emph{dependency structure} among hidden states, while CLAW aims to capture the local \emph{smoothness structure} inherent in the covariates. As we will demonstrate shortly, each method has its own merits and limitations, making them suitable for different scenarios.

\item Secondly, PLIS requires specifying a class of \emph{parametric} working models, such as the Hidden Markov Model (HMM), to effectively capture the underlying dependency structure. However, this model specification may be impractical within our problem framework, where structural information is encoded by a covariate sequence $(S_i)_{i=1}^m$. In contrast, CLAW adopts a \emph{nonparametric} approach to constructing a bivariate score function, eliminating the need for specifying a parametric working model, thereby offering a more flexible framework during the modeling phase. 

\item Finally, the methodologies for constructing conformity scores and the underlying theory of pairwise exchangeability differ between PLIS and CLAW. PLIS primarily focuses on the construction of baseline data, which may systematically deviate from the optimal rule. In contrast, CLAW utilizes novel techniques across three key steps: (a) designing covariate-adaptive weights, (b) learning swapping-invariant score functions, and (c) implementing monotone transformations. These innovative techniques enable the CLAW procedure to effectively emulate the optimal rule. 
\end{itemize}

\subsubsection{Numerical comparisons}

We conduct a numerical experiment to illustrate the distinct advantages of CLAW and PLIS. The results are presented in Figure \ref{fig:CLAW_vs_PLIS}. In our experiment, PLIS is implemented using a parametric HMM as its working model, while the working model for CLAW is nonparametric. The implementation details of PLIS and CLAW can be found in \citet{zhao2023plis} and Section \ref{subsec:spatial}, respectively. 

The data are generated according to the following model:
\begin{equation*}
    T_{i}|(\theta_{i},S_{i}=s) \overset{ind.}{\sim} (1-\theta_{i})\mathcal{N}(0,1)+\theta_{i}\mathcal{N}(0,\mu_{s}),\quad i=1,\cdots,3000,
\end{equation*}
where $\theta_{i}=0$ ($1$) denotes $H_i$ is true (false). We consider two settings in our illustrations.
\begin{enumerate}[I.]
    \item HMM setting: The hidden states $(\theta_i)_{i=1}^{3000}$ form a binary Markov chain. The transition probabilities are $a_{00}=0.95$ and $a_{11}=0.5$, where $a_{ij}=\PP(\theta_{t+1}=j|\theta_{t}=i)$. The signal amplitude $\mu_{s}=\mu$ does not change over $s$. 
    \item Covariate-adaptive model setting: $\theta_{i}|(S_{i}=s)\overset{ind.}{\sim} Bin(1,\pi_{s})$, where $\pi_{s}=0.4(1+\sin(0.2s))$ for $s\in[201,500]\cup[801,1100]\cup[1501,1800]\cup[2101,2400]$ and $\pi_{s}=0.02$ otherwise. The signal amplitude $\mu_{s}=\mu+0.2\sin(0.6s)$ varies as a function of $s$.
\end{enumerate}
In both settings, the null samples are generated as i.i.d. $\mathcal{N}(0,1)$ variables to implement both PLIS and CLAW. 

\begin{figure}[!htbp]
    \centering
    \includegraphics[width=0.9\linewidth,height=0.5\linewidth]{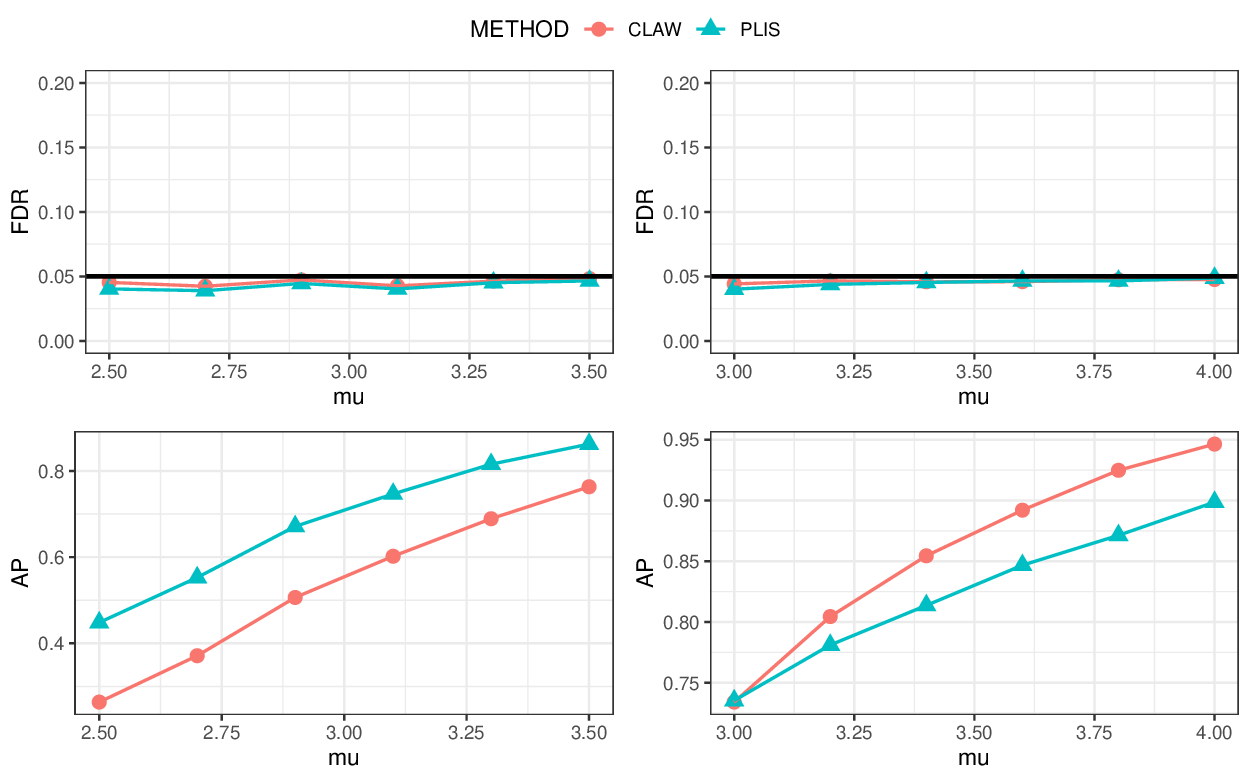}
    \caption{Comparison for CLAW and PLIS at FDR level $\alpha=0.05$. The left column shows the results when the underlying data generation process is an HMM (setting I), while the right column considers a generic covariate-adaptive model that deviates from the HMM (setting II).}\label{fig:CLAW_vs_PLIS}
\end{figure}

Figure \ref{fig:CLAW_vs_PLIS} demonstrates that both CLAW and PLIS control the FDR at the nominal level, though their powers vary significantly. When the underlying model is an HMM, PLIS exhibits superior performance compared to CLAW. Conversely, in the context of a generic covariate-adaptive model that diverges from the HMM, CLAW demonstrates a clear advantage.

\subsection{CLAW versus conformal methods}
\label{app:cf}

In Section \ref{app:relation}, we have illustrated how the decision process \eqref{confq} can be derived by modifying the CBH procedure \eqref{def:conf_pv_bh}. We now provide additional illustrations to explain why CBH exhibits conservativeness and how  CLAW can overcome this issue. First, we can rewrite the numerator of \eqref{def:conf_pv_bh} as:
$$ 
1+\sum_{j\in\mathcal{H}_{0}}\II\{\Tilde{u}\leq t\}+\sum_{j\in[m]\setminus\mathcal{H}_{0}}\II\{\Tilde{u}\leq t\}.
$$
In this expression, the term $\sum_{j\in\mathcal{H}_{0}}\II\{\Tilde{u}\leq t\}$ represents the estimated number of false discoveries when the threshold is $t$. The term ``+1'' is necessary for establishing martingale properties and ensuring the super-uniformity of conformal p-values \eqref{def:conf_pv} under the null. However, the term $\sum_{j\in[m]\setminus\mathcal{H}_{0}}\II\{\Tilde{u}\leq t\}$ is redundant and contributes to the conservativeness, which can be effectively addressed by introducing the new decision process \eqref{confq}. For $j\in[m]\setminus\mathcal{H}_{0}$, $u_j$ tends to be smaller than $\Tilde{u}_j$ with high probability. This is because $u_j$ is calculated based on a non-null sample $T_j$, while $\Tilde{u}_j$ is calculated based on a null sample $\Tilde{T}_j$. Consequently, when $F_0$ and $F_{1s}$ are well distinguished, we have:
$$
\sum_{j\in[m]\setminus\mathcal{H}_{0}}\II\{\Tilde{u}_{j} \leq t\wedge u_{j}\}\approx 0.
$$
This approximation holds with high probability. Finally, regarding the denominator, we have:
$$
\sum_{j\in[m]\setminus\mathcal{H}_{0}}\II\{{u}_{j} \leq t\wedge \Tilde{u}_{j}\}\approx \sum_{j\in[m]\setminus\mathcal{H}_{0}}\II\{{u}_{j} \leq t\}
$$
This approximation holds with high probability for moderate $t$. This indicates that we will not ``lose'' too many correct counts compared to the unmodified method \eqref{def:conf_pv_bh}, contributing to the effectiveness of \eqref{confq}.

\subsection{Comparison with Storey-BH type methods}
\label{app:storey}

In Section \ref{subsec:clfdr}, we proposed an estimator \eqref{pi-ds} to assess the signal's proportion $\pi_{S_i}$ in the EB working model \eqref{model:mixture}. As illustrated in Section \ref{app:subsub-proportion}, \eqref{pi-ds} serves as a conformalized version of the locally adaptive estimator $\hat{\pi}_{S_i}^{*}$ in \eqref{conventional_est}, which generalizes the standard Storey's estimator \citep{storey02}:
\begin{equation}\label{pistorey}
    \hat{\pi}^{Storey}=1-\frac{\sum_{j=1}^{m}\II\{p(T_{j})>\lambda\}}{1-\lambda},
\end{equation}
for evaluating the sparsity level that varies depending on $S_i$.

Next, we would like to summarize the differences and connections between \eqref{pistorey} and the proposed estimator \eqref{pi-ds} in the following three points:
\begin{enumerate}
    \item \emph{The estimands corresponding to the two estimators are different.} Unlike Storey's estimator, which is concerned with the \emph{global sparsity parameter} \(\pi\), our estimator (18) focuses on the \emph{local sparsity level} \(\pi_s\), which can depend on covariate values. Furthermore, the methodologies for estimating these two parameters differ substantially: Storey's estimator, as utilized in AdaDetect, relies solely on the conformal p-values derived from the test samples. In contrast, our estimator (18) employs a more sophisticated screening scheme and leverages p-values from both the test and calibration samples.

\item \emph{The two estimators play different roles in FDR analysis.} An FDR procedure typically involves two critical steps: ranking and thresholding. The Storey estimator, employed in conjunction with AdaDetect, provides a global correction that adjusts the nominal FDR level \(\alpha\) to \(\alpha/(1-\hat{\pi})\). This estimator operates solely within the thresholding step and has no influence on the ranking process. Conversely, our estimator (18) is instrumental in constructing conformity scores, as it utilizes side information to improve ranking through covariate-adaptive weights, thereby enhancing the overall efficiency of the FDR analysis. 

\item \emph{The two estimators operate within distinct classes of base algorithms.} The BH-type methods, exemplified by counting knockoffs \citep{weinstein17counting} and AdaDetect \citep{marandon22mlfdr}, achieve the nominal FDR level through Storey's correction. In contrast, BC-type methods, including knockoff filters and CLAW, are capable of attaining the nominal FDR level adaptively -- \textbf{without relying on Storey's correction} -- when the signals are sufficiently strong. This phenomenon was initially noted in Appendix B of \citet{barber15knockoff}. The FDR level of CLAW can be very close to the nominal level in many settings; this capability is attributable to the adaptivity of the BC algorithm (instead of Storey's correction). As we mentioned in the previous point, the contribution of our estimator (18) lies in enhancing efficiency in the \textbf{ranking step} through covariate-adaptive weights, whereas the Storey estimator's role in AdaDetect is to assist in achieving the nominal FDR level in the \textbf{thresholding step} by adjusting the target FDR level.
\end{enumerate}

\subsection{A numerical study comparing CLAW, BH, BH-Storey, AdaDetect and AdaDetect-Storey}

Next, we present a comparison of the numerical performance of CLAW, BH, BH-Storey, AdaDetect and AdaDetect-Storey. The data are generated from the following model  
    \begin{equation*}
        T_{i}|(\theta_{i},S_{i}=s) \sim (1-\theta_{i})\mathcal{N}(0,1)+\theta_{i}F_{1s},\quad i=1,\cdots,m,
    \end{equation*}
    where $\pi_{s}=\PP(\theta_i=1|S_i=s)$, and the calibration samples are i.i.d. $\mathcal{N}(0,1)$ variables. The following settings are considered:
    \begin{enumerate} 
        \item $m=4500$. For $i=1,\cdots,3000$, $S_i=1$, $F_{1s}=\mathcal{N}(\mu,1)$, $\pi_{s}=0.2$; For $i=3001,\cdots,4500$, $S_i=2$, $F_{1s}=\mathcal{N}(-2,0.5^2)$, $\pi_{s}=0.1$. Let $\mu$ vary.
        \item $m=3000$. $S_i=i$; $F_{1s}=F_{1}=\mathcal{N}(\mu,1)$; $\pi_{s}=0.6$ for $s\in[201,350]\cup[1501,1650]$, $\pi_{s}=0.3$ for $s\in[801,1000]\cup[2101,2300]$, and $\pi=0.02$ otherwise. Let $\mu$ vary.
        \item $m=4500$. For $i=1,\cdots,3000$, $S_i=1$, $F_{1s}=\mathcal{N}(3.6,1.5^2)$, $\pi_{s}=\pi$; For $i=3001,\cdots,4500$, $S_i=2$, $F_{1s}=\mathcal{N}(-2.5,1)$, $\pi_{s}=0.1$. Let $\pi$ vary.
    \end{enumerate}

    \begin{figure}[!htbp]
        \centering
        \includegraphics[width=0.9\linewidth,height=0.5\linewidth]{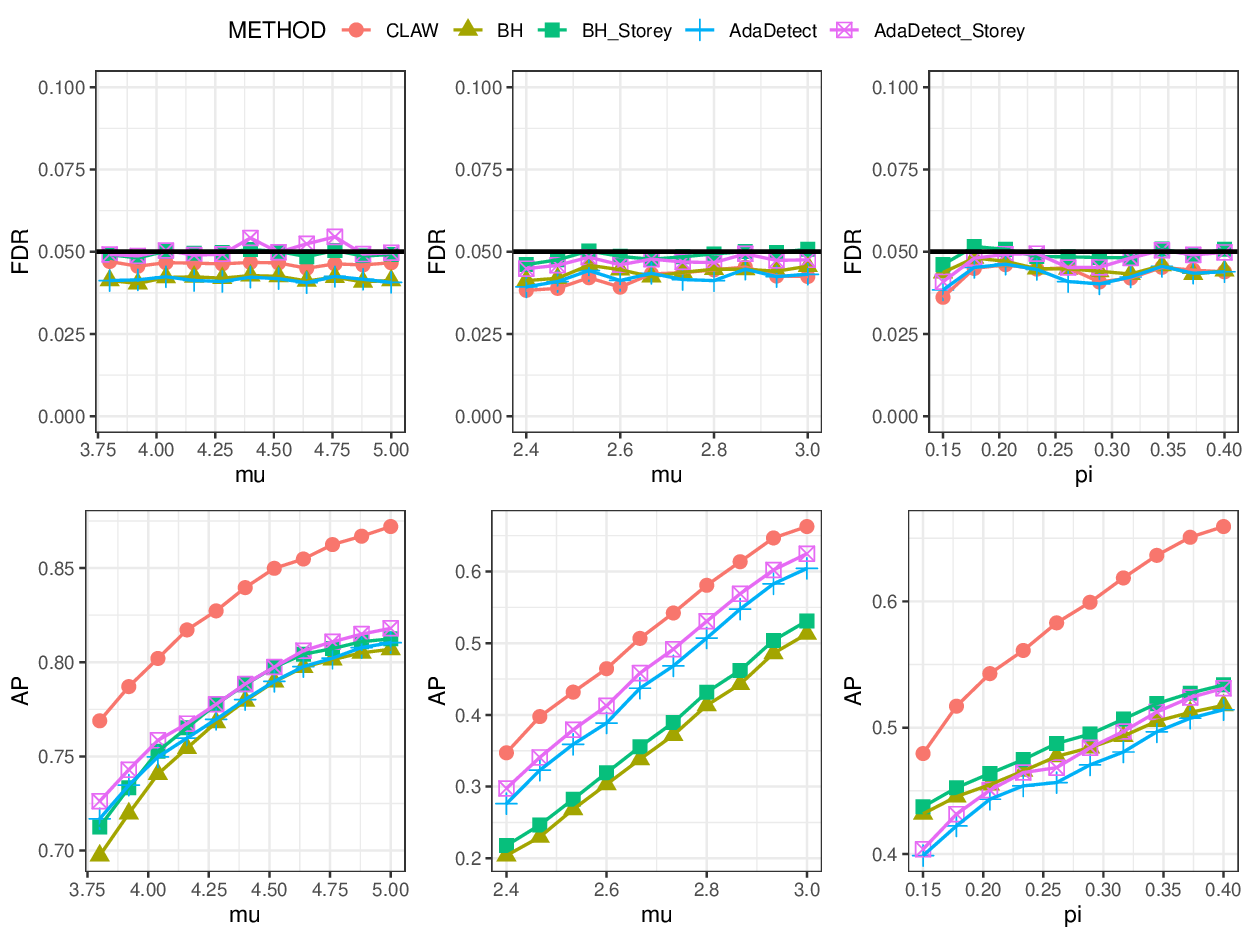}
        \caption{FDR and AP comparison for CLAW, BH, BH-Storey, AdaDetect and AdaDetect-Storey methods. The left, middle and right columns are corresponding to settings 1, 2 and 3, respectively.}\label{fig:adad-storey}
    \end{figure}
    
    The results are provided in Figure \ref{fig:adad-storey}. With Storey's correction, the BH-type methods including conventional BH and AdaDetect show some power improvements and their empirical FDR levels
    are more closer to $\alpha=0.05$ compared to CLAW. However, their power improvements are negligible because the rankings of p-values or density ratios
    (which is used to construct conformal p-values by AdaDetect) are suboptimal when side information is helpful in inference.

    In general, there are two basic step in all FDR procedures: ranking and thresholding. Although both steps can contribute to power improvement,
    constructing better ranked statistics or scores is usually more effective than simply adjusting the threshold to achieve the nominal FDR level. 
    While BH-typed methods (such as counting knockoffs \citep{weinstein17counting} and AdaDetect) can achieve the nominal FDR level via Storey's correction, the BC type methods, such as Knockoffs \citep{barber15knockoff} and 
    CLAW, can achieve it adaptively without such corrections if the signals are strong enough (see Section \ref{app:cf} for further discussion). 
    The power improvement of CLAW lies in both building more efficient scores and the adaptivity in achieving the nominal FDR level. 

    To better illustrate this point, especially the adaptivity of CLAW in achieving the nominal FDR level, we consider the following settings slightly different from those in Figure \ref{fig:adad-storey}:
    \begin{enumerate}[1']
        \item $m=4500$. For $i=1,\cdots,3000$, $S_i=1$, $F_{1s}=\mathcal{N}(\mu,1)$, $\pi_{s}=0.5$; For $i=3001,\cdots,4500$, $S_i=2$, $F_{1s}=\mathcal{N}(-2,0.5^2)$, $\pi_{s}=0.1$. Let $\mu$ vary.
        \item $m=3000$. $S_i=i$; $F_{1s}=F_{1}=\mathcal{N}(\mu,1)$; $\pi_{s}=0.9$ for $s\in[201,350]\cup[1501,1650]$, $\pi_{s}=0.6$ for $s\in[801,1000]\cup[2101,2300]$, and $\pi=0.02$ otherwise. Let $\mu$ vary.
        \item $m=4500$. For $i=1,\cdots,3000$, $S_i=1$, $F_{1s}=\mathcal{N}(3.6,1)$, $\pi_{s}=\pi$; For $i=3001,\cdots,4500$, $S_i=2$, $F_{1s}=\mathcal{N}(-2,0.5^2)$, $\pi_{s}=0.1$. Let $\pi$ vary.
    \end{enumerate}

    The simulation results are summarized in Figure \ref{fig:adaptivity}.
    In the first two columns, we can see that the FDR of BH-Storey and AdaDetect-Storey remains close to the nominal level, 
    while BH and AdaDetect exhibit a conservative behavior. 
    Our proposed method, CLAW, demonstrates conservativeness when $\mu$ is small but adaptively achieves the nominal FDR level as $\mu$ increases.

    In the third column, as the signals' proportion becomes larger, the conservativeness of BH and AdaDetect becomes increasingly prominent, 
    eventually leading to a decrease in power of AdaDetect compared to BH-Storey. Remarkably, our proposed method, CLAW, consistently outperforms other methods in all situations. 
    It showcases two important advantages: first, CLAW addresses the information loss issue in both BH and AdaDetect (and their null proportion adaptive versions) 
    by using the efficient ranking derived from the scores integrating side information; 
    second, CLAW alleviates the conservativeness of BH-based methods via adaptively achieving the nominal FDR level.

    \begin{figure}[!htbp]
        \centering
        \includegraphics[width=0.9\linewidth,height=0.5\linewidth]{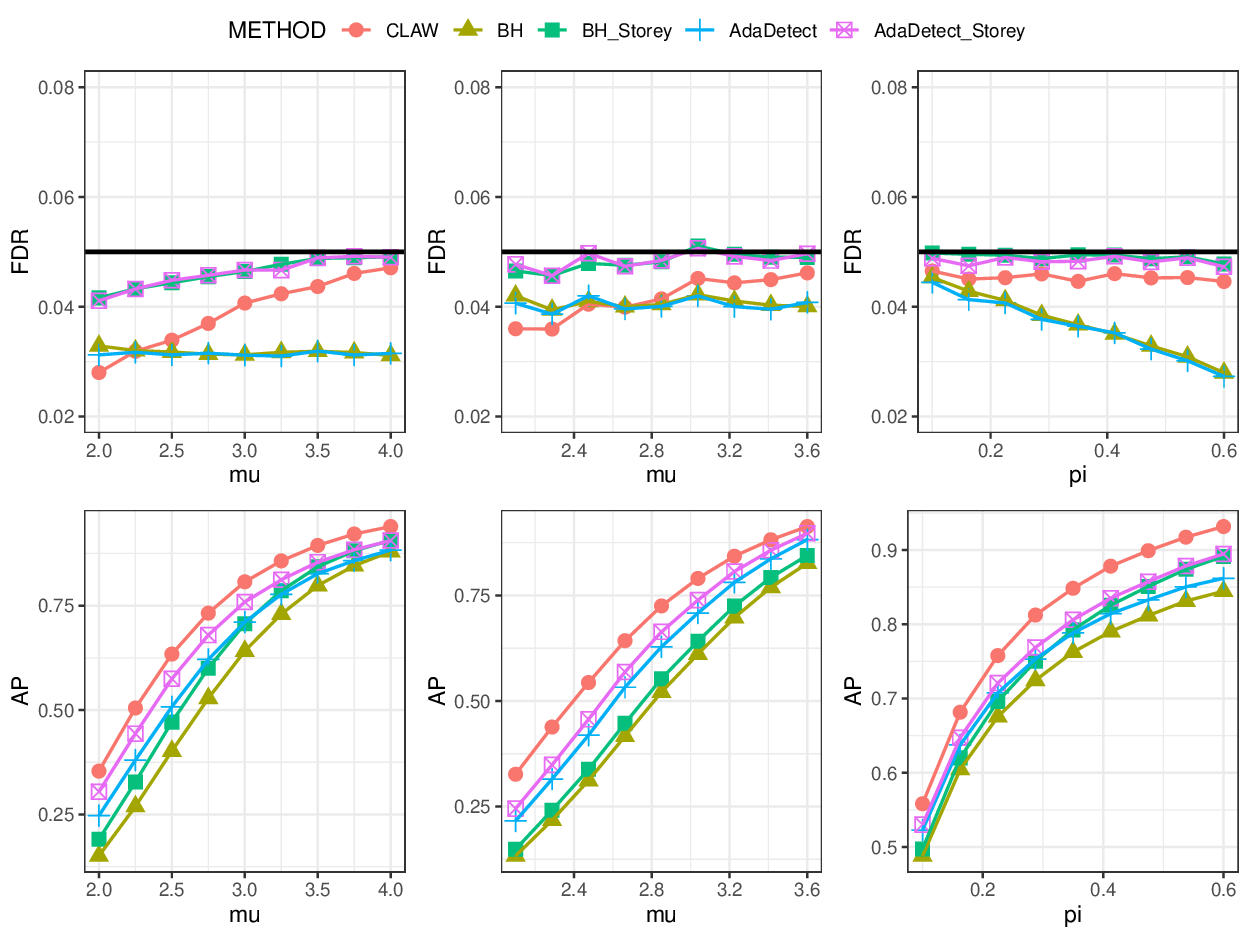}
        \caption{FDR and AP comparison or CLAW, BH, BH-Storey, AdaDetect and AdaDetect-Storey
        methods. The left, middle and right columns are corresponding to settings 1', 2' and 3', respectively.}\label{fig:adaptivity}
    \end{figure}

\subsection{CLAW and BC-based methods}\label{app:kn}

While the mirror process \eqref{confq} can be motivated from a conformal BH perspective, it can also be conceptualized as a symmetrized inference procedure closely related to the selective SeqStep+ algorithm \citep{barber15knockoff}, which we will refer to here as the Barber-Candes (BC) algorithm. 
In this section, we first prove that the FDP process is equivalent to the Selective SeqStep+ algorithm in the form of with a carefully designed anti-symmetric statistic suggested by an insightful referee, and then delve into the connections and differences between CLAW and existing BC-cased multiple testing methods, including knockoff filters for variable selection and other multiple testing procedures. 

\subsubsection{The proof that CLAW is a BC-type algorithm}

\begin{proof} To start with, define the following class of anti-symmetric statistics:
\begin{equation}\label{S_xy}
       T^S_{j}= T^S(u_{j},\tilde{u}_{j}) = \mathrm{sign}(\tilde{u}_{j}-u_{j})\cdot [g(u_j)\vee g(\tilde{u}_{j})],\quad \forall j\in[m],
    \end{equation}
where $g(\cdot):\mathbb{R}_{\geq0}\to\mathbb{R}_{\geq0}$ is a non-random strictly decreasing function. Consider the following mirror process
\begin{equation*}
    Q^{S}(t)=\frac{1+\sum_{j\in[m]} \II\{T^S_{j}\leq -t\}  }{(\sum_{j\in[m]} \II\{T^S_{j}\geq t\})\vee1}, \quad t>0.
\end{equation*}
Define $\tau'= \inf\{t\in\mathcal{T}^S:Q^{S}(t)\leq \alpha\}$, where $\mathcal{T}^S=\{|T^S_{j}|:j\in[m]\}$.
Consider a decision rule ${\pmb\delta}^\prime=\{\delta_j^\prime: j\in[m]\}$, where $\delta_j^\prime=\II\{T^S_{j}\geq \tau'\}$, then ${\pmb\delta}^\prime$ is equivalent to $\pmb{\delta}=\{\delta_j: j\in[m]\}$ output by Algorithm 1.

For a non-random strictly decreasing function $g(\cdot)$ defined on $\mathbb{R}_{\geq0}\to\mathbb{R}_{\geq0}$, the value $g(u_{i})$ can be interpreted as a non-conformity score, with a higher value indicating stronger evidence against  $H_{i}$.  As such $g(\cdot)$ is bijective, we have 
\begin{eqnarray*}
\mathcal{G}^{r} & = & \{i:u_{i}<\tilde{u}_{i}\}=\{i:g(u_{i})>g(\tilde{u}_{i})\}, \mbox{ and} \\
\mathcal{G}^{c} & = & \{i:\tilde{u}_{i}<u_i\}=\{i:g(\tilde{u}_{i})>g(u_i)\}.
\end{eqnarray*}
Consider the decision $\delta_{i}$ output by Algorithm 1. We have 
$\delta_{i}=\II\{u_{i}\leq\tau\}\II\{i\in\mathcal{G}^r\} =\II\{g(u_{i})\geq\tau'\}\II\{g(u_{i})>g(\tilde{u}_{i})\}$, where
$$
\tau'=\inf\left\{t\in\{g(u_i)\}_{i\in\mathcal{G}^{r}}\cup\{g(\tilde{u}_i)\}_{i\in\mathcal{G}^{c}}: \frac{1+\sum_{j\in[m]} \II\{g(\tilde{u}_j)\geq t\}\II\{\tilde{u}_j<u_j\} }{(\sum_{j\in[m]} \II\{g(u_j)\geq t\}\II\{u_j<\tilde{u}_j\})\vee1 } \leq \alpha\right\}.
$$
This holds because $g(\cdot)$ is strictly decreasing, and that the function 
$$
\frac{1+\sum_{j\in[m]} \II\{g(\tilde{u}_j)\geq t\}\II\{\tilde{u}_j<u_j\} }{(\sum_{j\in[m]} \II\{g(u_j)\geq t\}\II\{u_j<\tilde{u}_j\})\vee1 }
$$ 
only jumps at points within the set $\{g(u_i)\}_{i\in\mathcal{G}^{r}}\cup\{g(\tilde{u}_i)\}_{i\in\mathcal{G}^{c}}$. By the definition of $T^S_{j}$ in \eqref{S_xy}, we have that, for any $t>0$:
\begin{eqnarray*}
T^S_{j}\geq t & \Longleftrightarrow  & u_j<\tilde{u}_j  \mbox{ and }  g(u_j)\geq t, \\
T^S_{j}\leq -t & \Longleftrightarrow & \tilde{u}_j<u_j  \mbox{ and } g(\tilde{u}_j)\geq t.
\end{eqnarray*}
It follows that
$$
\frac{1+\sum_{j\in[m]} \II\{g(\tilde{u}_j)\geq t\}\II\{\tilde{u}_j<u_j\} }{(\sum_{j\in[m]} \II\{g(u_j)\geq t\}\II\{u_j<\tilde{u}_j\})\vee1 } = \frac{1+\sum_{j\in[m]} \II\{T^S_{j}\leq -t\}  }{(\sum_{j\in[m]} \II\{T^S_{j}\geq t\})\vee1} = Q^{S}(t), \quad t>0.
$$
It is easy to see that $\mathcal{T}^S=\{g(u_i)\}_{i\in\mathcal{G}^{r}}\cup\{g(\tilde{u}_i)\}_{i\in\mathcal{G}^{c}}$. Therefore we have $\tau'=\inf\{t\in\mathcal{T}^S:Q^{S}(t)\leq\alpha\}$ and
$\delta_{i}=\II\{u_i\leq\tau\}\II\{i\in\mathcal{G}^r\}=\II\{T^S_{i}\geq\tau'\}=\delta_i^\prime,$ completing the proof. 
\end{proof}

This connection has also been elucidated clearly in the proof of Lemma \ref{lemma1}. However, the CLAW procedure distinguishes itself from the knockoff filters, specifically designed for variable selection in regression problems, in several important ways. 

\subsubsection{Differences between CLAW and knockoff filters}

As we acknowledged in the main text, CLAW draws inspiration from the techniques employed in knockoff filters for variable selection problems \citep{barber15knockoff, ren23knockoff}, as well as the empirical Bayes approach utilized in AdaDetect \citep{marandon22mlfdr} to enhance the conformal Benjamini-Hochberg (BH) algorithm. However, we would like to emphasize several key differences between CLAW and knockoff filters:  

\begin{enumerate}[(a)]
        \item \emph{{The problem setups are different.}} The knockoff filter serves as a variable selection technique within regression frameworks, specifically designed to test for conditional independence. In this context, the null hypothesis is formulated as \(H_i: Y \perp X_i | \mathbf{X}_{-i}\), where \(Y\) represents the response variable, \(X_i\) is the predictor of interest, and \(\mathbf{X}_{-i}\) denotes the remaining predictors. In contrast, CLAW is focused on detecting outliers that deviate from the ``norm state'' rather than on selecting important variables. The null hypothesis in this setting is specified as \(H_i: T_i \sim F_0\), where \(T_i\) represents the test data or summary statistic, and \(F_0\) denotes a known null distribution (classical setup) or an unknown distribution derived from a given set of null samples (semi-supervised setup). Notably, the CLAW procedure does not involve a response variable \(Y\), as its primary objective is to assess deviations from the expected patterns rather than establishing a relationship with $Y$.
                
        \item \emph{{The underpinning assumptions are different.}} 
        Although the high-level concepts of pairwise exchangeability are similar in the two approaches, the fundamentally differing problem setups -- specifically, with and without response -- give rise to distinct assumptions necessary for each method. Concretely, the knockoff filters \citep{barber15knockoff,ren23knockoff} impose the exchangeability condition on all predictors:
\begin{equation}\label{exc-knock}
(X_j,\tilde{X}_j,\mathbf{X}_{-j},\tilde{\mathbf{X}}_{-j}) \overset{d}{=} (\tilde{X}_j,X_j,\mathbf{X}_{-j},\tilde{\mathbf{X}}_{-j}), \quad \forall j \in [m].
\end{equation}
The knockoff filter fails to control the FDR if the condition in \eqref{exc-knock} does not hold for any \(j_0\), including the case of \(j_0 \notin \mathcal{H}_0\), where \(\mathcal{H}_0\) denotes the index set of all null hypotheses. Therefore, constructing valid knockoff variables that satisfy this exchangeability condition is a pivotal step in methodological developments. In contrast, CLAW operates under a different notion of pairwise exchangeability \eqref{data_pwexch}:
\begin{equation*}
\left( (\mathbf{T},\tilde{\mathbf{T}})_{\mathrm{swap}(\mathcal{J})} \big| \mathbf{T}^{tr},\mathbf{S} \right) \overset{d}{=} \left( \mathbf{T},\tilde{\mathbf{T}} \big| \mathbf{T}^{tr},\mathbf{S} \right), \quad \forall \mathcal{J} \subset \mathcal{H}_{0},
\end{equation*}
The exchangeability condition is only required to hold on $\mathcal{H}_{0}$, rather than on all \(j \in [m]\). 

        \item \emph{{The algorithmic structures and operations are different.}} In \citet{ren23knockoff}, an adaptive strategy is employed that aligns with the AdaPT framework \citep{lei18adapt}, where varying p-value thresholds are established along the ordered sequence to approximate the oracle rule. This approach leverages side information to sequentially update both the thresholds and the masked data. In contrast, the core strategy of CLAW, which operates within the conformal framework, involves utilizing calibration samples, test samples, and side information to construct the most powerful conformity scores. While in \citet{ren23knockoff} the scores \(\{(Z_j, \tilde{Z}_j)\}_{j=1}^{p}\) are fixed and do not adapt to the side information -- relying solely on adaptively adjusted thresholds for oracle approximation -- CLAW employs a universal threshold across all conformity scores and integrates side information directly into the calculation of these scores to enhance the approximation to the oracle. 

        \item \emph{{The methodological focuses are different.}} The differing problem setups have led to variations in the key areas of methodological development in the two approaches. For knockoff methods, the primary methodological challenge is to construct knockoff copies that fulfill the pairwise exchangeability condition \eqref{exc-knock}. When this condition is met, the test statistics \(\{W_j\}_{j=1}^{p}\), derived from symmetric fitting algorithms, can control the FDR when utilized within the knockoff filter. In contrast, CLAW is predicated on the exchangeability condition for null samples [\eqref{data_pwexch}]. The key methodological challenge for CLAW, which is highly nontrivial, involves developing pairwise exchangeable scores that integrate information from null samples, test samples, and side information to mimic the oracle. As highlighted by Referee 1, while CLAW shares some overarching techniques -- such as mixing and empirical Bayes -- with AdaDetect \citep{marandon22mlfdr}, it brings several innovations to existing strategies to accommodate side information more effectively. 

\end{enumerate}

\subsubsection{Comparison with other BC-based multiple testing methods}  

An insightful reviewer noted that CLAW is conceptually connected with both AdaPT \citep{lei18adapt} and PLIS \citep{zhao2023plis}, as all three methods, in their simplified forms, utilize the BC algorithm in their basic operations. This significant connection has been thoroughly elucidated in Section \ref{app:kn}. Below, we provide additional discussions to clarify the key differences among the three methods.

\begin{enumerate}[(a)]
    
    \item The types of structural information utilized vary across methods. In AdaPT and CLAW, side information is encoded as a generic covariate sequence.  In contrast, PLIS represents structural information using a graphical model (e.g., hidden Markov models or Ising models), which captures the dependency structure among latent states. Different types of structural information require distinct methods and frameworks: The graphical model encapsulates our prior knowledge of the dependence structures of latent states, which cannot be adequately represented through a covariate sequence. Conversely, the structural information encoded in a covariate sequence cannot be effectively captured by a graphical model.
     
  \item  The approaches for counting false positives vary across methods. AdaPT counts the number of ``large'' p-values derived from the \emph{test data}, whereas CLAW employs a bivariate score function that counts the number of ``small'' null scores using \emph{calibration data}. Moreover, AdaPT requires that the null p-values be uniform or mirror conservative, whereas CLAW requires pairwise exchangeability between the test and calibration scores under the null.
       
       \item The strategies for incorporating side information differ across methods. AdaPT employs a flexible iterative approach, refining hypothesis rankings based on side information or user feedback. The flexibility of this framework includes: (i) it generalizes to the important interactive multiple testing scenario \citep{STAR20}; (ii) it requires only mutually independent and mirror-conservative p-values under the null hypothesis, thereby accommodating composite nulls. In contrast, CLAW employs a ``conformal'' approach, employing Clfdr-type scores. While being able to tackling multivariate test data and exhibiting higher power, CLAW can only handle the task of testing single/sharp null hypotheses. Finally, PLIS creates baseline data to preserve dependency structure, making it particularly well-suited to scenarios where a pre-specified model class, such as a hidden Markov model, is known \emph{a priori}.
\end{enumerate}

\subsection{Further discussions on the optimality theory}
\label{app:opt_or}

Proposition \ref{prop:opt} aims to establish the optimality of \( R(t,S_i) \) defined in \eqref{or_Rt}. This conclusion holds when (a) the true data-generating process is the covariate-adaptive model \eqref{model:mixture}, (b) the labeled null samples are independently drawn from \( f_0 \), and (c) the class of decision rules is restricted to the candidate rejection set \( \mathcal{A} = \{i: u_i < \tilde{u}_i\} \). While the constraint on \( \mathcal{A} \), which can be conceptualized as a screening mechanism, may initially seem stringent, there exists a large class of meaningful conformity score functions \( g(t,S_i) \) for which the two subsets $\{i:R(T_i,S_i)<R(\tilde{T}_i,S_i)\}$ and $\{i:g(T_i,S_i)<g(\tilde{T}_i,S_i)\}$ are identical. Below are some important examples. 

    \begin{enumerate}
        \item 
Consider the problem of testing grouped hypotheses discussed in Section \ref{app:subsub-pu-group}. We examine two score functions: the first is the oracle score function employed by CLAW, and the second is the density ratio function \( r(t, k) \), utilized within specific groups with $S_i=k$. The function \( r(t, k) \) has been commonly adopted in the machine learning literature. These two score functions differ by a factor of \( {\pi}_{k} \). For a particular group, let us assume that we employ \( r(t, k) \) as a screening index, which excludes candidate hypotheses when \( r(T_i, k) \geq r(\tilde{T}_i, k) \). Elementary calculations reveal that 
$$R(t, k) =  \frac{(1-{\pi}_{k}) r(t, k)}{1 - (1-{\pi}_{k}) r(t, k)},
$$ 
and \( \mathrm{sign}(R(T_i, k) - R(\tilde{T}_i, k)) = \mathrm{sign}(r(T_i, k) - r(\tilde{T}_i, k)) \). This leads to the conclusion that, for \( S_i = k \), \( \mathcal{A} = \{i : R(T_i, k) < R(\tilde{T}_i, k)\} = \{i : r(T_i, k) < r(\tilde{T}_i, k)\} \). Thus, our utilization of \( \mathcal{A} = \{i : R(T_i, S_i) < R(\tilde{T}_i, S_i)\} \) aligns with intuitive screening rules. 
             
        \item Next, we illustrate that a broad class of meaningful conformity score functions will produce the same rejection set \( \mathcal{A} \), which is determined solely by the relative significance levels of the test data and calibration data for each test unit. We provide two illustrative examples. First, in a conventional multiple testing framework, it is typically observed that a larger absolute value (or norm) of the statistic \( T_i \) provides stronger evidence against the null hypothesis. Given that \( \tilde{T}_i \) follows the null distribution, it is reasonable to employ score calculation mechanisms that satisfy \( g(a, S_i) \leq g(b, S_i) \) whenever \( |a| > |b| \). Specifically, the sign of \( g(T_i, S_i) - g(\tilde{T}_i, S_i) \) should be dictated by the sign of \( |T_i| - |\tilde{T}_i| \), which ensures that \( \{i : R(T_i, S_i) < R(\tilde{T}_i, S_i)\} = \{i : |T_i| > |\tilde{T}_i|\} \). 
For the second example, suppose we convert \( T_i \) into a p-value, which corresponds to significance levels. In this case, we similarly arrive at \( \{i : R(T_i, S_i) < R(\tilde{T}_i, S_i)\} = \{i : p_i < \tilde{p}_i\} \). Basically, for the same unit with identical covariates \( S_i \), the relative significance between \( T_i \) and \( \tilde{T}_i \) can be inferred directly from the observations themselves, irrespective of the specific score calculation mechanism applied. 
        
\end{enumerate}

The following corollary follows from Proposition \ref{prop:opt}, providing an optimality theory for BC-type algorithms. 
 
    \begin{corollary}
Consider the scores \((u_i, \tilde{u}_i)_{i=1}^m\) calculated by the oracle score functions in Proposition \ref{prop:opt}. Let \(W_i = \mathrm{sign}(\tilde{u}_i - u_i)(u_i \wedge \tilde{u}_i)^{-1}\) and \(W_i' = \mathrm{sign}(\tilde{u}_i - u_i)L(u_i, \tilde{u}_i)\) for any non-negative symmetric function \(L\). For the BC-type decision rules \(\mathcal{R} = \{W_i \geq t\}\) and \(\mathcal{R}' = \{W_i' \geq t'\}\), the candidate rejection set is given by \(\mathcal{A} = \{i: W_i > 0\} = \{i: W_i' > 0\}\).
 If $\mathrm{mFDR}(\mathcal{R}) = \alpha$ and 
    $\mathrm{mFDR}(\mathcal{R}')\leq\alpha$, then $\EE(|\mathcal{R}\cap\mathcal{H}_{0}^{c}|)\geq \EE(|\mathcal{R}'\cap\mathcal{H}_{0}^{c}|)$.
    \end{corollary}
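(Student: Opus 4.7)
The plan is to obtain this corollary as a direct consequence of Proposition \ref{prop:opt} by translating the two BC-style decision rules into the oracle thresholding form used there. The mathematical content is really already contained in Proposition \ref{prop:opt}; what remains is a sign-and-reparametrization argument.

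First, I would verify the sign observations that pin down the candidate set. Since $L$ is non-negative and symmetric, $\mathrm{sign}(W_i')=\mathrm{sign}(\tilde u_i - u_i)$ whenever $L(u_i,\tilde u_i)>0$; and the specific choice $L(x,y)=(x\wedge y)^{-1}$ that defines $W_i$ is strictly positive on $\{u_i<\tilde u_i\}$ and strictly negative on $\{u_i>\tilde u_i\}$. Consequently, for any positive thresholds $t,t'$, both $\mathcal{R}=\{W_i\geq t\}$ and $\mathcal{R}'=\{W_i'\geq t'\}$ are contained in $\mathcal{A}=\{i:u_i<\tilde u_i\}$, placing them within the restricted class of rules addressed by Proposition \ref{prop:opt}.

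Second, I would recast $\mathcal{R}$ in the canonical thresholding form used in Proposition \ref{prop:opt}. On $\mathcal{A}$ one has $u_i\wedge\tilde u_i=u_i$ and hence $W_i=1/u_i$, giving
\begin{equation*}
\mathcal{R}=\{i\in\mathcal{A}: W_i\geq t\}=\{i\in\mathcal{A}: u_i\leq t^{*}\}, \quad t^{*}:=1/t.
\end{equation*}
The assumption $\mathrm{mFDR}(\mathcal{R})=\alpha$ therefore supplies exactly the oracle threshold from Proposition \ref{prop:opt}, identifying $\mathcal{R}$ with the optimal rejection set $\mathcal{R}_u$. Meanwhile $\mathcal{R}'\subset\mathcal{A}$ with $\mathrm{mFDR}(\mathcal{R}')\leq\alpha$ plays the role of the competing rule in that proposition. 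Invoking Proposition \ref{prop:opt} directly yields $\EE(|\mathcal{R}\cap\mathcal{H}_0^c|)\geq\EE(|\mathcal{R}'\cap\mathcal{H}_0^c|)$, which is the desired inequality.

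The main (mild) obstacle I expect is purely notational bookkeeping: confirming that the bijection $t\leftrightarrow t^{*}=1/t$ is well-defined (automatic on $t>0$, which is the only regime in which $\mathcal{R}$ is non-trivial), and handling the measure-zero events $\{L(u_i,\tilde u_i)=0\}$ and $\{u_i=\tilde u_i\}$, both of which are negligible under the continuity of the oracle model \eqref{model:mixture}. Beyond that, no new probabilistic argument is required.
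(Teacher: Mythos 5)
Your proof is correct and fills in exactly the reduction the paper leaves implicit (the paper states only that "the following corollary follows from Proposition \ref{prop:opt}" without a proof). The key observations you use are the right ones: a positive threshold $t$ forces $\mathcal{R}\subset\{W_i>0\}=\mathcal{A}$, and on $\mathcal{A}$ one has $u_i\wedge\tilde u_i=u_i$ so $W_i\geq t\iff u_i\leq 1/t$, putting $\mathcal{R}$ in the exact form of $\mathcal{R}_u$ with $t^*=1/t$; similarly $\mathcal{R}'\subset\mathcal{A}$ with $\mathrm{mFDR}(\mathcal{R}')\leq\alpha$ is a competing rule in the sense of Proposition \ref{prop:opt}, and the conclusion follows directly. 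Your remark about the edge cases is appropriate: $\{u_i=\tilde u_i\}$ is null under the continuous oracle model, and the identification $\{W_i'>0\}=\{u_i<\tilde u_i\}$ is in effect part of the corollary's hypothesis (it requires $L>0$ a.s.\ on that event), which the statement asserts directly via $\mathcal{A}=\{i:W_i>0\}=\{i:W_i'>0\}$. No gap.
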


\section{Supplementary numerical results}
\label{app:simu}

This section presents additional numerical results to complement the simulation studies discussed in the main text. We include simulation results in more complex settings, such as those involving multivariate auxiliary data (Section \ref{subsec:spatial-2D}), equally correlated data (Section \ref{app:simu-exch}), and non-exchangeable data (Section \ref{app:simu-pwexch}). Furthermore, we compare CLAW with its NEB counterpart, the Clfdr procedure, in high-dimensional settings (Section \ref{appsimu:group}).
To illustrate the augmentation strategy of CLAW for continuous covariates presented in Section \ref{subsub-pu-aug}, Section \ref{appsimu:srandom} provides simulation results under setups where \(\mathbf{S}\) are continuous random variables. Additionally, Section \ref{app:yeast} includes auxiliary tables and figures for real data applications.

\subsection{Multiple testing with multivariate covariates}
\label{subsec:spatial-2D}

This section extends the simulation in Section \ref{subsec:spatial} to situations where the covariate $S_i$ corresponds to the location in two-dimensional spatial regions. 

The data are generated according to the following location-adaptive mixture model on a $m=100\times100$ lattice, where the covariate $S_{i}\in[100]\times[100]\subset\mathbb{R}^{2}$ denotes the location of $T_{i}:$  
\begin{equation*}
    T_i|(\theta_{i},S_{i}=\pmb{s}) \overset{ind.}{\sim} (1-\theta_{i})\mathcal{N}(0,1)+\theta_{i}F_{1}, \quad i\in[m].
 \end{equation*}
Let $\pmb{s}=(x,y)$. We consider the following spatial patterns:
\begin{enumerate}[I.]
    \item $F_{1}=\mathcal{N}(\mu,1)$; $\pi_{s}=0.75$ for $\{\pmb{s}: 10\leq(x-30)^{2}+(y-70)^{2}\leq20 \}\cup \{\pmb{s}:62\leq x\leq90 \text{ and } 10\leq y\leq38 \}$, and $\pi_{s}=0.02$ otherwise. 
    \item $F_{1}=\mathcal{N}(2.8,1)$; $\pi_{s}=\pi$ for $\{\pmb{s}: 10\leq(x-30)^{2}+(y-70)^{2}\leq20 \}\cup \{\pmb{s}:62\leq x\leq90 \text{ and } 10\leq y\leq38 \}$, and $\pi_{s}=0.02$ otherwise. 
    \item $F_{1}=\mathcal{N}(2.5,1)$; $\pi_{s}=0.75$ for $\{\pmb{s}: R/2\leq(x-30)^{2}+(y-70)^{2}\leq R \}\cup \{\pmb{s}:62\leq x\leq90 \text{ and } 10\leq y\leq38 \}$, and $\pi_{s}=0.02$ otherwise.
\end{enumerate}

The covariate-adaptive weights are chosen as $w_{ij}=\phi(||S_{i}-S_{j}||_{2}/15)$, where $||\cdot||_{2}$ denotes the Euclidean norm in $\mathbb{R}^{2}$. The calibration data are generated as $\Tilde{T}_{i}\overset{i.i.d.}{\sim}\mathcal{N}(0,1)$ for $i\in[m]$. We apply AdaDetect, BH, CLAW, LAWS and SABHA to the simulated data and summarize the simulation results in Figure \ref{fig:spatial_2D}. 

\begin{figure}[!htbp]
    \centering
    \includegraphics[width=0.9\linewidth,height=0.5\linewidth]{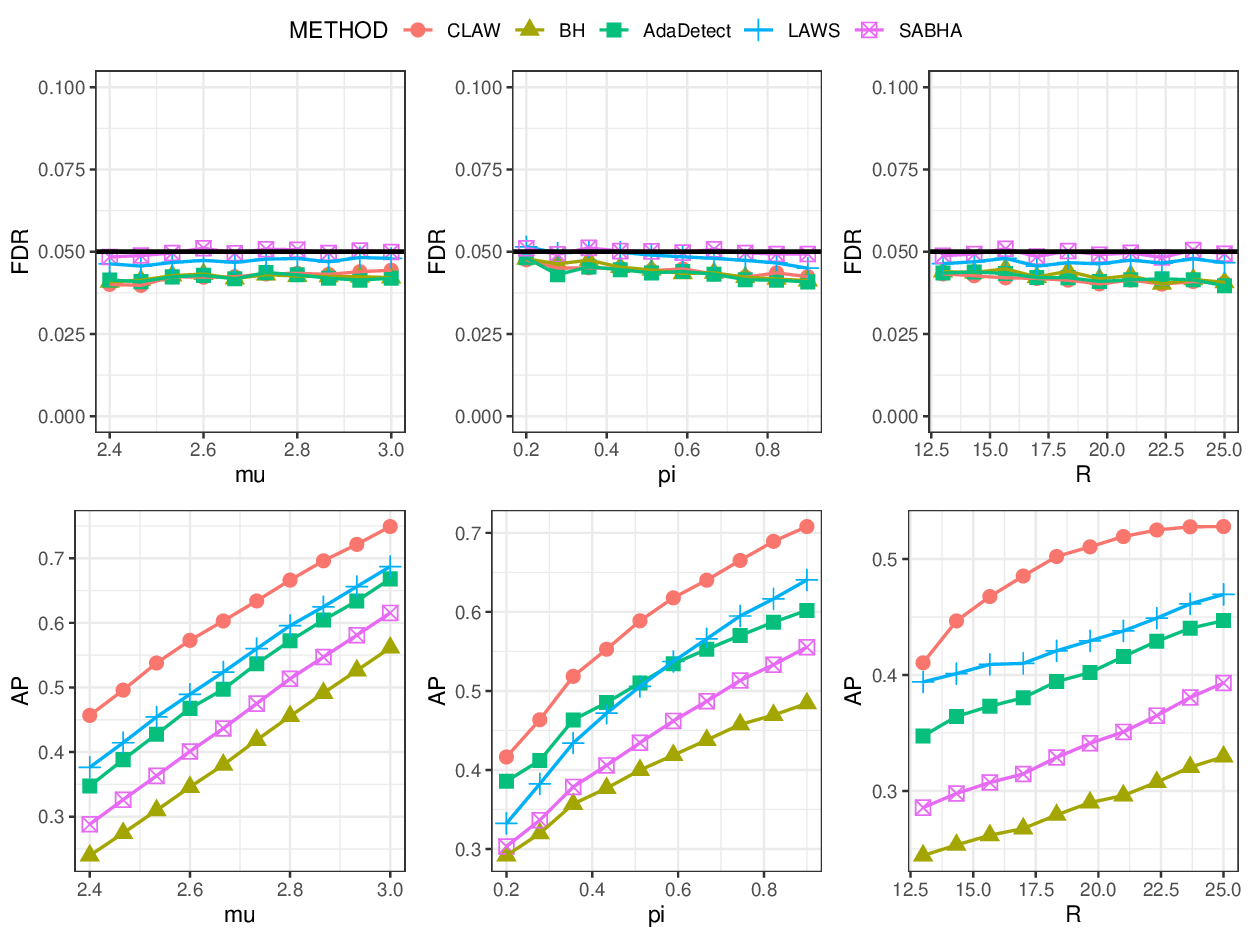}
    \caption{FDR and AP comparison for multiple testing for two-dimensional covariates at $\alpha=0.05$. The left, middle and right columns are corresponding to settings I, II and III, respectively.}\label{fig:spatial_2D}
\end{figure}

We can see that all methods effectively control the FDR, with CLAW exhibiting slight conservativeness. The power of BH and AdaDetect can be significantly enhanced by structure-adaptive methods such as LAWS and SABHA. CLAW further improves the power of both LAWS and SABHA by employing efficient scores that emulate the oracle rule. 


Finally we visualize a toy example to gain further insights. The test data $\mathbf{T}$ are generated on a $100\times100$ lattice: 
$T_{i}\overset{i.i.d.}{\sim}\mathcal{N}(\mu,1)$ if its location $S_{i}\in\{\pmb{s}=(x,y): 10\leq(x-30)^{2}+(y-70)^{2}\leq20 \}\cup \{\pmb{s}:62\leq x\leq90 \text{ and } 10\leq y\leq38 \}$ and $T_{i}\overset{i.i.d.}{\sim}\mathcal{N}(0,1)$ otherwise. 
In this setup, all signals are clustered either within the ring or the square area (the first column in Figure \ref{fig:spatial_visualize}). 

\begin{figure}[!htbp]
    \centering
    \includegraphics[width=0.9\linewidth]{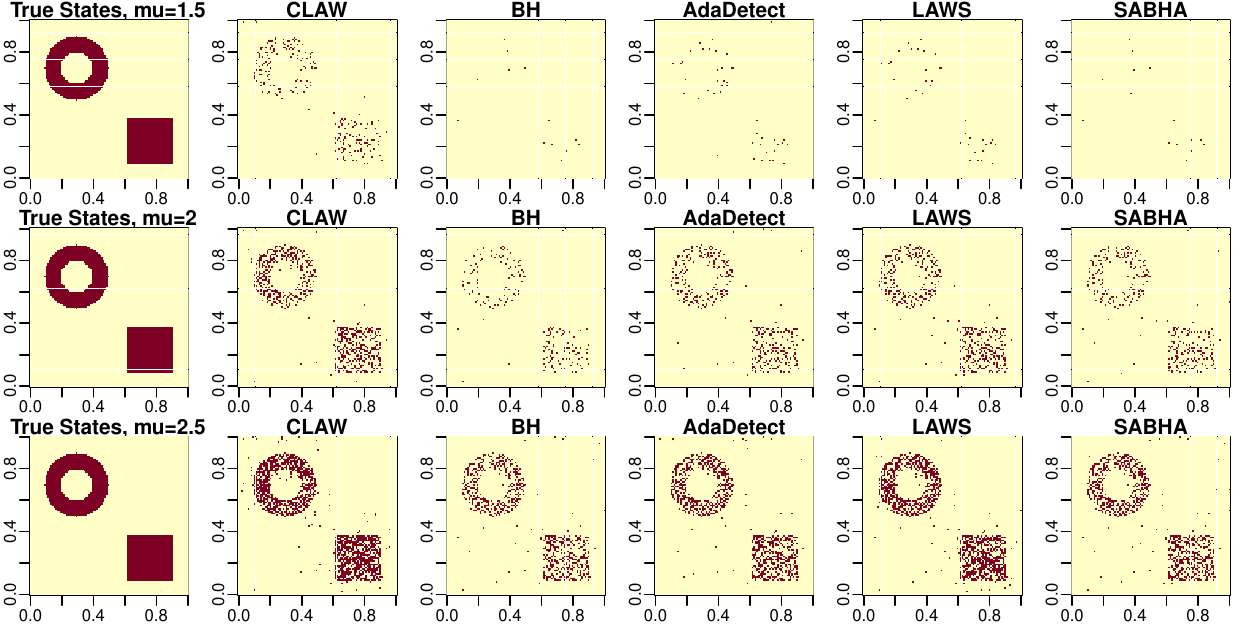}
    \caption{An example of signal recovering with different signal strength $\mu$. The claret dots are the discoveries by each multiple testing procedure at the nominal FDR level $\alpha=0.05$. The first, second and third row presents the results when $\mu=1.5$, $\mu=2$ and $\mu=2.5$, respectively.}\label{fig:spatial_visualize}
\end{figure}

We apply various methods at the nominal FDR level $\alpha=0.05$ and visualize the results in Figure \ref{fig:spatial_visualize}, illustrating the discovered locations by each method (columns 2-5) at different signal strengths $\mu=1.5$, $\mu=2$, and $\mu=2.5$ (rows 1-3). Notably, CLAW stands out as the most effective method for revealing the ring and square shapes. This is accomplished by CLAW's ability to adaptively exploit the structures in the test data and auxiliary data, ultimately constructing the most effective scores.

\subsection{Numerical results for exchangeable data}
\label{app:simu-exch}

This section presents simulation results to compare different methods under similar settings as described in Section \ref{subsec:spatial}, with the difference being that the data are not mutually independent. The test data are generated according to the following model:
\begin{equation*}
    T_{i}|(\theta_{i},S_{i}=s) \sim (1-\theta_{i})\mathcal{N}(0,1)+\theta_{i}F_{1s},\quad i=1,\cdots,3000,
\end{equation*}
where $\PP(\theta_{i}=1|S_{i}=s)=\pi_{s}$. The calibration data are generated as $\tilde{T}_{i}\sim\mathcal{N}(0,1)$. Additionally, the null data $(T_{i}:i\in\mathcal{H}_{0})\cup(\Tilde{T}_{i}:i\in[m])$ are generated from a multivariate Gaussian distribution, independent of the non-null test data $(T_{i}:i\notin\mathcal{H}_{0})$. The expected value of $(T_{i}:i\in\mathcal{H}_{0})\cup(\Tilde{T}_{i}:i\in[m])$ is zero, and the covariance matrix $\Sigma=(\sigma_{ij})$ is defined such that $\sigma_{ii}=1$ and $\sigma_{ij}=\rho\in[0,1)$ for $i\neq j$. This equi-correlated structure  within $(T_{i}:i\in\mathcal{H}_{0})\cup(\Tilde{T}_{i}:i\in[m])$ implies that the null data points are jointly exchangeable, thereby satisfying the conditional exchangeable assumption \eqref{jointexch-covariate}. 

The following settings are considered:
\begin{enumerate}[I.]
    \item $\rho=0.5$; $F_{1s}\equiv F_{1}=\mathcal{N}(\mu,1)$; $\pi_{s}=0.6$ for $s\in[201,350]\cup[1501,1650]$, $\pi_{s}=0.3$ for $s\in[801,1000]\cup[2101,2300]$, and $\pi_{s}=0.02$ otherwise. 
    \item $\rho=0.5$; $F_{1s}=\mathcal{N}(-2.5,1)$ if $s\in[1,1500]$, $F_{1s}=\mathcal{N}(3.6,1.5^2)$ if $s\in[1501,3000]$; $\pi_{s}=2\pi$ for $s\in[201,350]\cup[1501,1650]$, $\pi_{s}=\pi$ for $s\in[801,1000]\cup[2101,2300]$, and $\pi_{s}=0.02$ otherwise. 
    \item $\rho=0.5$; $F_{1s}=\mathcal{N}(\mu+0.15\sin(0.6s),1)$; $\pi_{s}=0.4(1+\sin(0.02s))$ for $s\in[201,500]\cup[801,1100]\cup[1501,1800]\cup[2101,2400]$ and $\pi_{s}=0.02$ otherwise. 
    
    \item $F_{1s}\equiv F_{1}=\mathcal{N}(3,1)$; $\pi_{s}=0.6$ for $s\in[201,350]\cup[1501,1650]$, $\pi_{s}=0.3$ for $s\in[801,1000]\cup[2101,2300]$, and $\pi_{s}=0.02$ otherwise. 
    \item $F_{1s}=\mathcal{N}(-2.5,1)$ if $s\in[1,1500]$, $F_{1s}=\mathcal{N}(3.6,1.5^2)$ if $s\in[1501,3000]$; $\pi_{s}=0.6$ for $s\in[201,350]\cup[1501,1650]$, $\pi_{s}=0.3$ for $s\in[801,1000]\cup[2101,2300]$, and $\pi_{s}=0.02$ otherwise. 
    \item $F_{1s}=\mathcal{N}(3+0.15\sin(0.6s),1)$; $\pi_{s}=0.4(1+\sin(0.02s))$ for $s\in[201,500]\cup[801,1100]\cup[1501,1800]\cup[2101,2400]$ and $\pi_{s}=0.02$ otherwise. 
\end{enumerate}

In Settings I-III, we fix $\rho=0.5$ as a constant, while in Settings IV-VI, we vary the correlation to explore its impacts on various methods. We apply AdaDetect, AdaPT, BH, CLAW, LAWS and SABHA to the simulated data and summarize the simulation results in Figure \ref{fig:spatial_exch}. 

\begin{figure}[!htbp]
    \centering
    \includegraphics[width=0.9\linewidth,height=0.5\linewidth]{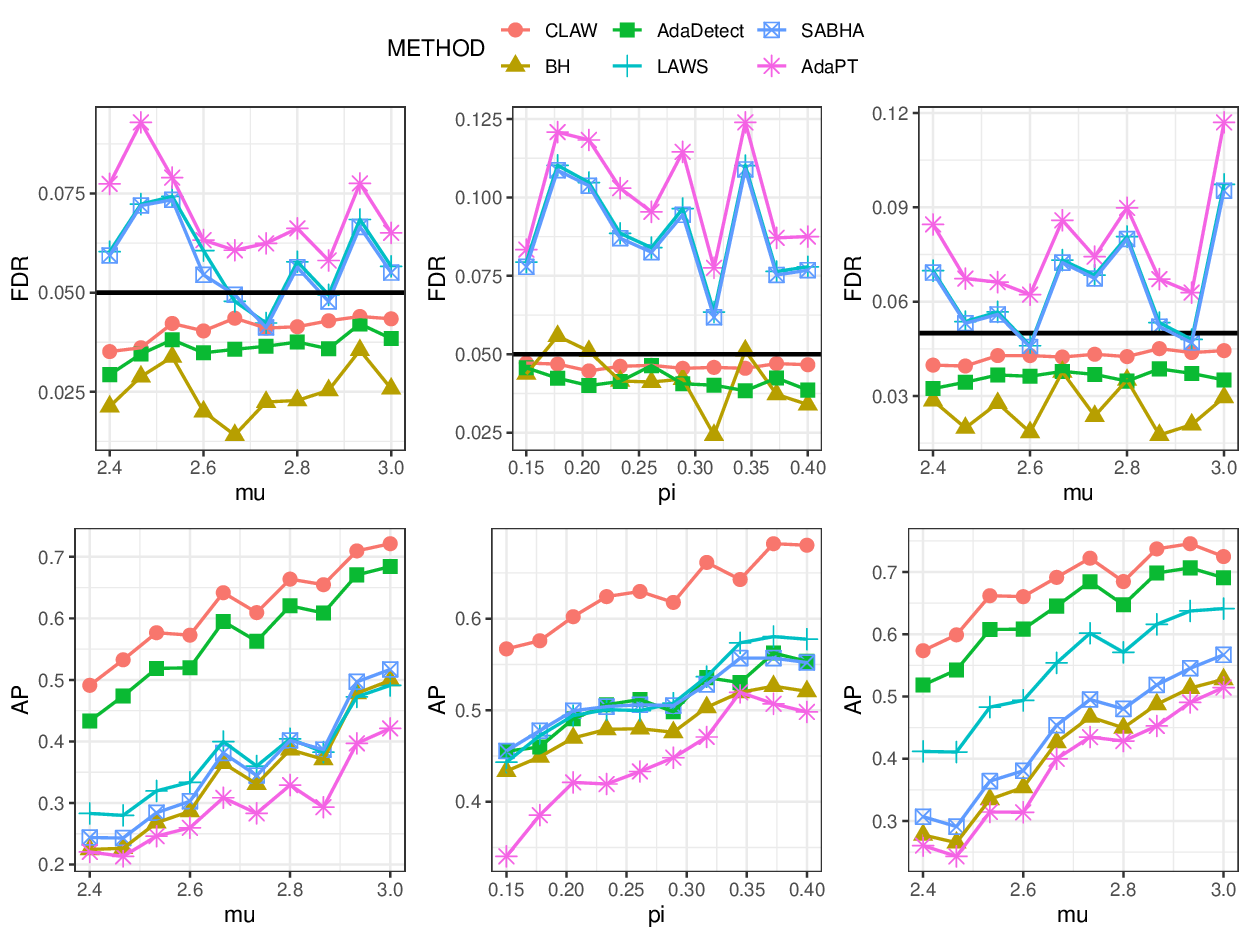}
    \includegraphics[width=0.9\linewidth,height=0.5\linewidth]{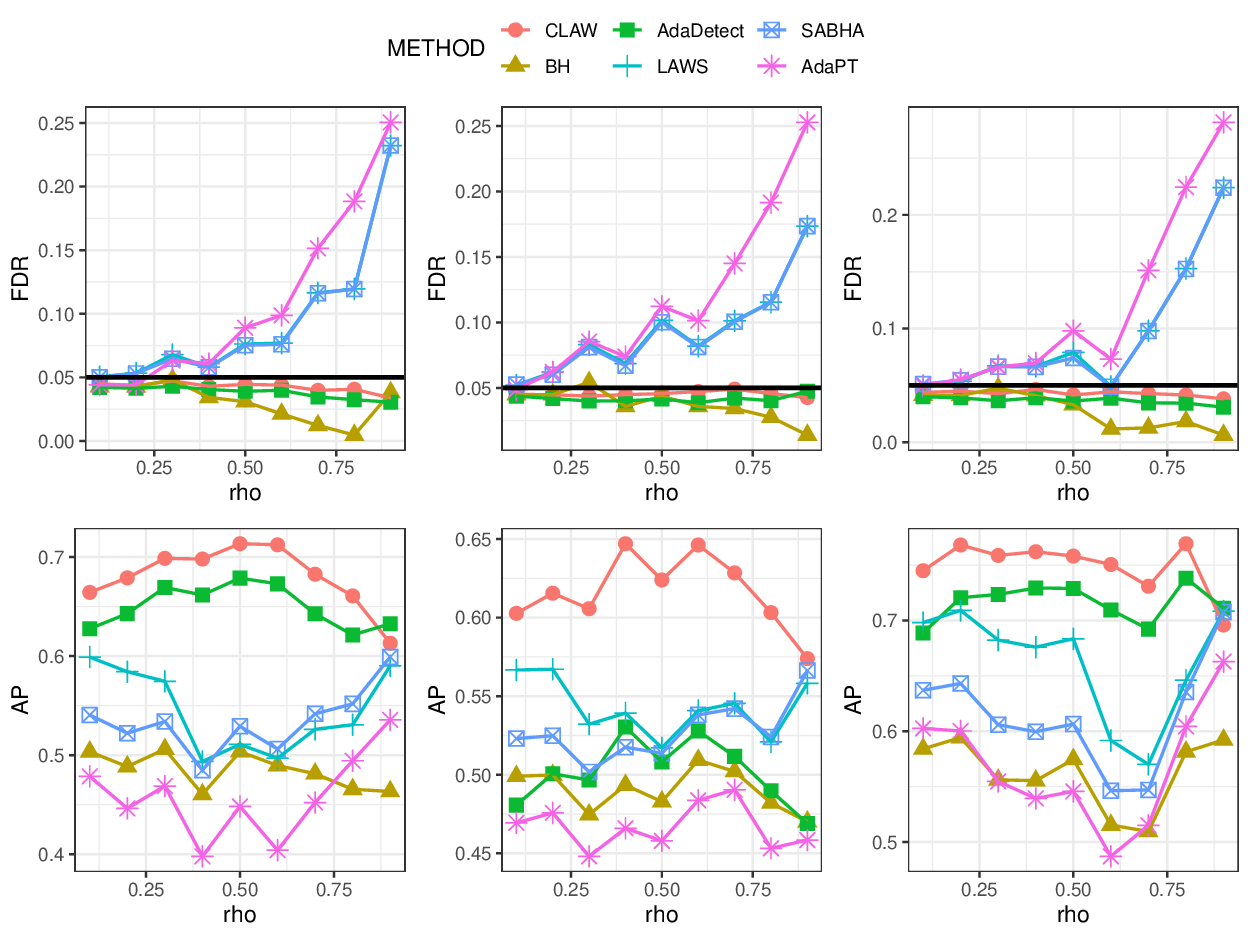}
    \caption{FDR and AP comparison for multiple testing for ordered sequences at $\alpha=0.05$ with jointly exchangeable null data. For the top two rows, the left, middle and right columns are corresponding to settings I, II and III, respectively. For the bottom two rows, the left, middle and right columns are corresponding to settings IV, V and VI, respectively.}\label{fig:spatial_exch}
\end{figure}

We observe that CLAW, BH, and AdaDetect effectively control the FDR in all settings where the null samples exhibit exchangeability. Conversely, LAWS, SABHA, and AdaPT fail to maintain FDR control at the nominal level in the presence of dependency. As the degree of dependency increases in Settings IV-VI, the inflation in FDR levels for LAWS, SABHA, and AdaPT becomes more pronounced. In contrast, CLAW maintains FDR control across all settings and outperforms other methods in terms of power.

\subsection{Numerical results for non-exchangeable data}
\label{app:simu-pwexch}

This section presents numerical studies aimed at evaluating the performance of various conformal methods for non-exchangeable data. Our investigation is structured into two parts. The first subsection focuses on experiments involving data that do not satisfy the joint exchangeability condition but meet the pairwise exchangeability condition \eqref{data_pwexch}. In this context, existing conformal methods, such as AdaDetect \citep{marandon22mlfdr}, lack theoretical guarantees for false discovery rate (FDR) control; however, CLAW remains provably valid for FDR control. The second part examines a scenario in which pairwise exchangeability is also violated. Under these circumstances, all methods fail to control the FDR, indicating that the condition \eqref{data_pwexch} appears to be indispensable within the CLAW framework.

\subsubsection{Numerical results for non-exchangeable but pairwise exchangeable data}

To generate pairwise exchangeable data samples, we begin by simulating data from a stationary AR(1) process $(y_{i}:i\in[3000])$. Each $y_i$ follows a marginal distribution of $\mathcal{N}(0,1)$, and the auto-regression coefficients are defined as $\mathrm{cor}(y_{i},y_{j})=\rho^{|i-j|}$, where $\rho\in(-1,1)$.

The test and calibration data $\mathbf{T}$ and $\Tilde{\mathbf{T}}$ are generated according to the following model: 
\begin{equation*}
T_{i}|(\theta_{i}=0,S_{i}=s) = y_{i}+\epsilon_{i}, \quad T_{i}|(\theta_{i}=1,S_{i}=s) \sim F_{1s}, \quad \Tilde{T}_{i} = y_{i}+\epsilon_{i+3000},
\end{equation*}
where $\PP(\theta_{i}=1|S_{i}=s)=\pi_{s}$, and $\{\epsilon_{i}:i\in[6000]\}$ are i.i.d. $\mathcal{N}(0,0.01)$ noises, and $S_i$ indicates the sequential order of each observation. Furthermore, the non-null data $(T_{i}:i\notin\mathcal{H}_{0})$ are drawn from $F_{1s}$ conditional on $S_{i}$ and are independent of the null samples $(T_{i}:i\in\mathcal{H}_{0})\cup(\Tilde{T}_{i}:i\in[m])$. This ensures that the pairwise exchangeability between null data samples \eqref{data_pwexch} is satisfied (see also the justifications in Example 4 of Section \ref{app:pwexch}).

We consider the six settings in our simulation studies. In Settings I-III, we fix $\rho=0.5$ as a constant, while in Settings IV-VI, we vary the correlation to explore its impacts on various methods. 

\begin{enumerate}[I.]
    \item $\rho=0.5$; $F_{1s}\equiv F_{1}=\mathcal{N}(\mu,1)$; $\pi_{s}=0.6$ for $s\in[201,350]\cup[1501,1650]$, $\pi_{s}=0.3$ for $s\in[801,1000]\cup[2101,2300]$, and $\pi_{s}=0.02$ otherwise. 
    \item $\rho=0.5$; $F_{1s}=\mathcal{N}(-2.5,1)$ if $s\in[1,1500]$, $F_{1s}=\mathcal{N}(3.6,1.5^2)$ if $s\in[1501,3000]$; $\pi_{s}=2\pi$ for $s\in[201,350]\cup[1501,1650]$, $\pi_{s}=\pi$ for $s\in[801,1000]\cup[2101,2300]$, and $\pi_{s}=0.02$ otherwise. 
    \item $\rho=0.5$; $F_{1s}=\mathcal{N}(\mu+0.15\sin(0.6s),1)$; $\pi_{s}=0.4(1+\sin(0.02s))$ for $s\in[201,500]\cup[801,1100]\cup[1501,1800]\cup[2101,2400]$ and $\pi_{s}=0.02$ otherwise. 
    
    \item $F_{1s}\equiv F_{1}=\mathcal{N}(3,1)$; $\pi_{s}=0.6$ for $s\in[201,350]\cup[1501,1650]$, $\pi_{s}=0.3$ for $s\in[801,1000]\cup[2101,2300]$, and $\pi_{s}=0.02$ otherwise.
    \item $F_{1s}=\mathcal{N}(-2.5,1)$ if $s\in[1,1500]$, $F_{1s}=\mathcal{N}(3.6,1.5^2)$ if $s\in[1501,3000]$; $\pi_{s}=0.6$ for $s\in[201,350]\cup[1501,1650]$, $\pi_{s}=0.3$ for $s\in[801,1000]\cup[2101,2300]$, and $\pi_{s}=0.02$ otherwise. 
    \item $F_{1s}=\mathcal{N}(3+0.15\sin(0.6s),1)$; $\pi_{s}=0.4(1+\sin(0.02s))$ for $s\in[201,500]\cup[801,1100]\cup[1501,1800]\cup[2101,2400]$ and $\pi_{s}=0.02$ otherwise. 
\end{enumerate}

We apply AdaDetect, AdaPT, BH, CLAW, LAWS and SABHA to the simulated data and summarize the simulation results in Figure \ref{fig:spatial_non-exch}. The following observations can be made. First, AdaDetect, BH, and CLAW effectively control the FDR, despite the lack of rigorous theoretical guarantees for BH and AdaDetect. Second, LAWS, SABHA, and AdaPT fail to control the FDR in certain scenarios. However, the FDR inflation observed is smaller compared to the scenarios discussed in Section \ref{app:simu-exch}. Third, the relatively weak dependency in the AR(1) process, characterized by the exponential decrease in correlation coefficient, plays a significant role. This explains why AdaDetect and BH appear to control the FDR. 
Finally, CLAW demonstrates the highest power in most cases. However, it may exhibit reduced power under very strong correlations. This observation suggests that the conformity scores generated by CLAW may not be as effective under intricate dependence structures.

\begin{figure}[!htbp]
    \centering
    \includegraphics[width=0.9\linewidth,height=0.5\linewidth]{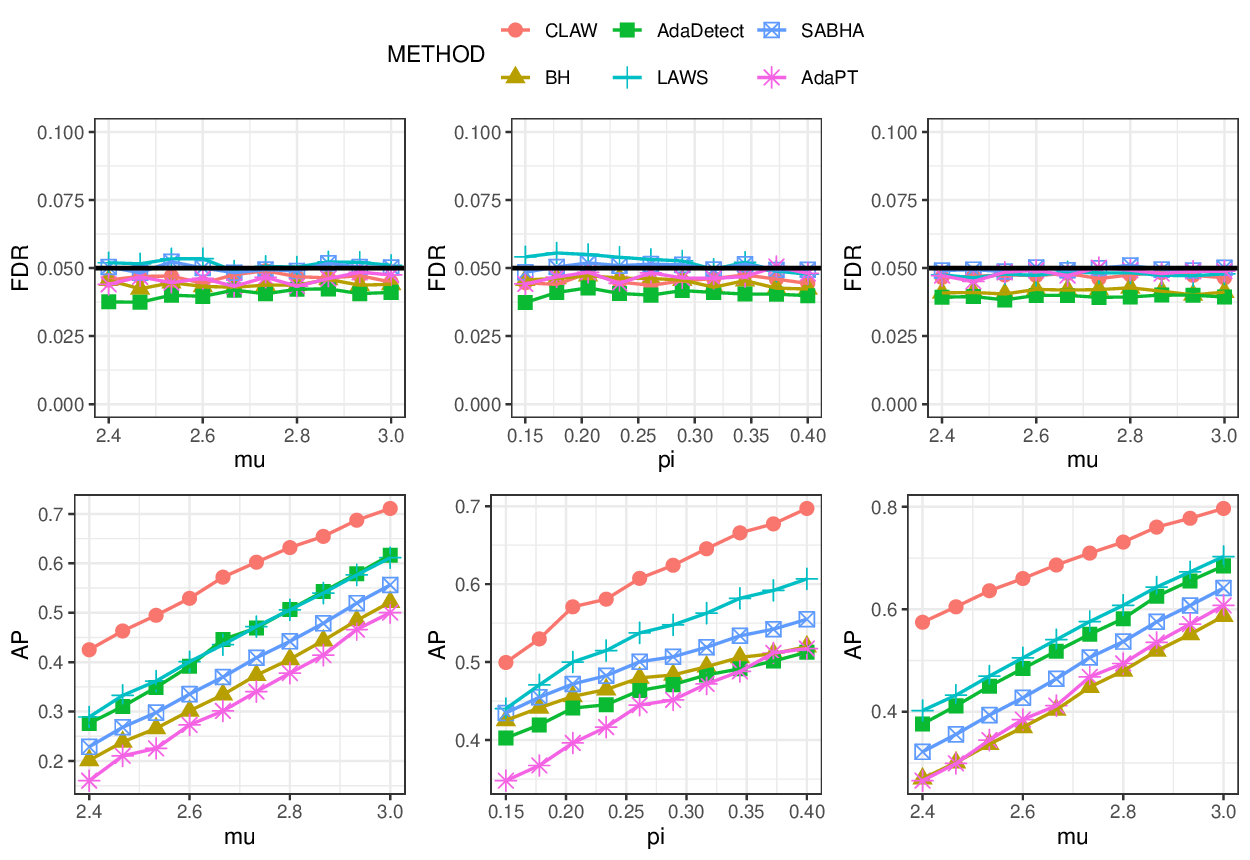}
    \includegraphics[width=0.9\linewidth,height=0.5\linewidth]{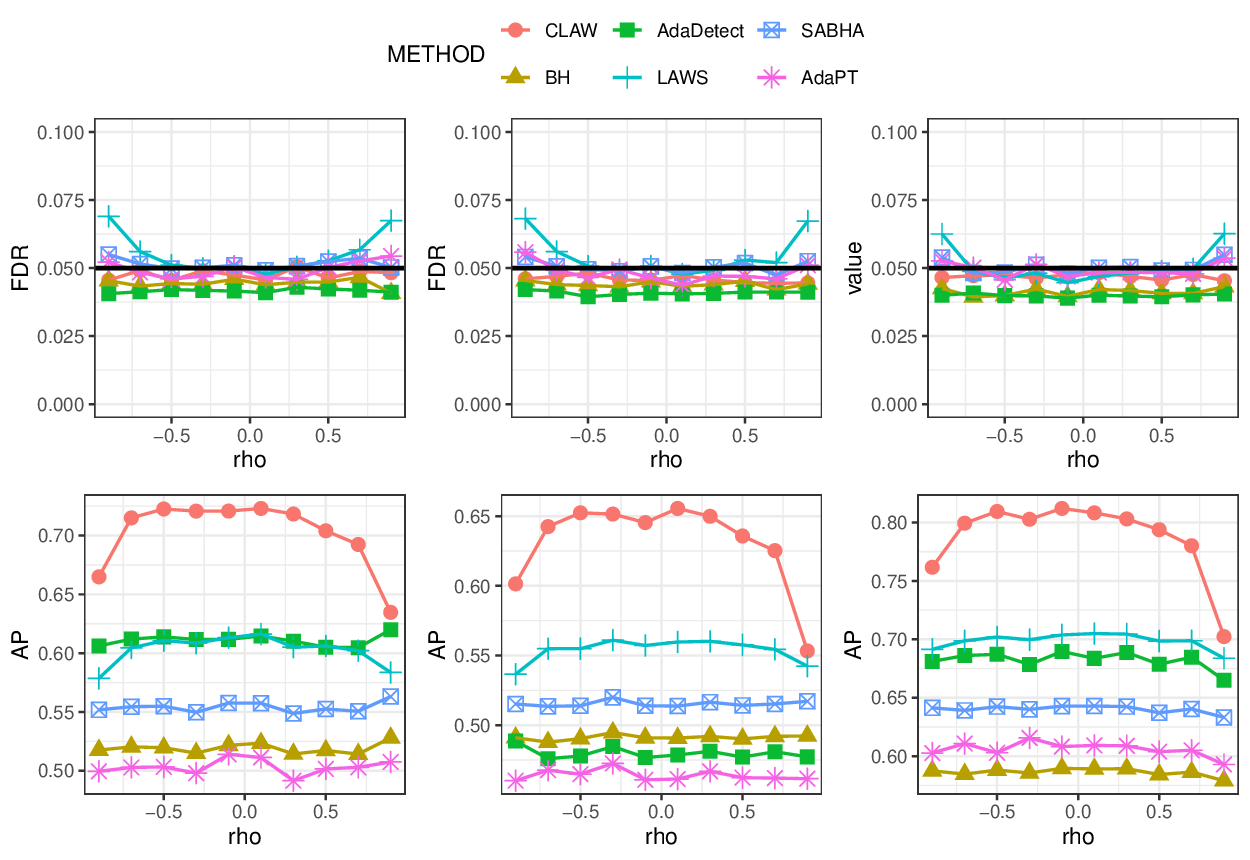}
    \caption{FDR and AP comparison for multiple testing for ordered sequences at $\alpha=0.05$ with pairwise exchangeable null data. For the top two rows, the left, middle and right columns are corresponding to settings I, II and III, respectively. For the bottom two rows, the left, middle and right columns are corresponding to settings IV, V and VI, respectively.}\label{fig:spatial_non-exch}
\end{figure}

\subsubsection{Numerical results for data without (pairwise) exchangeability}

We employ the strategies outlined in the previous section to generate null test data from an AR(1) process. However, the calibration data \(\tilde{\mathbf{T}}\) is generated as i.i.d. \(\mathcal{N}(0,1)\) variables. In this scenario, neither the joint exchangeability assumption \eqref{jointexch-covariate} nor the pairwise exchangeability assumption \eqref{data_pwexch} is satisfied. The simulation settings are identical with Settings IV-VI presented in the previous section. We analyze the performance of various methods across different values of \(\rho\), with the simulation results illustrated in Figure \ref{fig:total_non-exch}. 

\begin{figure}[!htbp]
    \centering
    \includegraphics[width=0.9\linewidth,height=0.5\linewidth]{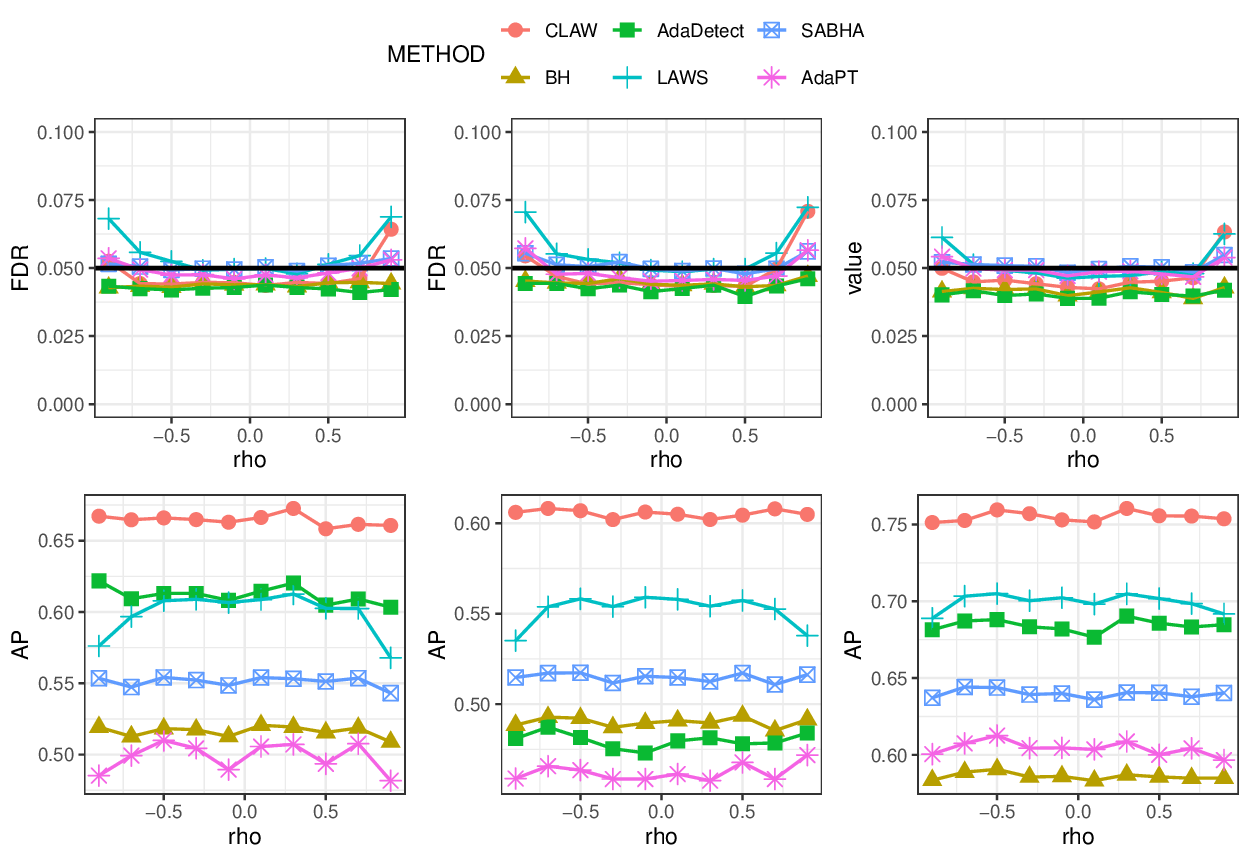}
    \caption{FDR and AP comparison for multiple testing for ordered sequences at $\alpha=0.05$ with AR(1) test data and i.i.d. calibration data. The left, middle and right columns are corresponding to settings IV, V and VI in Section E.3.1, respectively.}\label{fig:total_non-exch}
\end{figure}

Our analysis reveals that when the correlations are small to moderate, all methods effectively control the FDR at the nominal level. However, several methods, including LAWS, SABHA, AdaPT, and CLAW, fail to maintain FDR control at the nominal level when \(|\rho|\) is large. Notably, CLAW demonstrates the highest power across all scenarios.

The inflation of FDR levels observed for CLAW becomes particularly pronounced under conditions of strong correlation. This phenomenon arises from the increased discrepancy between the joint distributions of the test and calibration samples. Specifically, while large correlations exist within \(\mathbf{T}\), the calibration samples \(\tilde{\mathbf{T}}\) consist of independent and identically distributed (i.i.d.) samples, leading to significant violations of the exchangeability condition and resulting in FDR inflation.

Our findings concerning dependence are preliminary and limited. Addressing the complex issue of developing valid and efficient FDR methods under dependency extends beyond the scope of this work. We view this as a promising direction for future research.

\subsection{Comparison with the oracle Clfdr method}
\label{appsimu:group}

The oracle CLfdr procedure, proposed by \citet{cs09}, is optimal in the setting where the covariate-adaptive mixture model is known. However, the validity of the data-driven Clfdr procedure relies on consistent estimates of the CLfdr statistics. Moreover, the data-driven Clfdr procedure only offers asymptotic FDR control. This subsection provides numerical evidence to illustrate the challenge of achieving consistent estimation in high-dimensional settings, where the data-driven CLfdr procedure may encounter severely inflated FDR levels. In contrast, CLAW demonstrates efficacy and robustness in controlling the FDR at the nominal level across all settings we have investigated.

Our simulation considers multiple testing with grouped hypotheses, where the data are generated according to the following model:
\begin{equation*}
    T_{i}|(\theta_{i},S_{i}=k) \overset{ind.}{\sim} (1-\theta_{i})\mathcal{N}_{d}(0,\mathbf{I}_{d})+\theta_{i}\mathcal{N}_{d}(\pmb{\mu}_{k},\mathbf{I}_{d}),\quad i\in[m],\quad k\in\{1,2\}.
\end{equation*}
Here, $\mathcal{N}_{d}(0,\mathbf{I}_{d})$ represents $d$-dimensional standard normal random vectors. 
In the first group ($S_{i}=1$), the number of tests is $m_{1}=1000$. We set $\pi_{1}=0.2$, and $$\pmb{\mu}_{1}=(\sqrt{2\log{d}},\sqrt{2\log{d}},\sqrt{2\log{d}},\sqrt{2\log{d}},0,\cdots,0)^{\top}\in\mathbb{R}^{d},$$ with the exception that $\pmb{\mu}_{1}=(\sqrt{2\log{d}},\sqrt{2\log{d}})$ when $d=2$. For the second group ($S_{i}=2$), the number of tests is $m_{2}=2500$. We set $\pi_{2}=0.1$ 
and $\pmb{\mu}_{2}=(2,2,\cdots,2)^{\top}\in\mathbb{R}^{d}$. 

Estimating the non-null proportion poses a challenge in the high-dimensional setting. To focus on the key message, we implement both CLAW and Clfdr by assuming known values for $\pi_{k}$. Another possibility is to fix $\pi_k\equiv 0$, as done in \cite{marandon22mlfdr}. The simulation results are depicted in Figure \ref{fig:group2_multi}.

\begin{figure}[!htbp]
    \centering
    \includegraphics[width=0.9\linewidth]{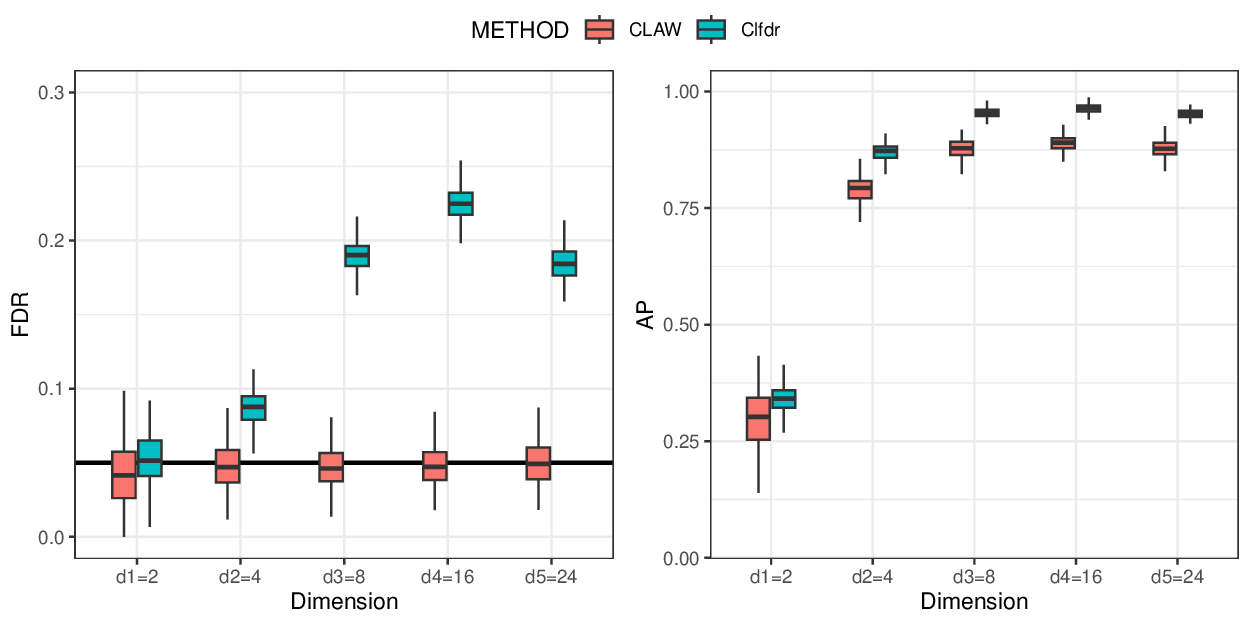}
    \caption{FDR and AP comparison for grouped multiple testing for multivariate test data.}\label{fig:group2_multi}
\end{figure}

As the dimension $d$ increases, the kernel density estimator suffers from the curse of dimensionality. Consequently, we can see that the Clfdr method, which relies on consistently estimated Clfdr statistics, fails to effectively control the FDR at the designated level. In contrast, CLAW, despite using inaccurately estimated scores, still effectively controls the FDR. Additionally, Figure \ref{fig:group2_multi} illustrates that the heights of the boxes representing CLAW are greater than those for Clfdr. This disparity arises from the randomized nature of CLAW, which incorporates both test data and calibration data into its operation.

\subsection{Numerical results for continuous random covariates}
\label{appsimu:srandom}

This section presents simulation results under setups where $\mathbf{S}$ are continuous random variables. We first consider the situation where the covariates are directly observable, then turn to constructing $\mathbf{S}$ from the raw observations. The CLAW procedure is implemented using the augmentation strategy described in Section \ref{subsub-pu-aug} throughout this section.

\textbf{Simulation Study 1: } Given covariates $\mathbf{S}$, the test data are generated conditional on $\mathbf S$ according to the following model: 
\begin{equation*}
    T_{i}|(\theta_{i},S_{i}=s) \overset{ind.}{\sim} (1-\theta_{i})\mathcal{N}(0,1)+\theta_{i}F_{1s},\quad i=1,\cdots,3000,
\end{equation*}
where $\pi_{s}=\PP(\theta_i=1|S_i=s)$. The calibration samples are i.i.d. $\mathcal{N}(0,1)$ variables.  The following settings are considered:
\begin{enumerate} 
    \item $S_i\overset{i.i.d.}{\sim}\mathrm{Beta}(2,5)$; $F_{1s}=\mathcal{N}(\mu,(1.5s)^2)$; $\pi_{s}=s$. Let $\mu$ vary.
    \item $S_i\overset{i.i.d.}{\sim}\mathrm{Laplace}(3)$; $F_{1s}=\mathcal{N}(2.8+0.3\mathrm{sign}(s),|s|)$; $\pi_{s}=\min\{1,\pi |s|\}$. Let $\pi$ vary.
    \item $S_i\overset{i.i.d.}{\sim}\mathrm{Laplace}(\nu)$; $F_{1s}=\mathcal{N}(2.8+0.3\mathrm{sign}(s),|s|)$; $\pi_{s}=\min\{1, 0.6|s|\}$. Let $\nu$ vary.
\end{enumerate}
We apply CLAW, BH, AdaDetect, LAWS, SABHA and AdaPT at FDR level $\alpha=0.05$ to the simulated data and summarize the simulation results in Figure \ref{fig:srandom}.  

\begin{figure}[!htbp]
    \centering
    \includegraphics[width=0.9\linewidth,height=0.5\linewidth]{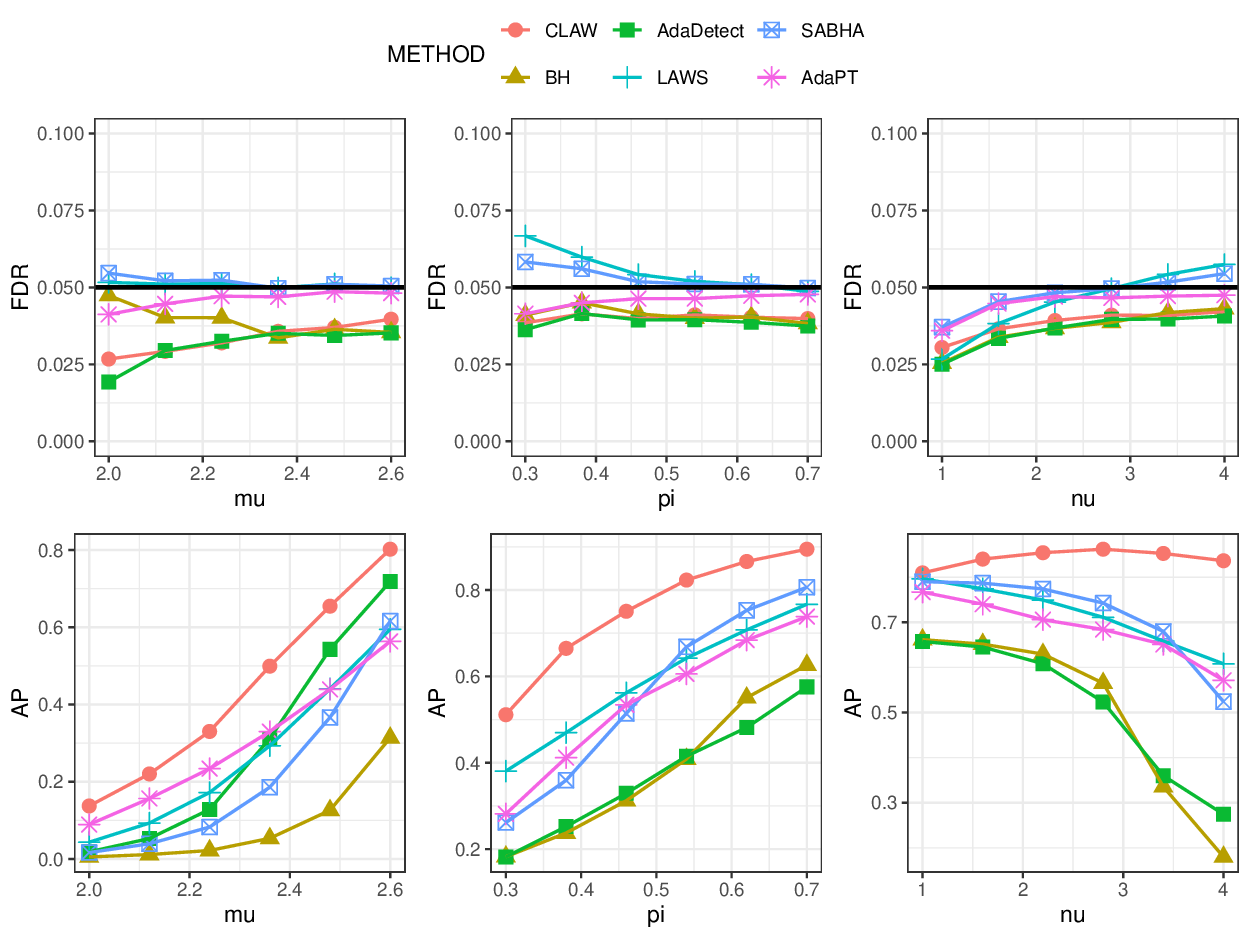}
    \caption{FDR and AP comparison when covariates are observable continuous random variables. 
    The left, middle and right columns are corresponding to settings 1, 2 and 3, respectively.}\label{fig:srandom}
\end{figure}

The following patterns can be noted from  Figure \ref{fig:srandom}. 
\begin{itemize}
    \item The FDR levels of CLAW, BH, AdaDetect, AdaPT, which are provably valid for FDR control in finite samples, strictly stay below the nominal level. These methods are relatively conservative in some scenarios. By contrast, SABHA and LAWS have mild inflations in FDR levels, although the violations seem to be small. This is consistent with the theory as both methods only offer asymptotic control of the FDR if the estimation is accurate. 
    \item  CLAW is  the most effective method in most cases, as it integrates all pertinent side information within the covariate-adaptive model. This includes variance, signal magnitude, and local sparsity levels, all of which contribute to the efficiency gain of conformity scores utilized by CLAW, which closely emulate the rankings produced by Clfdr.
    
      \item The first column of Figure \ref{fig:srandom} demonstrates that the power of AdaDetect increases as the strength of the signals increases (indicated by rising values of \(\mu\)). However, when \(\mu\) is correlated with the covariates, the power of AdaDetect can be lower than that of the BH procedure. This phenomenon is also observed in the results for nonrandom covariates that suggest sequential ordering. The diminished power in these cases can be attributed to the presence of heterogeneous signals -- particularly when both positive and negative signals coexist within the sequence (as illustrated in Figure \ref{fig:spatial_1D}). In such scenarios, the mixing strategy employed by AdaDetect may offset the increased signal strength, ultimately resulting in significantly reduced power.
      
  \end{itemize}

\textbf{Simulation Study 2: } The data generation process of Example 1 in Section \ref{app:pwexch} is considered. Specifically, we let $m=3000$, $n_x=n_y=1$, and the following settings are considered:
\begin{enumerate}
    \item $\theta_i\overset{i.i.d.}{\sim} \mathrm{Bernoulli}(0.1)$, 
    $X_i|\theta_i \overset{ind.}{\sim} (1-\theta_i)\mathcal{N}(0,1)+\theta_i \mathcal{N}(1,1)$, 
    $Y_i|\theta_i \overset{ind.}{\sim} (1-\theta_i)\mathcal{N}(0,0.7^2)+\theta_i \mathcal{N}(-\mu,0.7^2)$ for $i=1,\cdots,3000$.
    \item $\theta_i\overset{i.i.d.}{\sim} \mathrm{Bernoulli}(\pi)$, 
    $X_i|\theta_i \overset{ind.}{\sim} (1-\theta_i)\mathcal{N}(0,1)+\theta_i \mathcal{N}(1,1)$, 
    $Y_i|\theta_i \overset{ind.}{\sim} (1-\theta_i)\mathcal{N}(0,0.7^2)+\theta_i \mathcal{N}(-2.2,0.7^2)$ for $i=1,\cdots,3000$.
    \item  $X_i\overset{i.i.d.}{\sim} \mathcal{N}(3,1)$ for $i\in[801,1001+N]$, and $X_i\overset{i.i.d.}{\sim} \mathcal{N}(0,1)$ otherwise;
    $Y_i\overset{i.i.d.}{\sim} \mathcal{N}(-0.5,0.7^2)$ for $i\in[1001,2000]$, and $Y_i\overset{i.i.d.}{\sim} \mathcal{N}(0,0.7^2)$ otherwise; $\theta_i:=\II\{\EE[X_i]\neq\EE[Y_i]\}$.
\end{enumerate}

To test $H_i:\EE[X_i]=\EE[Y_i]$, i.e., $H_i:\theta_i=0$, we construct the test statistics $\mathbf{T}$ and covariates $\mathbf{S}$ as illustrated in \citet{CARS} and Example 1 in Section \ref{app:pwexch},
\begin{equation*}
    (T_{i},S_{i})= \sqrt{\frac{1}{2}} \left( \frac{{X}_{i}-{Y}_{i}}{\sigma_{pi}} , \frac{{X}_{i}+\kappa_{i}{Y}_{i}}{\sqrt{\kappa_{i}}\sigma_{pi}}  \right), \quad i\in[m],
\end{equation*}
where $\sigma_{pi}^{2}=(1^2+0.7^2)/2$ and $\kappa_{i}=1/0.7^2$. To apply conformal methods, the null calibration data are generated as i.i.d. $\mathcal{N}(0,1)$ variables. The simulation results are displayed in Figure \ref{fig:twosample}, from which we can draw similar conclusions in the experiments with observable continuous covariates (Figure \ref{fig:srandom}).

While all methods effectively control the FDR at the nominal level, those that successfully integrate side information related to the vector support exhibit improved efficiency. In most cases, CLAW demonstrates the highest power. This is attributed to CLAW being a conformalized version of the CARS procedure, which is optimal in this context. For further details, please refer to Example 1 in Section \ref{app:pwexch} of the Supplement.

\begin{figure}[!htbp]
    \centering
    \includegraphics[width=0.9\linewidth,height=0.5\linewidth]{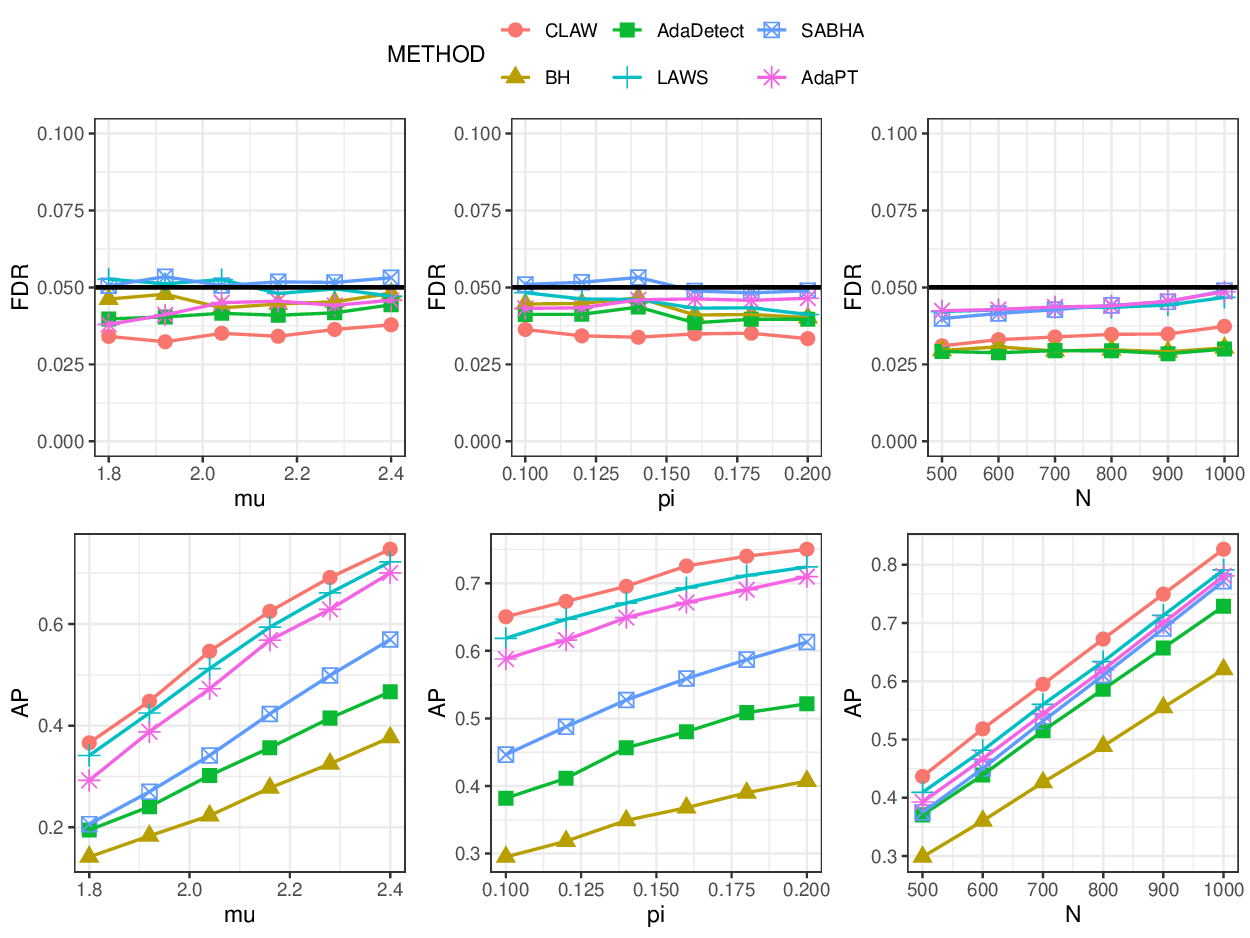}
    \caption{FDR and AP comparison for large-scale two-sample comparisons. 
    The left, middle and right columns are corresponding to settings 1, 2 and 3, respectively.}\label{fig:twosample}
\end{figure}

\subsection{Supplementary Tables and Figures in Real Data Applications}
\label{app:yeast}

This section presents supplementary results pertaining to the real data examples discussed in Section \ref{sec:application}. Specifically, Table \ref{table1} summarizes the number of discoveries made by various methods applied to the MNIST dataset (Section 6.1), while Figure \ref{fig:yeast} illustrates the number of discoveries resulting from different testing procedures applied to the yeast proteins dataset (Section 6.2).

\subsubsection{Supplementary information for MNIST data analysis}

\begin{table}[H]
\caption{The number of discoveries (true discoveries) by different methods applied to two experimental settings on the MNIST dataset. The nominal FDR level is $\alpha=0.05$.}
\centering
\begin{tabular}{lcccccc}
\hline
               & \multicolumn{3}{c}{Setting 1}    & \multicolumn{3}{c}{Setting 2}     \\ \hline
GROUP          & 1         & 2         & ALL       & 1         & 2         & ALL       \\ \hline
PooledAD(KD)   & 0         & 0         & 0         & 0         & 0         & 0         \\
SeparateAD(KD) & 0         & 0         & 0         & 0         & 0         & 0         \\
CLAW(KD)       & 0         & 0         & 0         & 0         & 0         & 0         \\
PooledAD(RF)   & 103 (96)  & 465 (455) & 568 (551) & 108 (104) & 312 (308) & 420 (412) \\
SeparateAD(RF) & 82 (79)   & 462 (453) & 544 (532) & 108 (105) & 360 (356) & 468 (461) \\
CLAW(RF)       & 107 (100) & 476 (465) & 583 (565) & 114 (109) & 368 (363) & 482 (472) \\ \hline
\end{tabular}
\label{table1}
\end{table}

\subsubsection{Supplementary information for protein data analysis}

\begin{figure}[!htbp]
    \centering
    \includegraphics[width=0.9\linewidth]{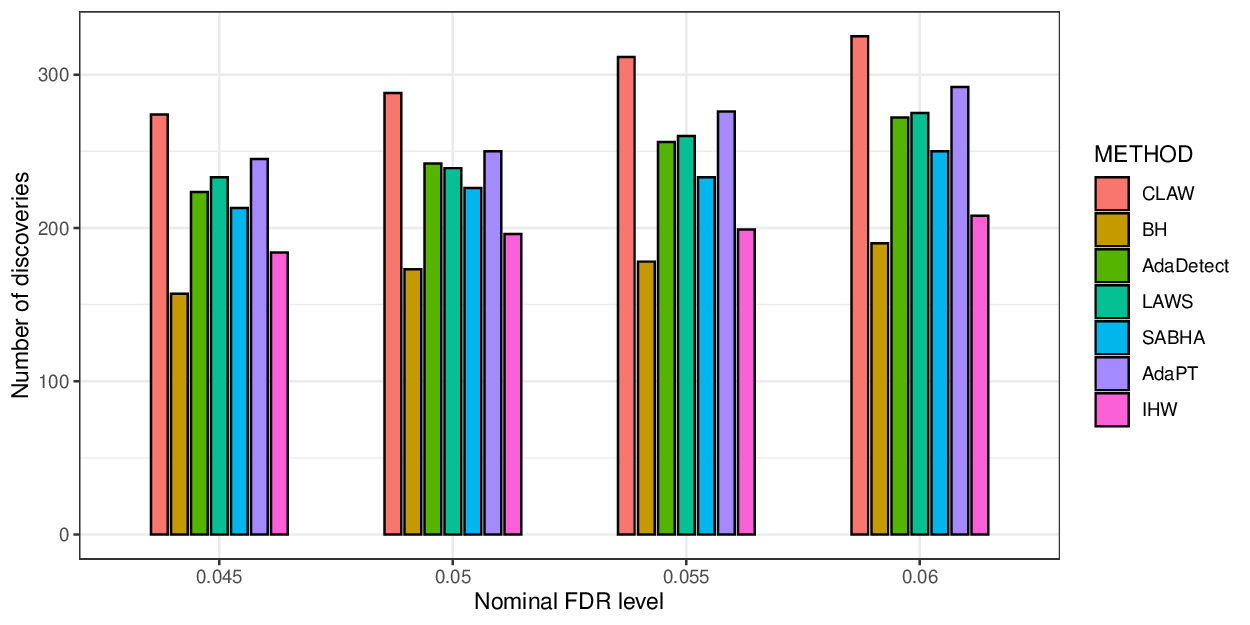}
    \caption{The number of discoveries by different testing procedures at FDR $\alpha=0.045,0.05,0.055,0.06$ for the yeast proteins data.}\label{fig:yeast}
\end{figure}

\end{document}